\def\rD{{\rm d}}
\newtheorem{theorem}{Theorem}
\newtheorem{corollary}{Corollary}
\newtheorem{lemma}{Lemma}
\def\one{{\mathchoice {\rm 1\mskip-4mu l} {\rm 1\mskip-4mu l} {\rm
1\mskip-4.5mu l} {\rm 1\mskip-5mu l}}}
\newcommand{\ket}[1]{\left| #1\right\rangle}        
\newcommand{\sket}[1]{| #1\rangle}  
\newcommand{\bra}[1]{\left\langle #1\right|}        
\newcommand{\sbra}[1]{\langle #1|}  
\begin{document}
LA-UR-15-20573
\title{Quantum simulations of one dimensional quantum systems}


\author{Rolando D. Somma}
\affiliation{Los Alamos National Laboratory, Los Alamos, NM 87545, USA}

\begin{abstract}
We present  quantum algorithms
for the simulation of quantum systems in one spatial dimension, which result in quantum speedups
that range from superpolynomial to polynomial.
We first describe a method to simulate the evolution of the quantum harmonic oscillator (QHO)
based on a refined analysis of the Trotter-Suzuki formula that exploits the Lie algebra structure.  For total evolution time $t$ and precision $\epsilon>0$, the complexity of our method is 
$ O(\exp(\gamma \sqrt{ \log(N/\epsilon)}))$, where $\gamma>0$ is a constant and $N$ is the quantum number
associated with an ``energy cutoff'' of the initial state. Remarkably, this complexity is subpolynomial in $N/\epsilon$.
We also provide a  method to prepare 
discrete versions of the eigenstates of the QHO of complexity polynomial in $\log(N)/\epsilon$,
where $N$ is the dimension or number of points in the discretization. This method may be of independent interest as it provides a way to prepare, e.g., quantum states with Gaussian-like amplitudes. Next, we consider a system with a quartic potential.
Our numerical simulations suggest a method for simulating the evolution of sublinear complexity $\tilde O(N^{1/3+o(1)})$, for constant $t$ and $\epsilon$.
We also analyze  complex 
 one-dimensional systems and prove a complexity bound $\tilde O(N)$,
under fairly general assumptions. 
 Our  quantum algorithms may find applications in other problems. As an example,
 we discuss the fractional Fourier transform, a generalization of the Fourier transform that is useful for signal analysis and can be formulated in terms of the evolution  of the QHO.
 \end{abstract}

\maketitle

\section{Introduction}
One of the best known problems that a quantum computer  is expected to tackle more efficiently
than a classical one is quantum simulation (QS)~\cite{feynman_simulating_1982,lloyd_universal_1996,BMK2010}. QS provides insight to the behavior of a quantum system, hence it serves as a powerful 
method applicable to a variety of scientific areas including physics~(c.f., \cite{ortiz:qc2001a,somma_physics_2003}),
  quantum chemistry~(c.f. \cite{aspuru-guzik:qc2005a,kassal_chem_2011,wecker_chem_2013,PHW2014}), and computer science (c.f.,~\cite{DJRZ2006}).
One way to explore this problem is by solving 
 Schr\"{o}dinger's equation, so that the state of the system evolves under
the evolution operator, as determined by the  Hamiltonian of the system.
A quantum simulator \cite{GACZ2000,Fri2008} carries out this simulation task on a quantum 
system or computer by an approximate implementation of the evolution, either as a sequence of elementary gates (i.e.,
a digitial quantum simulator)
or by direct implementation of the Hamiltonians (i.e., an analog quantum simulator);
this paper is concerned  with digital quantum simulators.

Simulating a quantum system requires a certain amount of ``time and memory'', called resources, 
and the important question is how the resources scale with quantities such as the volume or size
of the system (or the dimension of the Hilbert space), and precision. Generally, when the simulation
is carried
on a classical computer, it would require an undesirably large 
amount of resources, which R.P. Feynman addressed as an ``exponential explosion'', and
 it rapidly becomes an intractable problem even for relatively small systems.  On the contrary,
 a quantum simulator is expected to carry this simulation efficiently,
 at least in some important cases~\cite{feynman_simulating_1982}.
 
A QS may be implemented for two goals:
 i- to compute time-dependent physical properties, such as scattering amplitudes, 
 as determined by the evolution of the quantum system, and ii- to compute spectral properties,
such as eigenvalues or expectation values of observables on different eigenstates of the system.  
In the first goal, we are mainly concerned with the  simulation of the evolution operator $U(t):=\exp \{-iHt \}$.
Results on the so-called Hamiltonian simulation problem provide efficient quantum algorithms to simulate $U(t)$ on quantum
computers~\cite{aharonov_adiabatic_2003,berry_efficient_2007,childs_thesis,
childs_efficient_2010,wiebe_product_2010}. Generally,
the complexity of such algorithms (i.e., the number of elementary two-qubit gates and queries) 
depends only polynomially
in $\tau$, $d$, and $1/\epsilon$, where $\tau=\| H  t \|$, $d$ is the sparseness 
of $H$, and $\epsilon>0$ is a precision parameter. The Trotter-Suzuki approximation~\cite{trotter_1959,suzuki_90}
plays an important role in these results. The complexity dependence on $1/\epsilon$ can be exponentially improved
using recent methods that implement a series approximation of $U(t)$~\cite{berry_sparse,BCC_Taylor_2015}.
In the second goal, we are mainly concerned with the preparation
of eigenstates~\cite{PW2009,schwarz_peps_2011,somma_gap_2013}.
Two commonly used techniques for these quantum algorithms 
are the simulation of quantum adiabatic evolutions~\cite{farhi_quantum_2000} 
and the implementation of the so-called phase estimation
algorithm~\cite{kitaev_quantum_1995}. But since approximating
the ground state energy of many-body systems is a computationally
hard problem~\cite{kitaev_computation_2002},   quantum algorithms
to prepare ground states are generally inefficient.

The quest for more efficient methods 
for QS is  ongoing,
particularly sparked for the need to understand the quantum advantages of relatively
small, potentially near-future, quantum computers~(c.f., \cite{MM2014}).
 Unfortunately, most quantum algorithms for QS
up to date concern the case of discrete, finite dimensional
quantum many-body systems where $\|H \| < \infty$. These algorithms cannot be directly
applied to speedup the simulation of continuous-variables (CVs) quantum systems: the effective energy scale
of the problem can increase polynomially with the dimension of the relevant Hilbert space.
While a few exceptions exist (e.g.,~\cite{papageorgiou_CV_2013,jordan_QFT_2012}), 
a detailed analysis of the power and limits of quantum computers
for QS of CV systems is lacking. Here, we incentivize such an analysis 
by studying a quantum-computer simulation of  simple one-dimensional quantum systems.

   We first consider the 
quantum harmonic oscillator (QHO). Remarkably, our first  result is a 
 quantum algorithm that simulates $U(t)$
  with subexponential complexity. That is, the complexity is $ O(\exp(\gamma \sqrt{\log(N/\epsilon)}))$, 
  where $N$ is the dimension
  (as determined by the discretization and effective energy of the problem) and $\epsilon$ is the precision. 
  This complexity does not depend on the evolution time
  because the evolution operator of the QHO is periodic. The first result is obtained via a refined analysis of the Trotter-Suzuki
  approximation that exploits the Lie algebra structure of this problem,
  allowing us to bypass the no fast-forwarding theorem of~\cite{berry_efficient_2007}, which would 
  indicate a complexity that is at least linear in $N$.
  Our results are in contrast with, for example, the results in~\cite{papageorgiou_CV_2013,jordan_QFT_2012},
  where the complexity is polynomial in $N$. 
  Our result suggests a quantum superpolynomial speedup over the 
  corresponding classical algorithms for this problem,
  showing a significant quantum advantage even for the simulation of this simple model.

  We then present quantum algorithms of complexity
  polynomial in $\log (N)/\epsilon$ that prepare approximations
  of the eigenstates of a discrete version of the QHO.  These
  states are prepared by simulating the evolution operator of a Hamiltonian
  that is a discrete version of the Jaynes-Cummings model,  using the Trotter-Suzuki approximation.
  In particular, for the preparation of the eigenstate of lowest eigenvalue, the complexity is polynomial in $\log(N/\epsilon)$.
  The computation
  of spectral properties of the QHO can then be done by computing
  expectation values on such states using a variety of known techniques (c.f.,~\cite{knill_expectation_2007}). 
  These algorithms may be of independent interest as they allow for a 
  very efficient way to prepare states whose amplitudes are,
  for example, Gaussian-like.
  
  In order to understand the complexity of simulating more complex quantum systems, 
  we perform several numerical simulations of a quantum system with a quartic potential.
  In contrast with the QHO, quantum algorithms to simulate $U(t)$ based on high-order Trotter-Suzuki approximations
  seem to be of complexity sublinear in $N$ for this case. In particular, our numerical simulations suggest that
  such a complexity is $\tilde O (|t|^{1+\eta} N^{1/3+4 \eta/3} /\epsilon^\eta)$ for arbitrarily small, but constant, $\eta>0$.
  The $\tilde O$ notation hides factors that are polynomial in $\log(N|t|/\epsilon)$.
  This complexity represents a polynomial quantum speedup, with respect to $N$, over the classical algorithm.
  The sublinear complexity in $N$ may be a result of the algebraic structure satisfied by the operators in the Hamiltonian
  (see the recent results in~\cite{Som_15} for more details).
  
  Finally, we use our results in~\cite{BCC_Taylor_2015} to present a generic quantum method to simulate the dynamics
  of one dimensional quantum systems of complexity $\tilde O(|t|N)$, under fairly general assumptions.
  In contrast with our previous results, the method in~\cite{BCC_Taylor_2015} implements a series approximation
  of the evolution operator and the complexity dependence on $1/\epsilon$ is polylogarithmic. Our main contribution
  in this case is to represent the evolution operator in the interaction picture, so that potentials of high degree
  do not increase the complexity significantly. We note that classical algorithms for this problem are expected to have
  complexity that is super-linear in $N$.

The remainder of the paper is organized as follows. 
In Sec.~\ref{sec:CVQHO} we revisit the QHO where we discuss
particular properties that are useful for our quantum algorithms, and also classical
algorithms for simulating this model. In Sec.~\ref{sec:DQHO} 
we define a discretization of the QHO as the starting point for a QS.
We provide a quantum algorithm to compute scattering amplitudes associated
with the QHO in Sec.~\ref{sec:DQHOexpval} and quantum algorithms to prepare
the eigenstates of the discrete QHO in Sec.~\ref{sec:ESprep}. We analyze the 
quantum system with a quartic potential in Sec.~\ref{sec:quartic} and present the upper bound
on the complexity of simulating general one-dimensional quantum systems in Sec.~\ref{sec:bound}.
In Sec.~\ref{sec:frft},  we discuss related work with particular emphasis on the so-called
fractional Fourier transform, which is a generalization of the Fourier transform
used in signal analysis~\cite{OZA01}. We state the conclusions in Sec.~\ref{sec:conc}.

\section{The QHO revisited}
\label{sec:CVQHO}
The QHO is   described by its Hamiltonian ($\hbar =1$)~\cite{AB95},
\begin{align}
 H = \frac 1 2 [x^2 + (-i \partial_x)^2] \;.
 \end{align}
Because $x \in (-\infty, \infty)$, we sometimes refer to $H$ as the continuous-variable
or CV QHO.  The eigenfunctions of $H$ are the so-called  Hermite functions, 
 \begin{align} \label{eq:psi_n}
 \psi_n (x) = \frac {1} {\sqrt{n! 2^n \sqrt \pi}} e^{ -\frac{x^2}{2}} H_n(x) \; ,
 \end{align}
 where each $H_n(x)$ is the  $n$-th (physicists') Hermite polynomial and $n = 0, \, 1, \, 2, ...$ are quantum numbers.
 Then,
 \begin{align}
 H \psi_n(x) = (n+1/2) \psi_n (x) \; ,
 \end{align}
where  $n+1/2$ are the corresponding eigenvalues.  
In standard bra-ket notation, we represent the eigenfunctions $\psi_n(x)$ by the eigenstates $\sket{\psi_n}$
and the QHO, in operator form, is 
\begin{align}
H=\frac 1 2 (\hat x^2 + \hat p^2) \;.
\end{align}
 Then, $H \sket{\psi_n}=(n+1/2)\sket{\psi_n}$, and the position and momentum operators satisfy 
 \begin{align}
 \sbra{\psi_m} \hat x \sket{\psi_n} & = \int dx \; \psi_m(x) (x \psi_n(x)) \; , \\
 \sbra{\psi_m} \hat p \sket{\psi_n} & =-i \int dx \; \psi_m(x) (\partial_x \psi_n(x) )\;,
 \end{align}
  respectively. The evolution operator of the QHO for time $t$ is $U(t)=\exp\{-iHt\}$. Since
  the evolution operator is periodic, i.e. $U(0)=U(4 \pi)$, we can assume $t=O(1)$.
  We also use $\ket x$ to denote the eigenstates of $\hat x$ of eigenvalue $x \in (-\infty,\infty)$.
  
 We can alternatively define the raising and lowering operators,
 \begin{align}
 a^\dagger  =( \hat x -i \hat p)/\sqrt 2 \; , \;
 a  = ( \hat x +i \hat p)/\sqrt 2 \; ,
 \end{align} 
and write $H=a^\dagger a +1/2$. The eigenstates satisfy $a^\dagger a \sket{\psi_n} = n \sket{\psi_n}$ and
$a^\dagger \sket{\psi_n} = \sqrt{n+1} \sket{\psi_{n+1}}$.  Then, other eigenstates of the QHO can be obtained
by repeated action of $a^\dagger$ on the vacuum (ground) state $\sket{\psi_0}$. Later, we will use this property
to devise a quantum algorithm that prepares eigenstates, up to some approximation error (Sec.~\ref{sec:ESprep}).

A well-known result states that $\hat x$ and $\hat p$ are related via the Fourier transform (FT), which transforms $\hat x \rightarrow -\hat p$ and
$\hat p \rightarrow  \hat x$ (i.e., a $\pi/2$ rotation in phase space). Similarly, the FT transforms $a^\dagger \rightarrow - i a^\dagger$ and $a \rightarrow i a$. One implication is that the eigenstates $\sket{\psi_n}$ of the QHO are also eigenstates of the FT, and direct 
computation shows that the eigenvalues of the FT are $(-i)^n$. We will use this property to prove some results regarding the 
quantum algorithm that simulates the evolution of the QHO (Appx.~A).

It is also important to remark that the operators in $H$ generate a Lie algebra $sp(2)$ of dimension 3.
This property will be useful to bound the errors when approximating the evolution operator in Appx.~B. The result
is that the effective
norm of nested commutators is significantly smaller than the product of the effective norms, allowing us
to perform a refined analysis of the errors in Hamiltonian simulation methods.

Of particular interest in a QS
is the computation of quantities such as scattering amplitudes, $\bra{\varphi'} \varphi(t) \rangle$. Here, 
$\sket{\varphi(t)}:=U(t) \sket{\varphi}$ is the evolved state and $\ket{\varphi}$, $\ket{\varphi'}$
are some other specific states of the system, such as  eigenstates of the position or momentum operator, or more general states.
Also important is the computation of expectation values or correlation functions in different eigenstates,
namely $\sbra{\psi_{m}} \hat A \sket{\psi_n}$, where $\hat A$ is some observable; e.g., $\hat A= \hat p^{l_1} \hat x^{l_2}$,
where $l_1,l_2 \ge 0$.
Our quantum algorithms are designed to compute such quantities, but they could also be used
for the computation of more general quantities, such as expectation values of other unitary operators, after minor and straightforward modifications.

With no loss of generality, $\sket{\varphi} = \sum_{n=0}^{N'} c_n \sket{\psi_n}$ and $\sket{\varphi'} = \sum_{n=0}^{N'} c_n' \sket{\psi_n}$ 
(with $\sum_n |c_n|^2=\sum_n |c_n'|^2=1$), so that
\begin{align}
\label{eq:scatamp}
\bra{\varphi'} \varphi(t) \rangle = \sum_{n=0}^{N'} (c'_n)^* c_n e^{-i(n+1/2)t} \; .
\end{align}
Then, a standard classical method to compute  scattering amplitudes involves
the spectral decomposition of   $\sket{\varphi}$ and $\sket{\varphi'}$ in terms of $\sket{\psi_n}$,
to obtain the amplitudes $c_n$ and $c'_n$. Because the sum in Eq.~\eqref{eq:scatamp} involves $O(N')$ terms,
the worst-case complexity of the classical method is of order polynomial in $N'$ if $N'<\infty$.
We also note that
\begin{align}
\label{eq:expecvalue}
\sbra{\psi_{m}} \hat A \sket{\psi_n} = \int \int dx' dx \; \psi_m(x') A(x',x) \psi_n(x) \; ,
\end{align}
where $A(x',x)=\bra {x'} \hat A \ket x$. Assuming $n,m \le N'$, classical methods to 
approximate the integral in Eq.~\eqref{eq:expecvalue} can also have a worst-case complexity 
of order polynomial in $N'$. This is because the functions $\psi_n(x)$ present oscillations
that become more significant as $n$ increases, so a good approximation to the integral
by a finite sum can only be obtained if the number of terms in the sum is polynomial in $N'$~\cite{papageorgiou_CV_2013}.
Nevertheless, it is well-known that many quantities associated with the QHO can be analytically obtained,
so our quantum algorithms will be more powerful than classical ones only in certain scenarios.

A natural quantum method to compute Eq.~\eqref{eq:scatamp} involves direct QS
of the QHO. In this case, if the states $\sket{\varphi}$ and $\sket{\varphi'}$ can be efficiently prepared, the quantum method is efficient or not
whether $U(t)$ can be simulated efficiently or not,  respectively.  In this paper
we are first interested in devising efficient quantum-computer simulations of the 
evolution induced by the QHO and then we will consider quantum algorithms for preparing eigenstates.

\section{A discrete quantum harmonic oscillator}
\label{sec:DQHO}
In analogy with the CV QHO, we define a discrete QHO by the Hamiltonian
  \begin{align}
  \label{eq:DQHO1}
  H^\rD=\frac 1 2 ((x^\rD)^2 +( p^\rD)^2)\; .
  \end{align}
 The Hilbert space dimension is $N$, where $N \ge 2$ is even for simplicity.
 $x^\rD$ is the discrete ``position'' operator given by the $N \times N$ diagonal matrix 
  \begin{align}
  x^\rD= \sqrt{\frac {2 \pi}N} \frac 1 2  \begin{pmatrix} -N & 0 & \cdots & 0 \cr 0 & (-N+2) & \cdots & 0 
  \cr \vdots & \vdots & \ddots & \vdots \cr 0 & 0 & \cdots & (N-2) \end{pmatrix} \; ,
  \end{align}
  and $p^\rD$ is the discrete ``momentum'' operator given by 
  \begin{align}
  p^\rD=  (F_{\rm c}^{\rD})^{-1}. x^\rD . F_{\rm c}^{\rD}  \;.
  \end{align}
  The $N \times  N$ unitary matrix $F_{\rm c}^{\rD}$ is the so-called centered discrete Fourier transform, 
  which is the standard discrete Fourier transform $F^{\rD}$ up to a simple (cyclic) permutation. Its matrix entries
  are
  \begin{align}
  \left [ F_{\rm c}^{\rD} \right]_{j,k}= \frac 1 {\sqrt{ N}} {\exp (i  \; 2 \pi jk/N)}\; ,
  \end{align}
  where $j,k \in \{-N/2, \ldots, N/2-1\}$ label the rows and columns, respectively.   Then, $F_{\rm c}^{\rD} = (X)^{N/2}. F^\rD. (X)^{-N/2}$, and
  \begin{align}
  \label{eq:cyclicP}
  X=\begin{pmatrix} 0 & 1 & 0 &\ldots & 0 \cr 0 & 0 & 1 & \ldots & 0  \cr \vdots & \vdots & \vdots & \ddots & \vdots \cr
  0 & 0 & 0 & \ldots & 1 \cr
  1 & 0 & 0 & \ldots & 0 \end{pmatrix} \; 
  \end{align}
  is the operation that performs a cyclic permutation, shifting the indices by one.
The relation between  $F_{\rm c}^{\rD}$ and $F^\rD$
will be useful to provide an efficient quantum circuit that implements $F_{\rm c}^{\rD}$ (Sec.~\ref{sec:TSdecomp}).

 We will generally assume that we are in the limit of large $N$, although some results do still apply when $N$ is fairly small.
 Then, in the following, the order notation to bound approximation errors assume the asymptotic limit; we refer to the corresponding appendices for more details.

\subsection{Spectral properties}
\label{sec:spectralprop}
In contrast to the CV QHO, the Hamiltonian $H^\rD$ may not be exactly solvable. However, some spectral properties of $H^\rD$ can be well approximated from those of the CV version. We write $E_n^\rD$ and $\sket{\phi_n^\rD}$ for the eigenvalues and 
eigenstates of $H^\rD$, respectively, and $n =0,1,\ldots,N-1$.  We also introduce the (unnormalized) quantum states
  \begin{align} \label{eq:psi_error}
  \sket{\psi_n^\rD} =\left(\frac{2 \pi}N\right)^{1/4}\sum_{j=-N/2}^{N/2-1} \psi_n(x_j) \ket j \; ,
  \end{align}
  which, as we will show, approximate the eigenstates of $H^\rD$. Here, $x_j=j \sqrt{2 \pi/N}$ and
  $\psi_n(x)$ is the $n$-th Hermite function, so that  $\sket{\psi_n^\rD}$
  represents a  ``discrete Hermite state''. In Appx.~A, Cor.~\ref{cor:maincor}, we show that
 there exists a constant $c$, $1> c>0$, such that
 \begin{align}
 \label{eq:approxeigen}
  \| H^{\rD}   \sket{\psi_n^\rD} -  (n+1/2)  \sket{\psi_n^\rD} \|^2 = \exp(-\Omega(N)) \; ,
  \end{align}
  for all $ n \le cN$. Here, for a matrix $A$, $\| A\|$  is the spectral norm and $\| \ket{\xi}\|$
  is the Euclidean norm of a state $\ket \xi$ .
  The notation $\exp(-\Omega(N))$ states that there is a constant $\beta>0$ such that
  the right hand side of Eq.~\eqref{eq:approxeigen} is at most $ e^{- \beta N}$, for  sufficiently large $N$.

Since $c<1$, we refer to the subspace spanned by $\sket{\psi_n^\rD}$, for all $n \le cN$, as the ``low-energy'' subspace.
Intuitively, in such a low-energy sector, the Hermite functions can be well approximated by piecewise constant 
functions $\tilde \psi_n(x)=\psi_n(x_j)$ if $x_j \le x < x_j +\sqrt{2 \pi/N}$. The integrals needed to compute
 properties of the QHO can be replaced, within high accuracy, by the sums that appear in the discrete case.
In contrast, for large values of $n$, the Hermite functions may present oscillations that will not be captured under such 
an approximation, and the approximation error gets large.

The states $\sket{\psi_n^\rD}$ form almost an orthonormal basis
of the low-energy subspace. In Appx.~A,
Cor.~\ref{cor:orthogonal}, we prove
\begin{align}
\label{eq:approxnorm}
|\sbra{\psi_n^\rD} \psi_{m}^\rD \rangle - \delta_{n,m}| = \exp(-\Omega(N)) \; ,
\end{align}
 for all $n,m \le cN$. 
Equations~\eqref{eq:approxeigen} and~\eqref{eq:approxnorm} imply that the  eigenvalues of $H^\rD$
in the low-energy sector satisfy
\begin{align}
\label{eq:approxev}
|E_n^\rD -(n+1/2) | = \exp(-\Omega(N)) \;.
\end{align}
This follows from noticing that $(H^\rD-(n+1/2))^2$ is nonnegative and its smallest eigenvalue is bounded from below by $0$
and bounded from above by $\sbra{\psi_n^\rD}(H^\rD-(n+1/2))^2 \sket{\psi_n^\rD}/\| \sket{\psi_n^\rD}\|^2=\exp(-\Omega(N))$.
The corresponding eigenstates of $H^\rD$ satisfy
\begin{align}
\label{eq:approxeigenstate}
\|  \sket{\phi_n^\rD} -  \sket{\psi_n^\rD}  \|  = \exp(-\Omega(N)) \; ,
\end{align}
when $n \le cN$.   Since the eigenvalues in the low energy subspace are gapped,
Eq.~\eqref{eq:approxeigenstate} can be shown by considering the projector $\sket{\phi_n^\rD}\sbra{\phi_n^\rD}=\lim_{u \rightarrow \infty}
e^{-u (H^\rD -E_n^\rD)^2}$. In particular, for precision $\exp(-\Omega(N))$, it suffices to choose $u=O(N)$. In this case,
$ e^{-u (H^\rD -E_n^\rD)^2} \sket{\psi_n^\rD} = ( \one - \int_0^u ds \; e^{-s (H^\rD -E_n^\rD)^2}(H^\rD -E_n^\rD)^2) \sket{\psi_n^\rD}$
and then $| \|e^{-u (H^\rD -E_n^\rD)^2} \sket{\psi_n^\rD} \| -1 |= \exp(-\Omega(N))$ or $| \| \sket{\phi_n^\rD}\sbra{\phi_n^\rD}  {\psi_n^\rD} \rangle \| -1 |= \exp(-\Omega(N))$.

The values of the constants hidden by the order notation, as well as $c$, may be estimated from
bounds on the tails of the Hermite functions and
the corresponding proofs in Appx.~A. The constant in the exponentials of Eqs.~\eqref{eq:approxnorm},~\eqref{eq:approxev},
and~\eqref{eq:approxeigenstate}
may depend on $n$ (i.e., it decreases as $n$ increases) but, for our results,
it suffices to claim that there is a constant such that the approximations of the eigenvalues
and eigenvectors are $\exp(-\Omega(N))$. Additionally, Lemma~\ref{lem:smallsupport} requires $c< \pi/16$
and the corresponding constant in the exponential for this lemma is $\pi/2$ when $n=0$.
Nevertheless, we can also perform a simple numerical analysis
to validate our results and give better estimates of    such constants and $c$.  As an example,
in Fig.~\ref{fig:schmidtmatrix} we show the absolute value of the overlap between $\sket{\psi_n^\rD}$ and 
the actual eigenstates of $H^\rD$; that is, $|\langle \phi_m^\rD \sket {\psi_n^\rD}|$, for $n,m \in \{0,1,\ldots,N-1\}$.
These numerical computations suggest that the states $\sket{\psi_n^\rD}$ are excellent approximations of the eigenstates
of $H^\rD$ for a fraction $c \ge 3/4$ of the whole spectrum.

\begin{figure}[htbp]
\includegraphics[width=8cm]{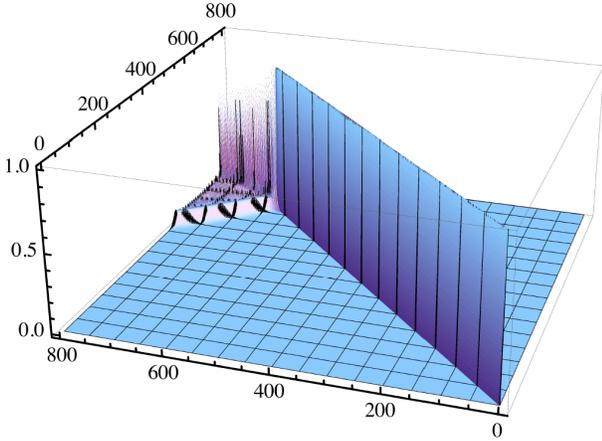}
\caption{\label{fig:schmidtmatrix} Overlap between the eigenstates of $H^\rD$, $\sket{\phi_m^\rD}$, and the 
discrete Hermite states, $\sket{\psi_n^\rD}$, for dimension $N=800$.}
\end{figure}

In Fig.~\ref{fig:erg_spectrum}  we show the eigenvalues of the CV QHO  and   $H^\rD$.  Our numerical results suggest that $|E_n^\rD - (n+1/2)|$ approaches zero as long as $n \le c N$,
also for some $c \ge 3/4$.
   \begin{figure}[hbtp]
\begin{center}
 \includegraphics[width=8cm]{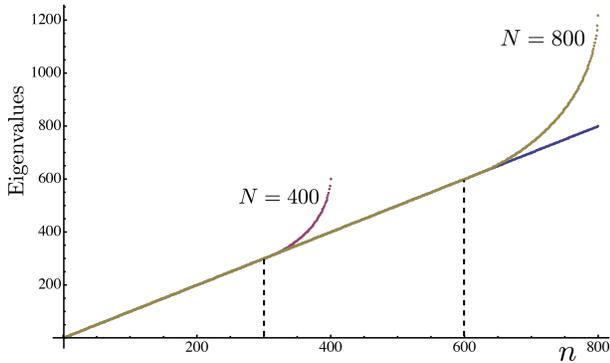}
\end{center}
\caption{\label{fig:erg_spectrum} Eigenvalues of $H^\rD$ for $N=400$ (purple) and $N=800$ (yellow), and comparison
with $n+1/2$, i.e. the eigenvalues of the CV QHO (blue). }
\end{figure}

In Fig.~\ref{fig:erg_error} we show the absolute difference between the $n$-th eigenvalue of $H^\rD$ and $n+1/2$, for $n=N/2$ and
$n=3N/4$.
The numerical results indicate that this difference decays exponentially with $N$, as determined in Eq.~\eqref{eq:approxev}, and 
where the constant in the exponential of Eq.~\eqref{eq:approxev} depends on $n$.
  \begin{figure}[hbtp]
\begin{center}
 \includegraphics[width=8cm]{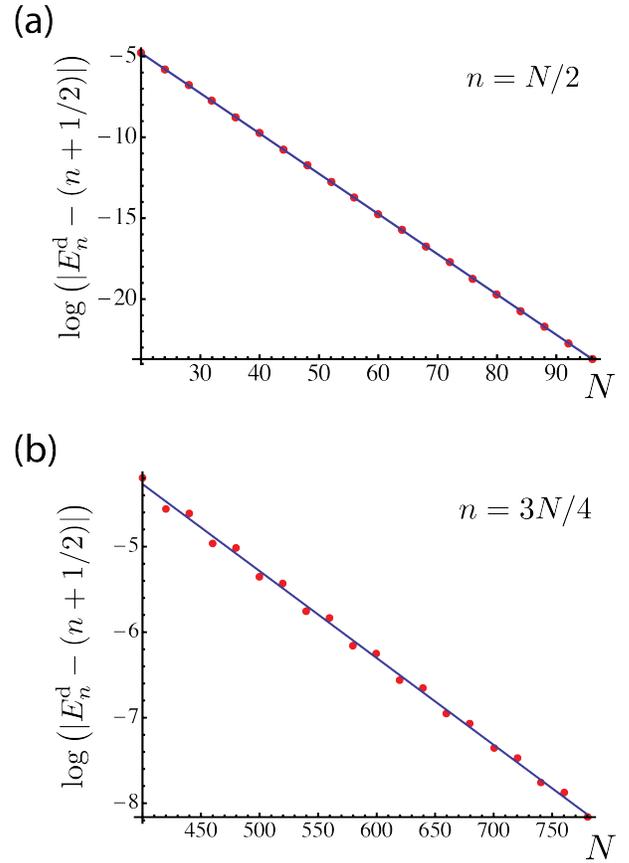}
\end{center}
\caption{\label{fig:erg_error} Plot of the natural logarithm of the absolute difference between the $n$-th eigenvalue of $H^\rD$ and $n+1/2$, 
as a function of the dimension (red dots).
The linear fit (blue line) indicates that such a difference decays exponentially with $N$. (a) $n=N/2$.
The approximate constant in the
exponential of Eq.~\eqref{eq:approxev} is $0.248$ in this case. (b) $n=3N/4$. The approximate constant in the
exponential of Eq.~\eqref{eq:approxev} is $0.010$ in this case.}
\end{figure}

\section{Scattering amplitudes}
\label{sec:DQHOexpval}

We address the first goal of a QS.
In particular, we seek  a quantum algorithm,
built upon a sequence of two-qubit gates, to compute scattering amplitudes
as determined by the evolution of the CV QHO.
 We let $U^\rD(t):=\exp\{-i H^\rD t\}$ be the unitary evolution operator of
 the   discrete QHO. Intuitively, $U^\rD(t)$ approximates $U(t)$, in some sense, as the dimension $N$
 grows larger.
 Our goal is to compute $\sbra{\varphi'} U(t) \sket{\varphi}$, where we assume
 $\sket{\varphi} = \sum_{n=0}^{N'} c_n \sket{\psi_n}$ and $N' < \infty$. We also assume that $\sket{\varphi'} = \sum_{n=0}^{N'} c'_n \sket{\psi_n}$
 or  $\sket{\varphi'} = (2 \pi/N)^{1/4} \sket{x_j}$, where  $\sket{x_j}$ is the eigenstate of the position operator $\hat x$ with
 eigenvalue $j \sqrt{2 \pi/N}$. (Note that the eigenstates of $\hat x$ cannot be written as a finite linear combination
 of $\sket{\psi_n}$.)
For the discrete case, we   define $\sket{\varphi^\rD} \propto \sum_{n=0}^{N'} c_n \sket{\psi^\rD_n}$ and $\sket{\varphi'^\rD} \propto \sum_{n=0}^{N'} c'_n \sket{\psi^\rD_n}$ or $\sket{\varphi'^\rD} = \sket j$, depending on the case.
With these definitions, $\bra {x_j} \varphi \rangle \propto \bra j \varphi^\rD \rangle$ and $\bra {x_j} \varphi' \rangle \propto \bra j \varphi'^\rD \rangle$,
so that the states $\sket{\varphi^\rD}$ and $\sket{\varphi^\rD}$ correspond to discretizations of $\sket{\varphi}$ and $\sket{\varphi'}$ in space, respectively.
 The proportionality constants are needed for normalization. The first result of this section is:
 \begin{theorem}
 \label{thm:MR1}
 Let $t$ be the evolution time and  $\epsilon >0$ be a precision parameter. Assume $|t|>1$ and $t=O(1)$ with no loss of generality.
 Then, there exists $N = O(\log (1/\epsilon) +N' )$ such that
 \begin{align}
  | \sbra{\varphi'^\rD} U^\rD(t) \sket{\varphi^\rD}  - \sbra{\varphi'} U(t) \sket{\varphi} | =O(\epsilon) \; .
 \end{align}
 \end{theorem}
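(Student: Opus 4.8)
The plan is to prove the bound term by term: I will show that the discrete scattering amplitude reproduces the continuous one of \eqref{eq:scatamp} up to errors that are exponentially small in $N$, so that summing polynomially many such terms can be driven below $\epsilon$ by taking $N$ only logarithmically larger than $1/\epsilon$. I would treat the two choices of $\sket{\varphi'}$ (a finite Hermite combination, or a position eigenstate $\sket{j}$) in parallel, since both reduce to the same mechanism. Throughout I assume $N' \le cN$, so that every index $n$ appearing in $\sket{\varphi}$ and $\sket{\varphi'}$ lies in the low-energy subspace where Eqs.~\eqref{eq:approxeigen}--\eqref{eq:approxeigenstate} apply.

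First I would establish the key diagonalization estimate: for each $n \le N'$,
\begin{align}
\| U^\rD(t)\sket{\psi_n^\rD} - e^{-i(n+1/2)t}\sket{\psi_n^\rD}\| = \exp(-\Omega(N)) \;.
\end{align}
This follows by writing $\sket{\psi_n^\rD} = \sket{\phi_n^\rD} + \sket{\delta_n}$ with $\|\sket{\delta_n}\| = \exp(-\Omega(N))$ from \eqref{eq:approxeigenstate}, using that $U^\rD(t)\sket{\phi_n^\rD} = e^{-iE_n^\rD t}\sket{\phi_n^\rD}$ exactly, and then replacing $E_n^\rD$ by $n+1/2$ at the cost $|e^{-iE_n^\rD t} - e^{-i(n+1/2)t}| \le |t|\,|E_n^\rD - (n+1/2)| = \exp(-\Omega(N))$ by \eqref{eq:approxev}. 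Here it is crucial that $|t| = O(1)$, which is legitimate by periodicity of $U(t)$, so this prefactor stays bounded.

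Next I would assemble the amplitude. In the Hermite case, $\sket{\varphi^\rD} = \mathcal N \sum_n c_n \sket{\psi_n^\rD}$ with $|\mathcal N|^{-2} = \|\sum_n c_n \sket{\psi_n^\rD}\|^{2} = \sum_n |c_n|^2 + \exp(-\Omega(N)) = 1 + \exp(-\Omega(N))$, the correction coming from the near-orthonormality \eqref{eq:approxnorm} together with $\sum_n |c_n|^2 = 1$. Inserting the diagonalization estimate and collapsing the double sum $\sum_{m,n}(c_m')^* c_n e^{-i(n+1/2)t}\sbra{\psi_m^\rD}\psi_n^\rD\rangle$ to its diagonal via \eqref{eq:approxnorm} recovers $\sum_n (c_n')^* c_n e^{-i(n+1/2)t}$, exactly \eqref{eq:scatamp}. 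For the position case I would instead use that the $j$-th amplitude of $\sket{\psi_n^\rD}$ is $(2\pi/N)^{1/4}\psi_n(x_j)$ by definition \eqref{eq:psi_error}, so that $\sbra{j}U^\rD(t)\sket{\varphi^\rD}$ reproduces $(2\pi/N)^{1/4}\sum_n c_n e^{-i(n+1/2)t}\psi_n(x_j) = \sbra{\varphi'}U(t)\sket{\varphi}$.

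The main obstacle is the bookkeeping of error accumulation. Each term carries an $\exp(-\beta N)$ error, and the Hermite case produces at most $O(N'^2)$ of them; to force the total below $\epsilon$ I would require $\beta N \ge \log(N'^2/\epsilon)$, i.e. $N = \Omega(\log(1/\epsilon) + \log N')$. The linear-in-$N'$ term in the theorem arises not from this accumulation but from the structural constraint $N' \le cN$ needed for \eqref{eq:approxeigen}--\eqref{eq:approxeigenstate} to hold, which forces $N \ge N'/c$; combining the two gives $N = O(\log(1/\epsilon) + N')$. Since $N' \le cN$, all polynomial prefactors in $N'$ are absorbed into the exponential, as $\mathrm{poly}(N)\exp(-\beta N) = \exp(-\Omega(N))$. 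The one point requiring care is that the constant $\beta$ in the exponentials may depend on $n$ and shrink as $n \to cN$; I would handle this by fixing the worst-case (smallest) $\beta$ over the finite range $n \le N' \le cN$ and checking it remains bounded below by a positive constant there.
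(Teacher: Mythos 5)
Your proposal is correct and follows essentially the same route as the paper's proof: establish $\|(U^\rD(t)-e^{-i(n+1/2)t})\sket{\psi_n^\rD}\|=\exp(-\Omega(N))$ for $n\le N'\le cN$, control the normalization constants and collapse the double sum via the near-orthonormality \eqref{eq:approxnorm}, treat the position-eigenstate case directly from the definition \eqref{eq:psi_error}, and then choose $N=O(\log(1/\epsilon)+N')$. The only cosmetic difference is that you derive the diagonalization estimate from \eqref{eq:approxeigenstate} and \eqref{eq:approxev} (decomposing $\sket{\psi_n^\rD}$ into the true eigenvector plus an exponentially small remainder), which is if anything a more explicit justification than the paper's citation of \eqref{eq:approxnorm} and \eqref{eq:approxev} for the same step.
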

 \begin{proof}
 The result follows from the subadditivity property of errors. First, we assume $\sket{\varphi'^\rD} \propto \sum_{n=0}^{N'} c'_n \sket{\psi^\rD_n}$ and let $\alpha$ and $\alpha'$ be the constants of proportionality,
 so that
 \begin{align}
 \sbra{\varphi'^\rD} U^\rD(t) \sket{\varphi^\rD} = \alpha' \alpha \sum_{n,n'=0}^{N'} (c'_{n'})^* \; c_n  \sbra{\psi_{n'}^\rD} U^\rD(t) \sket{\psi_n^\rD}  \; .
 \end{align}
 Because $\sum_{n=0}^{N'} |c_n|^2=\sum_{n=0}^{N'} |c_n'|^2=1$, Eq.~\eqref{eq:approxnorm}
 implies $|\alpha - 1 |= \exp(-\Omega(N))$ and $|\alpha' - 1 |=  \exp(-\Omega(N))$. This result assumes a   
 choice for the dimension of the discrete system of $N = O(N')$, so that $0 \le n \le N' \le cN$. The value of $n$ within the range represents the low-energy subspace, and we also
 used that $\sum_{n=0}^{N'} |c_n| = O(\sqrt N)$, $\sum_{n=0}^{N'} |c'_n| = O(\sqrt N)$.
 Furthermore,  Eqs.~\eqref{eq:approxnorm} and~\eqref{eq:approxev} 
 imply
 \begin{align}
 \label{eq:additional2}
\|  ( e^{-i H^\rD t} -e^{-i (n+1/2)t} )   \sket{\psi_n^\rD} \|= |t|\exp(- \Omega(N)) \; .
 \end{align}
 Note that we can assume $t=O(1)$ since $U(t)$ is periodic.
Then,
\begin{align}
\nonumber
 & | \sbra{\varphi'^\rD} U^\rD(t) \sket{\varphi^\rD}  -  \sum_{n,n'=0}^{N'} (c'_{n'})^* \; c_n e^{-i(n+1/2)t} \sbra{\psi_{n'}^\rD}   \psi_n^\rD\rangle | = \\
 \label{eq:additional1}
  &= \exp(- \Omega(N)) \; .
\end{align}
 From Eq.~\eqref{eq:approxnorm}, we know that there exists a constant $c$ such that, if $n,n' \le N'=cN <N$, then
 \begin{align}
 \nonumber
& \left | \sum_{n,n'=0}^{N'} (c'_{n'})^* \; c_n e^{-i(n+1/2)t} \sbra{\psi_{n'}^\rD}   \psi_n^\rD\rangle -  \right. \\
 &\left. - \sum_{n=0}^{N'} (c'_{n})^* \; c_n e^{-i(n+1/2)t} \right| =\exp(- \Omega(N)) \;.
 \end{align}
 Using Eq.~\eqref{eq:additional1}, it follows that
  \begin{align}
\nonumber
&| \sbra{\varphi'^\rD} U^\rD(t) \sket{\varphi^\rD} -\sbra{\varphi'} U (t) \sket{\varphi} | = \\
\nonumber
& =| \sbra{\varphi'^\rD} U^\rD(t) \sket{\varphi^\rD}  -  \sum_{n=0}^{N'} (c'_{n})^* \; c_n e^{-i(n+1/2)t}  |  
 \\
& =\exp(-\Omega(N)) \; .
 \end{align} 
 
We now consider the case where $\sket{\varphi'^\rD}=\ket j$. We seek to show that $ \sbra{j} U^\rD(t) \sket{\varphi^\rD}$ approximates
$(2 \pi/N)^{1/4} \sbra{x_j}U(t) \sket{\varphi}$. Equation~\eqref{eq:additional2} implies
\begin{align}
| \sbra{j} U^\rD(t) \sket{\varphi^\rD} - \sum_{n=0}^{N'} c_n e^{-i(n+1/2)t} \sbra j \psi_n^\rD \rangle | =  \exp(-\Omega(N)) \;.
\end{align}
Additionally, 
\begin{align}
\nonumber
& \sum_{n=0}^{N'} c_n e^{-i(n+1/2)t} \sbra j \psi_n^\rD \rangle= \\
\nonumber
&= (2 \pi/N)^{1/4} \sum_{n=0}^{N'} c_n e^{-i(n+1/2)t} \sbra{x_j} \psi_n\rangle \\
& = (2 \pi/N)^{1/4} \bra{x_j}U(t) \sket{\varphi} \;.
\end{align}
Then, 
  \begin{align}
| \sbra{\varphi'^\rD} U^\rD(t) \sket{\varphi^\rD} -\sbra{\varphi'} U (t) \sket{\varphi} | = \exp(-\Omega(N))
\end{align}
in this case as well.

If $|t|\ge 1$, it follows that there exists $N = O(\log(1/\epsilon))$ that  implies $  \exp(-\Omega(N))= O(\epsilon)$ for both cases of $\sket{\varphi'^\rD}$
or $\sket{\varphi'}$.
Then,   $N = O(\log(1/\epsilon) +N')$   suffices to provide an overall precision of order $\epsilon$
in the computation of $\sbra{\varphi'} U(t) \sket{\varphi}$.
 \end{proof} 
 \vspace{1cm}

 Theorem~\ref{thm:MR1} basically shows that scattering amplitudes
of the continuous-variable QHO can be well approximated by those of the discrete QHO as long as
the dimension $N$ scales  with the largest quantum number in the decomposition of $\sket{\varphi}$ (and  $\sket{\varphi'}$)
in terms of eigenvectors of
$H$. Also, $N$ is only logarithmic in $1/\epsilon$. In particular, if the  states $\sket{\varphi^\rD}$ and $\sket{\varphi'^\rD}$ 
can be efficiently prepared on a quantum computer, the complexity of  computing $\sbra{\varphi'^\rD} U^\rD(t) \sket{\varphi^\rD}$ 
 will be dominated by the complexity of implementing
the evolution operator $U^\rD(t)$ for the corresponding choice of $N$.

 
\subsection{Time evolutions: Trotter-Suzuki approximation}
\label{sec:TSdecomp}
To compute the propagator on a quantum computer,
we seek an implementation or simulation of $U^\rD(t)$, whose complexity
will be determined by the number of two-qubit gates required to approximate the operator.
Unfortunately, $H^\rD$ has large norm, so known results on Hamiltonian simulation~\cite{berry_efficient_2007,childs_efficient_2010,berry_sparse,BCC_Taylor_2015,cleve_query_2009,WBHS11}
are not very useful in the current case; other approaches are necessary.

Since $x^\rD$ is diagonal and each  entry can be efficiently computed,
a quantum computer simulation of  $\exp (-i (x^\rD)^2 t)$ can be 
done efficiently. To show this, let $q(\delta)=O({\rm polylog}(N/\delta))$ be the number of two-qubit gates
required to compute a diagonal entry of $(x^\rD)^2$ within precision $\delta >0$.
Then,  a diagonal entry of $(x^\rD)^2 t$ can be computed within precision
$\tilde \epsilon$ using $q(\tilde \epsilon/|t|) + O({\rm polylog} (N|t|/ \tilde \epsilon))$ two-qubit gates:
 we need $O({\rm log} (N|t|/\tilde \epsilon))$ bits (or qubits) to represent $(x^\rD)^2$
within precision $\tilde \epsilon/|t|$, and multiplication by $t$ can be done using additional
$O({\rm polylog} (N|t|/\tilde \epsilon))$ gates. Then, $\exp (-i (x^\rD)^2 |t|)$ can be simulated
on a quantum computer, within precision $\tilde \epsilon$, using $O({\rm polylog} (N|t|/\tilde \epsilon))$ two-qubit gates.
The complexity for the simulation of $\exp (-i (p^\rD)^2 t)$
is of the same order, since $x^\rD$ and $p^\rD$ are related by the centered Fourier transform and
$\exp (-i (p^\rD)^2 |t|) = (F^\rD_{\rm c}) ^{-1} . \exp (-i (x^\rD)^2 |t|) .  F^\rD_{\rm c} $. 
Additionally, $F^\rD_{\rm c}= X^{N/2}. F^\rD. (X)^{-N/2}$, where $X$ is a cyclic permutation [Eq.~\eqref{eq:cyclicP}].
(Note that $X^N= \one$ and $X^{N/2}=X^{-N/2}$.)
An efficient quantum circuit for $F^\rD$ that requires $O({\rm polylog}(N))$ two-qubit gates is known~\cite{kitaev_quantum_1995,nielsen_quantum_2000}.  $X^{N/2}$
can be implemented on a quantum computer using $O({\rm polylog}(N))$ two-qubit gates in a number of different
ways.
For example, $X^{N/2}$ can also be decomposed as $X^{N/2}=F^\rD . Z^{N/2} . (F^\rD)^{-1}$, where $Z$ is the diagonal 
unitary whose nonzero entries are the roots of unity, i.e., $\exp (i 2 \pi k/N)$ for $k \in \{0,\ldots,N-1 \}$. Then, the diagonal
entries of $Z^{N/2}$ are $\exp (i  \pi k ) = \pm 1$, resulting in a simple and efficient quantum-computer 
implementation of $Z^{N/2}$. We also note that when $N=2^a$ is associated with a system of $a$ qubits, $X^{N/2}$ and $Z^{N/2}$
are simply the Pauli operators $\sigma_x$ and $\sigma_z$, respectively, acting on the first qubit. 

The above results
suggest using the Trotter-Suzuki product formula to simulate $U^\rD(t)$~\cite{trotter_1959,Huyghebaert_1990,suzuki_90,suzuki_qmc_1998,berry_efficient_2007,wiebe_product_2010}.
Such a formula approximates
the evolution operator as a sequence or product of shorter time evolutions under $(x^\rD)^2$
and $(p^\rD)^2$. Known results provide an upper bound on the number of terms in the formula 
given by ${\cal M}=O(\exp(b/\eta) (\| H^{\rD} \| |t|)^{1+\eta}/\epsilon^\eta)$, where $\eta >0$ is arbitrarily small, $b>0$ is a constant, and 
$\epsilon>0$ is the precision~\cite{berry_efficient_2007,childs_thesis}. Since $\| H^\rD \|= O(N)$,
the results in~\cite{berry_efficient_2007,childs_thesis} would  imply that the number of two-qubit gates required to approximate $U^\rD(t)$ grows
faster than $N$, being undesirably large in the asymptotic limit.  Remarkably, 
an improved analysis of the complexity of high-order Trotter-Suzuki 
product formulas allows us to reduce the complexity substantially. The basic idea is to note that
the cost resulting from such  formulas actually depends on quantities (norms of nested commutators) such as $\| [( x^\rD)^2, ( p^\rD)^2 ] \|$ rather than
$\| H^{\rD} \|^{1+\eta}$. In particular, $\hat x^2$, $\hat p^2$, and $\{ \hat x , \hat p\}$
are a basis of a Lie algebra ${sp}(2)$ of dimension 3. Since our discrete QHO operators {\em approximate} the operators in the continuous-variable case,
it is possible to show that $( x^\rD)^2$, $( p^\rD)^2$, and $\{ x^\rD, p^\rD \}$ 
are  {\em almost} a basis of a Lie algebra of dimension 3, when projected onto the low-energy subspace.
This implies, for example,   $\| Q [( x^\rD)^2, ( p^\rD)^2 ] Q \| =O(N)$, where $Q$ is the projector
into the low energy subspace spanned by $\{ \sket{\psi_n^\rD} \}$, $n \le cN$.
In contrast, a simple analysis would have resulted in $\| Q [( x^\rD)^2, ( p^\rD)^2 ] Q \| =O(N^2)$.
 
 The (symmetric) Trotter-Suzuki approximations of the evolution operator over a course
 of evolution time $s$ are defined recursively as follows~\cite{suzuki_90,berry_efficient_2007,childs_thesis}:
 \begin{align}
 \label{eq:TSArecdef}
 U_{p+1}^\rD(s) := (U_p^\rD(s_p))^2 U_p^\rD(s-4s_p) (U_p^\rD(s_p))^2 \; ,
 \end{align}
 and 
$U_1^\rD(s) := e^{-i s (x^\rD)^2/4} e^{-i s (p^\rD)^2/2}e^{-i s (x^\rD)^2/4}$.
 Here, $p \ge 2$ is integer
 and $s_p = s/(4-4^{1/(2p+1)})$. For time $t$, we will approximate the evolution
 operator $U^\rD(t)$ by $(U^\rD_p(s))^k$, where $k = t/s$ and $p$ are chosen
 to minimize the number of exponentials of $(x^\rD)^2$ and $(p^\rD)^2$ in the product.

 In Appx.~B, Lemma~\ref{lem:TSA}, we show
 that there is a choice for the dimension of the Hilbert space where $N= \exp\{ O( \sqrt{\log(N' |t|/\epsilon)}) \}  +O(N')$
  a choice of $p=\Theta(\sqrt{\log(N' |t|/\epsilon)})$, and $|s| = \Theta (5^{-p})$, such that
 \begin{align}
 \label{eq:TSAdiscrete}
 \| [(U_p^\rD(s))^k - U^\rD(t)] \sket{\psi_n^\rD} \| =O(\epsilon) \; ,
 \end{align}
 for all $n \le N'$.  
 We will use Eq.~\eqref{eq:TSAdiscrete} to prove the main result of this section:
 \begin{theorem}
 \label{thm:MR2}
 Let $ \sket{\varphi} = \sum_{n=0}^{N'} c_n \sket{\psi_n}$ be the initial state
 of the CV QHO, $t=O(1)$ the evolution time ($|t| \ge 1$), and $\epsilon >0$. Then, there exists $N= \exp(O( \sqrt{\log(N' /\epsilon)})) + O(N')$,
  $p=\Theta(\sqrt{\log(N' /\epsilon)})$, and $|s| = \Theta(5^{-p})$, such that ($k=t/s$)
 \begin{align}
  \| [(U_p^\rD(s))^k - U^\rD(t)] \sket{\varphi^\rD} \| =O(\epsilon) \; ,
 \end{align}
 for all  $\sket{\varphi^\rD} \propto  \sum_{n=0}^{N'} c_n \sket{\psi_n^\rD}$.
 The number of exponentials of $(x^\rD)^2$ and $(p^\rD)^2$ in the product $(U_p^\rD(s))^k$ is
 ${\cal M}= O(\exp(\gamma \sqrt{\log (N' /\epsilon)}))$, where $\gamma >0$ is a constant.
   The number of
  two-qubit gates to simulate $(U_p^\rD(s))^k$ within precision $\epsilon$ is 
 $ \tilde {\cal M}= O( \exp(\tilde \gamma \sqrt{\log (N' /\epsilon)}))$, where $\tilde \gamma >0$ is a constant.
  \end{theorem}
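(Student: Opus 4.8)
The plan is to reduce the general initial state to the single--eigenstate estimate already supplied by Lemma~\ref{lem:TSA}, i.e. Eq.~\eqref{eq:TSAdiscrete}, and then to propagate two separate error budgets---the one coming from summing over the $N'+1$ Hermite components, and the one coming from synthesizing each exponential out of two-qubit gates---while checking that neither pushes the parameters out of the claimed subpolynomial regime. First I would write $\sket{\varphi^\rD} = \alpha \sum_{n=0}^{N'} c_n \sket{\psi_n^\rD}$, where $\alpha$ is the normalization constant; since $\sum_n |c_n|^2 = 1$, Eq.~\eqref{eq:approxnorm} gives $|\alpha - 1| = \exp(-\Omega(N))$, so $|\alpha| = 1 + o(1)$. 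Linearity of $(U_p^\rD(s))^k - U^\rD(t)$ together with the triangle inequality and Cauchy--Schwarz ($\sum_{n=0}^{N'} |c_n| \le \sqrt{N'+1}$) then yields
\begin{align}
\| [(U_p^\rD(s))^k - U^\rD(t)] \sket{\varphi^\rD} \| \le |\alpha| \sqrt{N'+1} \; \max_{n \le N'} \| [(U_p^\rD(s))^k - U^\rD(t)] \sket{\psi_n^\rD} \| \; ,
\end{align}
so it suffices to control each single--state error at level $O(\epsilon/\sqrt{N'})$.

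Next I would invoke Lemma~\ref{lem:TSA} with a rescaled target precision $\epsilon' = \Theta(\epsilon/\sqrt{N'})$ in place of $\epsilon$, which makes the single--state error $O(\epsilon')$ uniformly for all $n \le N'$ and hence the total error $O(|\alpha| \sqrt{N'}\, \epsilon') = O(\epsilon)$. The decisive point is that replacing $\epsilon$ by $\epsilon'$ only shifts the argument of the logarithm: $\log(N'/\epsilon') = \log((N')^{3/2}/\epsilon) = \Theta(\log(N'/\epsilon))$. Consequently $p = \Theta(\sqrt{\log(N'/\epsilon')}) = \Theta(\sqrt{\log(N'/\epsilon)})$, $|s| = \Theta(5^{-p})$, and $N = \exp(O(\sqrt{\log(N'/\epsilon)})) + O(N')$ all remain in exactly the stated asymptotic classes. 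Here I would use $t = O(1)$ (so $|t|$ drops out of the logarithm) and choose the $O(N')$ term in $N$ with a constant $1/c$, guaranteeing $N \ge N'/c$ and thus the low-energy condition $N' \le cN$ needed for Eqs.~\eqref{eq:approxnorm} and~\eqref{eq:TSAdiscrete}.

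For the number of exponentials $\cM$, I would count directly from the recursion Eq.~\eqref{eq:TSArecdef}: each increment of the order multiplies the number of exponentials in $U_p^\rD$ by a constant factor of $5$, so $U_p^\rD(s)$ contains $O(5^p)$ exponentials of $(x^\rD)^2$ and $(p^\rD)^2$, while the number of Trotter steps is $k = t/s = \Theta(5^p)$ since $|t| = O(1)$ and $|s| = \Theta(5^{-p})$. Therefore
\begin{align}
\cM = O(5^p) \cdot O(5^p) = O(5^{2p}) = O(\exp(\gamma \sqrt{\log(N'/\epsilon)})) \; ,
\end{align}
with $\gamma > 0$ a constant proportional to $\ln 5$, using $p = \Theta(\sqrt{\log(N'/\epsilon)})$.

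Finally, for the gate count $\tilde\cM$ I would simulate each exponential of $(x^\rD)^2$ or $(p^\rD)^2$ within precision $\delta = \epsilon/\cM$ using $O(\mathrm{polylog}(N|t|/\delta))$ two-qubit gates, as established in Sec.~\ref{sec:TSdecomp} (the $(p^\rD)^2$ factor incurring an extra $O(\mathrm{polylog}(N))$ for the pair of centered Fourier transforms); subadditivity over the $\cM$ exponentials keeps the total synthesis error at $O(\epsilon)$. Because $\log N = O(\log(N'/\epsilon))$ and $\log(1/\delta) = \log(\cM/\epsilon) = O(\log(N'/\epsilon))$, the polylog factor is $(\log(N'/\epsilon))^{O(1)}$, so
\begin{align}
\tilde\cM = \cM \cdot (\log(N'/\epsilon))^{O(1)} = O(\exp(\tilde\gamma \sqrt{\log(N'/\epsilon)})) \; ,
\end{align}
where a constant $\tilde\gamma > \gamma$ absorbs the polylog into the exponential via $c \ln L = o(\sqrt{L})$. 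The main obstacle---beyond quoting Lemma~\ref{lem:TSA}, which carries the genuine Lie-algebraic content in the norms of nested commutators---is precisely this self-consistency bookkeeping: I must verify that the $\sqrt{N'}$ loss from the component sum and the $1/\cM$ loss in per-gate precision enter only inside logarithms, so that $p$, $N$, and hence $\cM$ and $\tilde\cM$ stay subpolynomial rather than blowing up.
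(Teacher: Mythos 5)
Your proposal is correct and follows the same route as the paper: reduce to Lemma~\ref{lem:TSA} [Eq.~\eqref{eq:TSAdiscrete}], count ${\cal M}=O(5^p k)=O(5^{2p})$ exponentials using $5^p|s|=\Theta(1)$ and $k=t/s$ with $|t|=O(1)$, and absorb the per-exponential ${\rm polylog}$ synthesis cost (each exponential simulated to precision $O(\epsilon/{\cal M})$) into the exponential via $e^{\gamma\sqrt{x}}x^k\le e^{\tilde\gamma\sqrt{x}}$. The one place where you do more than the paper is the superposition step: the paper declares the first claim a ``direct consequence'' of Eq.~\eqref{eq:TSAdiscrete}, charging only the $\nu_1(N)$ non-orthonormality error, whereas the triangle inequality genuinely incurs a factor $\sum_{n}|c_n|\le\sqrt{N'+1}$, because the per-component error vectors $[(U_p^\rD(s))^k-U^\rD(t)]\sket{\psi_n^\rD}$ need not be mutually orthogonal and can in principle add coherently. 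Your repair---invoking Lemma~\ref{lem:TSA} at rescaled precision $\epsilon'=\Theta(\epsilon/\sqrt{N'})$ and checking that this only shifts the argument of the logarithm, $\log(N'/\epsilon')=\Theta(\log(N'/\epsilon))$, so that $p$, $N$, $s$, ${\cal M}$, and $\tilde{\cal M}$ all remain in the stated asymptotic classes---is exactly the right fix and costs nothing asymptotically, precisely because the precision enters all parameters only logarithmically. So your write-up is, if anything, slightly more careful than the paper's own proof on this point; the exponential-count and gate-count bookkeeping match the paper's argument essentially verbatim.
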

 \begin{proof}
 The first result is a direct consequence of Eq.~\eqref{eq:TSAdiscrete}, for the same choices of $N$, $p$, and $s$
 as in Lemma~\ref{lem:TSA} -- see Eq.~\eqref{eq:spoptimal2} in Appx.~B. The only additional error is that coming from the fact that the states $\sket{\psi_n^\rD}$
 are not exactly an orthonormal basis for $n \le N'$. Such an error is order $\nu_1(N)$, and the choice of  
  $N$ implies
 that this error is negligible if $|t| \ge 1$. 
 
 The definition of $U^\rD_p$  implies that the total number of exponentials
 of $(x^\rD)^2$ and $(p^\rD)^2$ in $U^\rD_p(s)$ is bounded from above by $5^p$ [Eq.~\eqref{eq:TSArecdef}].
  Also, the choice of $s$ in Lemma~\ref{lem:TSA} satisfies $5^p |s| =\Theta(1)$. Then, the total number of exponentials in $(U^\rD_p(s))^k$      
 is ${\cal M}= O(5^p k) = O(|t| 5^{2p})$, where $k=t/s$. That is, ${\cal M}= O(|t| \exp(\gamma \sqrt{\log (N' |t|/\epsilon)}))$, for some
 constant $\gamma >0$.
 
 To achieve overall precision $\epsilon$, each exponential of $(x^\rD)^2$ or $(p^\rD)^2$ in $(U^\rD_p(s))^k$
 has to be simulated within precision $O(\epsilon/{\cal M})$. Then, the above discussion on the cost of implementing
 diagonal unitaries implies that the number 
 of two-qubit gates for each exponential is $O({\rm polylog}(N|s|{\cal M}/\epsilon))$
 or $O({\rm polylog}(N|t|5^p/\epsilon))$. [Note that $|s_p| = O(s)$ and $(s-4s_p)=O(s)$.]
 The choice of $N$ and $p$ imply that $\log (N|t|5^p/\epsilon) = O(\log (|t|/\epsilon)) + O( \sqrt{\log (N' |t|/\epsilon)}) + O(\log N')$,
 and we can safely bound this quantity by $O(\log(N' |t|/\epsilon))$. Then, the total number
 of two-qubit gates is $O({\cal M} \; {\rm polylog} (N' |t|/\epsilon) )$. That is, there exists a constant $\gamma >0$
 such that the   number of two-qubit gates to simulate $(U^\rD_p(s))^k$ within precision $\epsilon$
 is $\tilde {\cal M} = O(|t| \exp(\gamma \sqrt{\log (N' |t|/\epsilon)}) {\rm polylog}(N' |t|/\epsilon))$. 
 We let $x \ge 1$ and $k \ge 0$ a constant. Then, there exists a constant $\tilde \gamma$ such
 that $e^{\gamma \sqrt x} x^k \le e^{\tilde \gamma \sqrt x}$. It follows that $\tilde {\cal M}= O(|t| \exp(\tilde \gamma \sqrt{\log (N' |t|/\epsilon)}))$
 and the result is obtained setting $t=O(1)$.
 \end{proof}
 
 We illustrate the results of this section with several numerical simulations.
 In Fig.~\ref{fig:TS1} we show the error dependence of the Trotter-Suzuki approximation as a function of $n$ for fixed $s$ and $p$.
 A worst-case analysis indicates that this error would be  $O(n^{(2p+1)/2})$~\cite{berry_efficient_2007,berry_sparse}. 
  However, a linear scaling in $n$ is observed, for $n \le N/2$. This is a main reason for the quantum speedup; see Appx~B.
 In Fig.~\ref{fig:TS2} we choose $p$ and $s$ according to Thm.~\ref{thm:MR2} and show that the approximation error is indeed
 bounded by $\epsilon$.
 
 \begin{figure}[htbp]
\includegraphics[width=8cm]{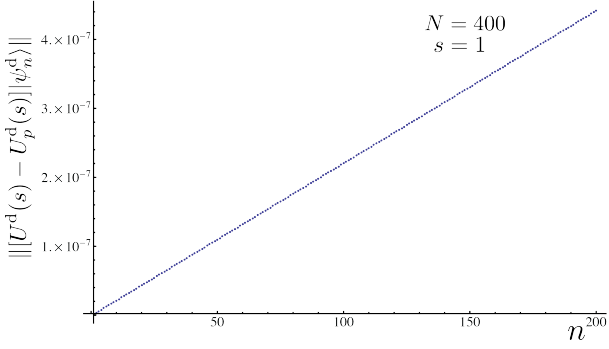}
\caption{ \label{fig:TS1} The difference between $U^\rD(s) \sket{\psi_n^\rD}$ and $U^\rD_p(s) \sket{\psi_n^\rD}$ as a function of $n$, for $N=400$, $p=4$, and $s=1$.
 The scaling is almost linear in $n$, as suggested by Lemma~\ref{lemma:TSA} in Appx.~B.
 Additional simulations also observe a linear scaling in $n$ for higher values of $p$.} 
\end{figure}

\begin{figure}[htbp]
\includegraphics[width=8cm]{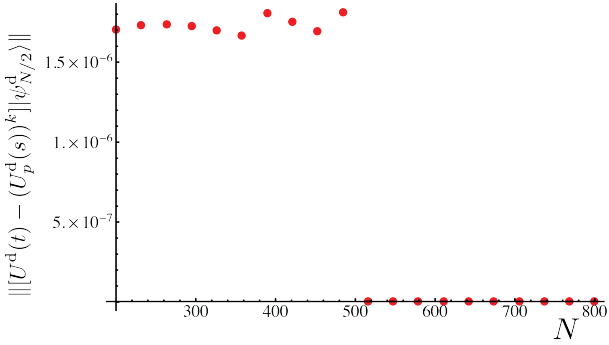}
\caption{ \label{fig:TS2}
The difference between $U^\rD(t) \sket{\psi_n^\rD}$ and $(U^\rD_p(s))^k \sket{\psi_n^\rD}$ as a function of the dimension,
for $t=\pi/2$ and $n=N/2$. The values of $p$ and $s$ where set as determined by Thm.~\ref{thm:MR2}. That is, according to Lemma~\ref{lemma:TSA2}
in Appx.~B, we chose
$p=\lceil \sqrt{\log((n+2)t/\epsilon)/(2 \log(5))} \rceil$, $k = \lceil t /(\epsilon/((n+2)t)^{1/(2p)}) \rceil$, and $s=t/k$, for $\epsilon=0.001$.
Note that with these choices, $s = \exp(-O(\sqrt{\log (N t/\epsilon)}))$ and the number of terms in the product $(U^\rD_p(s))^k$
is ${\cal M}= k5^p = O(t \exp(\gamma \sqrt{\log (N t/\epsilon)}))$, for some $\gamma>0$. The error induced 
by the Trotter-Suzuki approximation in this case is smaller than $\epsilon$. The ``jump'' at $N \approx 500$ corresponds to an increase of $p$ from 2 to 3,
due to the increase in $N$.
}
\end{figure}

\subsection{Subexponential-time quantum algorithm for computing scattering amplitudes}
\label{sec:results}
We can combine Thms.~\ref{thm:MR1} and~\ref{thm:MR2} to construct a quantum algorithm
to simulate the QHO and obtain the desired propagators $\sbra{\varphi'} U(t) \sket{\varphi}$. The quantum algorithm
has three basic steps: i- the preparation of an initial state, ii- the implementation or simulation of the evolution operator, 
and iii- a projective measurement. The desired expectation value can be obtained within arbitrary accuracy after repeated
executions of these steps. The simulation of the evolution operator induced by $H^\rD$ was discussed in Sec.~\ref{sec:TSdecomp}. 
We let $W^\rD_\varphi$ and $W^\rD_{\varphi'}$ be
the $N \times N$ unitary matrices that prepare the states $\sket{\varphi^\rD}$ and $\sket{\varphi'^\rD}$, from e.g. $\sket 0$,
respectively. We also let $V^\rD(t) = (W^\rD_{\varphi'})^\dagger U^\rD(t) W^\rD_\varphi$, so the propagator of interest
is $\sbra 0 V^\rD(t) \sket 0$.
To measure such expectation values at precision $\epsilon$, we can implement the quantum circuit in Fig.~\ref{fig:algorithm}, $O(1/\sqrt \epsilon)$
times~\cite{somma_physics_2002,somma_physics_2003}. Otherwise, we can implement the quantum methods described in~\cite{knill_expectation_2007} for optimal quantum measurements of overlaps,
where the number of repetitions can be improved to $O(1/\epsilon)$. 
\begin{figure}[htbp]
\includegraphics[width=8cm]{algorithmFig.pdf}
\caption{\label{fig:algorithm}
Quantum circuit to compute the propagator $\sbra 0 V^\rD(t) \sket 0$, where $V^\rD(t)$ is unitary~\cite{somma_physics_2002,somma_physics_2003}.
The filled circle denotes the controlled operation on the state $\sket 1$ of the ancilla qubit; that is,  
the corresponding unitary operation is $\one \otimes \sket 0 \sbra 0 + V^\rD(t) \otimes \sket 1 \sbra 1$. H denotes
the Hadamard gate that maps the state of a qubit as ${\rm H} \ket 0 = (\ket 0 + \ket 1)/\sqrt 2$ and
${\rm H} \ket 1 = (\ket 0 - \ket 1)/\sqrt 2$. The single qubit operator $\sigma^+$ is $\sigma_x + i \sigma_y$, with
$\sigma_\alpha$ the Pauli operators. Since $\sigma^+$ is not Hermitian, the computation of its expectation
value can be done by repeated projective measurements of $\sigma_x$ and $\sigma_y$ independently (i.e., with
repeated executions of the  circuit).} 
\end{figure}

In the rest of this section we assume that the precision $\epsilon \ll 1$ is constant, i.e., $\epsilon =O(1)$.
We also assume that $W^\rD_\varphi$ and $W^\rD_{\varphi'}$ can be efficiently implemented
using $O({\rm polylog}(N))$ gates. Then, Thms.~\ref{thm:MR1} and~\ref{thm:MR2} imply
 \begin{theorem}
 \label{thm:MR3}
 Let $ \sket{\varphi} = \sum_{n=0}^{N'} c_n \sket{\psi_n}$, $ \sket{\varphi'} = \sum_{n=0}^{N'} c'_n \sket{\psi_n}$ or
  $ \sket{\varphi'} = (2 \pi/N)^{1/4}\sket{x_j}$ and $t$ the evolution time. Assume $|t| \ge 1$ and, with no loss of generality, $t = O(1)$.
  Then, there exists a quantum algorithm $\cal Q$ that outputs $\sbra{\varphi'} U(t)  \sket{\varphi}$
  at arbitrary accuracy.   $\cal Q$ requires ${\cal M}_{\cal Q}=O(\exp (\gamma ' \sqrt{\log (N' )}))$ two-qubit gates, where $\gamma '>0$ is a constant. 
   \end{theorem}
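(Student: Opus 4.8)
The plan is to assemble the algorithm $\cal Q$ from the three ingredients already in place and then track how their error and gate budgets combine. Following Fig.~\ref{fig:algorithm}, $\cal Q$ prepares $\sket{\varphi^\rD}$ with $W^\rD_\varphi$, applies the evolution controlled on the ancilla, and reads off $\sbra 0 V^\rD(t)\sket 0$ with the Hadamard test, where $V^\rD(t)=(W^\rD_{\varphi'})^\dagger U^\rD(t) W^\rD_\varphi$; in the actual implementation the exact $U^\rD(t)$ is replaced by the Trotter-Suzuki product $(U^\rD_p(s))^k$ of Thm.~\ref{thm:MR2}. First I would fix $\epsilon=O(1)$ as stated and invoke Thm.~\ref{thm:MR1} to choose the dimension so that the discrete amplitude $\sbra{\varphi'^\rD}U^\rD(t)\sket{\varphi^\rD}$ is within $O(\epsilon)$ of the target $\sbra{\varphi'}U(t)\sket{\varphi}$; this asks only for $N=O(\log(1/\epsilon)+N')=O(N')$ when $\epsilon$ is constant. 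I would then invoke Thm.~\ref{thm:MR2} with the same $\epsilon$, which forces the (larger) choice $N=\exp(O(\sqrt{\log N'}))+O(N')$, $p=\Theta(\sqrt{\log N'})$, and $|s|=\Theta(5^{-p})$, so that replacing $U^\rD(t)$ by $(U^\rD_p(s))^k$ on $\sket{\varphi^\rD}$ incurs a further $O(\epsilon)$ error. Subadditivity of errors then yields an overall $O(\epsilon)$ deviation in the computed amplitude.

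Next I would account for the gate count. The dominant contribution is the simulation of $(U^\rD_p(s))^k$, which by Thm.~\ref{thm:MR2} costs $\tilde{\cal M}=O(\exp(\tilde\gamma\sqrt{\log N'}))$ two-qubit gates once $\epsilon$ is constant. The state preparations $W^\rD_\varphi$ and $W^\rD_{\varphi'}$ contribute $O({\rm polylog}(N))$ gates each by assumption, and the single controlled-$V^\rD(t)$, the two Hadamards, and the $\sigma^+$ measurement of Fig.~\ref{fig:algorithm} add only $O({\rm polylog}(N))$ overhead. Because the chosen $N$ satisfies $\log N=O(\sqrt{\log N'})+O(\log N')=O(\log N')$, every $O({\rm polylog}(N))$ factor is in fact ${\rm polylog}(N')=(\log N')^{O(1)}$. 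Finally, obtaining the amplitude at the fixed target accuracy requires only $O(1/\sqrt\epsilon)=O(1)$ repetitions of the circuit (or $O(1/\epsilon)$ using the optimal-overlap estimator of Ref.~\cite{knill_expectation_2007}); since $\epsilon$ is an arbitrary constant the systematic error can be made as small as one likes, and the repetition count does not affect the scaling in $N'$.

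The one step that needs care is showing that all the ${\rm polylog}(N')$ overheads fold into the subexponential factor without changing its form. Here I would reuse the elementary bound already exploited at the end of the proof of Thm.~\ref{thm:MR2}: for $x\ge 1$ and any constant $k\ge 0$ there is a constant $\gamma'$ with $e^{\tilde\gamma\sqrt x}\,x^{k}\le e^{\gamma'\sqrt x}$. Taking $x=\log N'$ absorbs the $(\log N')^{O(1)}$ prefactor coming from state preparation and from the per-exponential arithmetic into the exponent, giving the total two-qubit gate count ${\cal M}_{\cal Q}=O(\exp(\gamma'\sqrt{\log N'}))$. The main obstacle is thus bookkeeping rather than a new estimate: one must verify that the many sources of ${\rm polylog}$ overhead are genuinely only polylogarithmic in $N'$ (which follows from $\log N=O(\log N')$) and that setting $\epsilon=O(1)$ legitimately removes the $1/\epsilon$ dependence inherited from Thms.~\ref{thm:MR1} and~\ref{thm:MR2}.
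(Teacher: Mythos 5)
Your proposal is correct and follows essentially the same route as the paper's own proof: combine Thm.~\ref{thm:MR1} (discretization error) and Thm.~\ref{thm:MR2} (Trotter--Suzuki cost) with constant $\epsilon$, run the Hadamard-test circuit of Fig.~\ref{fig:algorithm} on $V^\rD(t)=(W^\rD_{\varphi'})^\dagger (U^\rD_p(s))^k W^\rD_\varphi$ a constant number of times, and note that the gate count is dominated by the simulation of $(U^\rD_p(s))^k$. Your explicit absorption of the ${\rm polylog}(N')$ overheads via $e^{\tilde\gamma\sqrt x}x^k\le e^{\gamma'\sqrt x}$ is the same bookkeeping step the paper performs inside the proof of Thm.~\ref{thm:MR2}, so no new ideas are needed and none are missing.
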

  \begin{proof}
  Let $N=\exp(O( \sqrt{\log(N' |t|)})) + O(N')$, $p=O(\sqrt{\log(N' |t|)})$, and $|s| = \exp(-O( \sqrt{\log(N' |t|)}))$ as in
  Lemma~\ref{lem:TSA} or the proof of Thm.~\ref{thm:MR2}, for $\epsilon = O(1)$. The
  subadditivity property of errors implies ($k=t/s$)
  \begin{align}
  \nonumber
  \sbra{\varphi'} U(t)  \sket{\varphi} & = \sbra{\varphi'^\rD} (U^\rD_p(s))^k  \sket{\varphi^\rD} + O(\epsilon) \\
& =  \sbra 0 (W_{\varphi'}^\rD)^\dagger (U^\rD_p(s))^k {W_\varphi^\rD} \sket{0} + O(\epsilon) \; ,
  \end{align}
  with $\epsilon \ll 1$.
  Then, we can use the circuit of Fig.~\ref{fig:algorithm}, a constant number of times, with a unitary $V^\rD(t) = (W_{\varphi'}^\rD)^\dagger (U^\rD_p(s))^k {W_\varphi^\rD}$
  to output the desired propagator. Since $W_{\varphi'}^\rD$ and $W_{\varphi}^\rD$ can be efficiently implemented,
  the complexity of the quantum algorithm is dominated by the number of two-qubit gates needed to implement $(U^\rD_p(s))^k$
  within the desired accuracy. Theorem~\ref{thm:MR2} implies that such a number is $O(|t| \exp (\gamma ' \sqrt{\log (N' |t|)}))$, for
  some constant $\gamma ' >0$. The result follows by assuming, with no loss of generality, $t=O(1)$.
  \end{proof}
  
 It is important to remark that the complexity of the quantum algorithm
  satisfies 
  \begin{align}
  \lim_{N' \rightarrow \infty} {\cal M}_{\cal Q}/N'^\eta=0 \; ,
  \end{align}
    for all $\eta >0$, and $\lim_{N' \rightarrow \infty} \log( N')/{\cal M}_{\cal Q}=0$.
  That is, the complexity of the quantum algorithm is subexponential in $\log (N')$. Since classical algorithms are expected to require
  ${\rm poly}(N')$ operations to compute the propagator in the worst case, our quantum algorithm provides a superpolynomial quantum speedup. 

\section{Eigenstate preparation}
\label{sec:ESprep}
 We now investigate ways of simulating and preparing low-energy
  eigenstates of $H^\rD$,
 via the action of unitary operations acting
 on simple initial states. In part, this section    addresses
 the second goal of a QS, namely the computation
 of expectation values on various eigenstates of the QHO,
 which can be obtained using the techniques presented in previous sections
 if we replace the initial state $\sket{\varphi}$ by the corresponding $\sket{\psi_n}$
 (or $\sket{\varphi^\rD}$ by $\sket{\phi_n^\rD}$).
 The results of this section may be of independent interest; e.g.,
  quantum algorithms to prepare states with Gaussian-like
 amplitudes are important in other cases~\cite{KW2008}.
 
%
 
 We first focus on the preparation of the ground state $\sket{\phi_0^\rD}$.
 In Appx.~C, Lemma~\ref{lemma:GSpreparation}, we prove in the large-$N$ limit, 
 \begin{align}
  \label{eq:gaussianpreparationerror}
 \| \sket{\psi_0^\rD} - e^{-i \alpha(t)} e^{i(x^\rD)^2 t'} e^{i(p^\rD)^2 t} \sket{\varphi^\rD} \| = O(\exp(-\Omega(\delta))) \; ,
 \end{align}
 where the initial state is 
 \begin{align}
(1/\sqrt \kappa) \sum_{j=-N/2}^{N/2} \exp(-j^2/(2 \delta)) \ket j \; ,
 \end{align}
 and $\delta >0$. The constant $\kappa$ is for normalization purporses and $\alpha(t)$
 is an irrelevant global phase that can be computed exactly. The evolution times
 satisfy $t= \sqrt{\sigma^2(2-4 \sigma^2)}/2$ and $t'= 1/(4t + 4 \sigma^2/t)$, and $\sigma^2 = \pi \delta/N$.
 This result was obtained by realizing that in CVs, the quantum state $\sket{\psi_0}$ can be obtained
 from an initial state with Gaussian-like amplitudes by evolving with the free-particle Hamiltonian (i.e., $-\hat p^2$).
 The result follows by approximating the CV case after a proper discretization.
 
 Lemma~\ref{lemma:GSpreparation} allows us to state the first result of this section.
 \begin{theorem}
 \label{thm:groundstatepreparation}
 Let $\epsilon>0$. Then, there is a unitary $W^\rD$  that satisfies
 \begin{align}
 \| \sket{\phi_0^\rD} - W^\rD \ket 0 \| = O(\epsilon)
 \end{align}
 in the large $N$ limit. $W^\rD$ can be implemented on a quantum 
 computer using a number of two-qubit gates that is polynomial in $\log(N/\epsilon)$.
 \end{theorem}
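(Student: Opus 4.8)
The plan is to build $W^\rD$ as a short composition of three efficiently implementable pieces and to bound its error against the true ground state by a single application of the triangle inequality. First I would observe that $\sket{\phi_0^\rD}$ is exponentially close to the discrete Hermite state $\sket{\psi_0^\rD}$: taking $n=0$ in Eq.~\eqref{eq:approxeigenstate} gives $\|\sket{\phi_0^\rD}-\sket{\psi_0^\rD}\|=\exp(-\Omega(N))$, which is negligible in the large-$N$ limit, so it suffices to prepare $\sket{\psi_0^\rD}$ to accuracy $O(\epsilon)$. Lemma~\ref{lemma:GSpreparation} already writes $\sket{\psi_0^\rD}$ as the image of a Gaussian-amplitude state $\sket{\varphi^\rD}$ under $e^{-i\alpha(t)}e^{i(x^\rD)^2 t'}e^{i(p^\rD)^2 t}$, up to error $O(\exp(-\Omega(\delta)))$ [Eq.~\eqref{eq:gaussianpreparationerror}]. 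I would therefore define
\[
W^\rD := e^{-i\alpha(t)}\, e^{i(x^\rD)^2 t'}\, e^{i(p^\rD)^2 t}\, W^\rD_{\rm G}\,,
\]
where $W^\rD_{\rm G}$ is a unitary mapping $\ket 0$ to the Gaussian-amplitude state of Eq.~\eqref{eq:gaussianpreparationerror}.

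The error analysis is then immediate from subadditivity: $\|\sket{\phi_0^\rD}-W^\rD\ket 0\|$ is at most $\|\sket{\phi_0^\rD}-\sket{\psi_0^\rD}\|+\|\sket{\psi_0^\rD}-e^{-i\alpha(t)}e^{i(x^\rD)^2 t'}e^{i(p^\rD)^2 t}\sket{\varphi^\rD}\|$, i.e. $\exp(-\Omega(N))+O(\exp(-\Omega(\delta)))$. Choosing the width parameter $\delta=\Theta(\log(1/\epsilon))$ makes the second term $O(\epsilon)$ while the first is already negligible, giving the claimed bound. One must keep $\delta\ll N$ so that $\sigma^2=\pi\delta/N\ll 1$ stays in the regime of Lemma~\ref{lemma:GSpreparation} (and $\sigma^2<1/2$ for the time $t$ to be real); this holds because $N$ is large and $\delta$ grows only logarithmically in $1/\epsilon$.

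It remains to bound the gate count by $O({\rm polylog}(N/\epsilon))$. The operator $e^{i(x^\rD)^2 t'}$ is diagonal with efficiently computable entries, so exactly as in Sec.~\ref{sec:TSdecomp} it can be simulated within precision $O(\epsilon)$ using $O({\rm polylog}(N|t'|/\epsilon))$ two-qubit gates; the time $t'$ enters only through the precision of the phase, so even though $t'=\Theta(\sqrt{N/\delta})$ diverges with $N$ the cost stays polylogarithmic. The operator $e^{i(p^\rD)^2 t}$ costs the same, since $(p^\rD)^2=(F^\rD_{\rm c})^{-1}(x^\rD)^2 F^\rD_{\rm c}$ and the centered Fourier transform is implementable with $O({\rm polylog}(N))$ gates, while $t=\Theta(\sigma)=O(1)$. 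The global phase $e^{-i\alpha(t)}$ is physically irrelevant and can be tracked classically.

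The main obstacle is the efficient realization of $W^\rD_{\rm G}$, namely preparing a state whose amplitudes follow a truncated discretized Gaussian starting from $\ket 0$. Because the Gaussian has efficiently computable partial integrals (the error function), standard amplitude-loading techniques for efficiently integrable distributions~\cite{KW2008} prepare it to precision $O(\epsilon)$ using $O({\rm polylog}(N/\epsilon))$ gates, with the width $\delta=\Theta(\log(1/\epsilon))$ entering only polylogarithmically. Assembling the three pieces and invoking subadditivity once more for the individual implementation errors then yields a unitary $W^\rD$ of total gate count $O({\rm polylog}(N/\epsilon))$ with $\|\sket{\phi_0^\rD}-W^\rD\ket 0\|=O(\epsilon)$, as claimed.
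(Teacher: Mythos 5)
Your proposal is correct and follows essentially the same route as the paper's proof: replace $\sket{\phi_0^\rD}$ by $\sket{\psi_0^\rD}$ at exponentially small cost, invoke Lemma~\ref{lemma:GSpreparation} with $\delta = \Theta(\log(1/\epsilon))$, define $W^\rD$ as the two quadratic-phase exponentials composed with a Gaussian-state-preparation unitary, and bound the gate count via Sec.~\ref{sec:TSdecomp} (noting that $t'$ grows with $N$ but enters the cost only logarithmically). The single cosmetic difference is the loading of the initial Gaussian: the paper first truncates $\sket{\varphi^\rD}$ to a window of $j_0=O(\log(1/\epsilon))$ points and prepares that small-support state by generic means, whereas you apply the integrable-distribution technique of~\cite{KW2008} to the full-support state; both give the claimed ${\rm polylog}(N/\epsilon)$ complexity.
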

 
\begin{proof}
 First, we choose $\delta = O(\log(1/\epsilon))$ so that the right hand side of Eq.~\eqref{eq:gaussianpreparationerror}
 is $O(\epsilon)$.
 In the large $N$ limit, $\sket{\psi_0^\rD}$ can be safely replaced by $\sket{\phi_0^\rD}$
 in Eq.~\eqref{eq:gaussianpreparationerror}, as the error of this replacement is exponentially small in $N$ and thus negligible.
 Next, we note that we can approximate $\sket{\varphi^\rD}$ within error $O(\epsilon)$ by
 \begin{align}
 \label{eq:initialgaussianapprox}
\propto  \sum_{j=-j_0}^{j_0} e^{-j^2/(2\delta)} \ket j \;,
\end{align}
with $j_0 = O(\sqrt{\log(1/\epsilon)\delta })=O(\log(1/\epsilon))$.
The state of Eq.~\eqref{eq:initialgaussianapprox} can be prepared with complexity polynomial in 
$\log(1/\epsilon)$ using standard techniques. We write $V^\rD$ for the unitary that prepares such a state
and define $W^\rD = e^{-i \alpha(t)} e^{i(x^\rD)^2 t'} e^{i(p^\rD)^2 t} V^\rD$, with the choices of $t$
and $t'$ given above.
Because $t = O(1/\sqrt N)$ and $t' = O(\sqrt N)$ in the large $N$ limit,
the unitaries $e^{i(x^\rD)^2 t'}$ and  $e^{i(p^\rD)^2 t}$ can be implemented on a quantum computer
with complexity polynomial in $\log(1/\epsilon)$ and $\log( N)$ [i.e., polynomial in $\log(N/\epsilon)$] -- see Sec.~\ref{sec:TSdecomp}.
\end{proof}
 
In Fig.~\ref{fig:GroundStateError} we show the exponential decay of the error as a function of $\delta$, as stated by the theorem.
We note that quantum methods to prepare states with Gaussian-like amplitudes were also proposed in~\cite{KW2008,GR2002}.
 
 \begin{figure}[htbp]
\includegraphics[width=8cm]{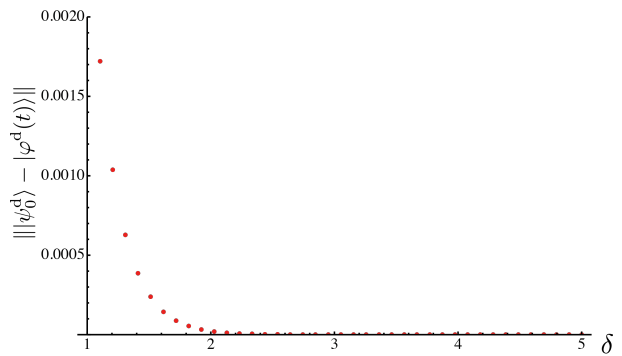}
\caption{\label{fig:GroundStateError} The norm of the difference between the state $\sket{\psi_0^\rD}$ and the evolved state $\sket{\varphi^\rD(t)}=
e^{-i \alpha(t)} e^{i (x^\rD)^2 t'} e^{-i (p^\rD)^2 t}\sket{\varphi^\rD}$, as a function of $\delta$ and for $N=800$.
Numerical simulations do not show significant changes for larger dimensions.}
\end{figure}

  To prepare the other eigenstates $\sket{\phi_n^\rD}$, with $n \ge 1$, we define
  a discrete version of the Jaynes-Cummings (JC) model:
  \begin{align}
  \label{eq:DJCH}
  H^\rD_{\rm JC} = (x^\rD \otimes \sigma_x - p^\rD \otimes \sigma_y)/\sqrt 2 \; ,
  \end{align}
 where $\sigma_\alpha$ are the corresponding Pauli operators acting on
 a the Hilbert space of an ancillary qubit. In CV, the evolution induced by the JC
 model eventually transforms the state $\sket{\psi_n}   \ket 0$ into $\sket{\psi_{n+1}}   \ket 1$,
 providing a unitary operation to prepare eigenstates of the QHO from $\sket{\psi_0}$.
 We will show that something similar occurs in the discrete case.
 
 For $n \ge 0$, we define the normalized states
 \begin{align}
 \sket{\gamma^\rD_{n,\pm}} = \frac 1 {\sqrt 2} [\sket{\phi_n^\rD} \ket 0 \pm \sket{\phi_{n+1}^\rD} \ket 1 ] \; .
 \end{align}
 These are approximations of the eigenstates of  $H^\rD_{\rm JC}$.
 In Appx.~D, Lemma~\ref{lem:DJCH1}, we show that if $n \le N' \le cN$, for some constant $c>0$,
 \begin{align}
 \label{eq:DJCHspectrum}
 \| (H^\rD_{\rm JC} \mp \sqrt{n+1} ) \sket{\gamma^\rD_{n,\pm}} \| = \nu_1(N)  \; .
 \end{align}
 We use Eq.~\eqref{eq:DJCHspectrum} to prove the second result of this section.
 \begin{theorem}
 \label{thm:eigenstatepreparation}
 Let $\epsilon >0$ and $t_n = \pi/(2 \sqrt{n+1})$. Then, there exists $N = O(\log(1/\epsilon)+ N') $
 such that
 \begin{align}
 \| e^{-i H_{\rm JC}^\rD {t_n}} \sket{\phi_n^\rD} \ket 0 + i  \sket{\phi_{n+1}^\rD} \ket 1 \| = O(\epsilon) \; ,
 \end{align}
 for all $n \le N'$.
 \end{theorem}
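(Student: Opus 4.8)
The plan is to reduce the evolution to the two-dimensional subspace spanned by the approximate Jaynes--Cummings eigenstates $\sket{\gamma^\rD_{n,+}}$ and $\sket{\gamma^\rD_{n,-}}$, on which $H^\rD_{\rm JC}$ acts almost diagonally with eigenvalues $\pm\sqrt{n+1}$ by Lemma~\ref{lem:DJCH1}. First I would record the two exact identities that invert the definition of $\sket{\gamma^\rD_{n,\pm}}$,
\begin{align}
\sket{\phi_n^\rD}\ket 0 & = \tfrac 1 {\sqrt 2}(\sket{\gamma^\rD_{n,+}} + \sket{\gamma^\rD_{n,-}}) \;, \\
\sket{\phi_{n+1}^\rD}\ket 1 & = \tfrac 1 {\sqrt 2}(\sket{\gamma^\rD_{n,+}} - \sket{\gamma^\rD_{n,-}}) \;,
\end{align}
so that the input state is an equal superposition of the two approximate eigenstates, whose evolutions I can then follow independently.

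Next I would turn the approximate eigenvalue relation~\eqref{eq:DJCHspectrum} into an approximate evolution. The elementary gap-free estimate $\|(e^{-iHt}-e^{-i\lambda t})\ket\xi\| \le |t|\,\|(H-\lambda)\ket\xi\|$, valid for any Hermitian $H$ and real $\lambda$ and proved by integrating $\tfrac{d}{ds}\,e^{i(H-\lambda)s}\ket\xi$ and using that $H-\lambda$ commutes with its own exponential, applies directly with $\ket\xi = \sket{\gamma^\rD_{n,\pm}}$ and $\lambda = \pm\sqrt{n+1}$, giving $\|(e^{-iH^\rD_{\rm JC}t_n}-e^{\mp i\sqrt{n+1}\,t_n})\sket{\gamma^\rD_{n,\pm}}\| \le |t_n|\,\nu_1(N)$. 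Substituting $t_n = \pi/(2\sqrt{n+1})$ makes $\sqrt{n+1}\,t_n = \pi/2$, so $e^{\mp i\sqrt{n+1}\,t_n} = \mp i$, and recombining with the identities above yields
\begin{align}
\nonumber
e^{-iH^\rD_{\rm JC}t_n}\sket{\phi_n^\rD}\ket 0 & = -\tfrac i {\sqrt 2}(\sket{\gamma^\rD_{n,+}} - \sket{\gamma^\rD_{n,-}}) + O(|t_n|\nu_1(N)) \\
& = -i\,\sket{\phi_{n+1}^\rD}\ket 1 + O(|t_n|\nu_1(N)) \;,
\end{align}
which is the claimed relation up to the error term.

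To close the argument I would bound the accumulated error. Because $t_n \le \pi/2 = O(1)$ for all $n \ge 0$, the right-hand side is $O(\nu_1(N)) = \exp(-\Omega(N))$; the low-energy hypothesis of Lemma~\ref{lem:DJCH1} requires $n+1 \le N'+1 \le cN$, and choosing $N = O(\log(1/\epsilon)+N')$ simultaneously secures this inclusion and drives $\exp(-\Omega(N))$ below $O(\epsilon)$. I would also verify that the residual non-orthonormality of the $\sket{\gamma^\rD_{n,\pm}}$, inherited from the $\exp(-\Omega(N))$ overlaps of the $\sket{\phi_n^\rD}$ in~\eqref{eq:approxnorm}, perturbs the two identities above by only a further $\exp(-\Omega(N))$, which the same choice of $N$ absorbs.

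The main obstacle is making sure the error genuinely remains gap-independent. A naive worry is that, lacking a spectral gap isolating $\pm\sqrt{n+1}$, the evolution could leak amplitude out of the two-dimensional subspace in a way amplified by small denominators; the whole point of the Duhamel-type estimate is that the leakage is controlled by $\|(H^\rD_{\rm JC}\mp\sqrt{n+1})\sket{\gamma^\rD_{n,\pm}}\| = \nu_1(N)$ alone, with no denominator, so no gap is required and the bound survives intact.
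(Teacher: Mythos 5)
Your proposal is correct and follows essentially the same route as the paper's proof: decompose $\sket{\phi_n^\rD}\ket 0$ into the approximate eigenstates $\sket{\gamma^\rD_{n,\pm}}$, convert Eq.~\eqref{eq:DJCHspectrum} into the evolution bound $|t_n|\,\nu_1(N)$ via the gap-free Duhamel-type estimate (which the paper asserts and you prove explicitly), substitute $t_n=\pi/(2\sqrt{n+1})$ to get the phases $\mp i$, recombine, and choose $N=O(\log(1/\epsilon)+N')$ to enforce both $n\le N'\le cN$ and $\exp(-\Omega(N))=O(\epsilon)$. The only cosmetic difference is your closing worry about non-orthonormality: the $\sket{\phi_n^\rD}$ are \emph{exact} eigenstates of the Hermitian matrix $H^\rD$ (Eq.~\eqref{eq:approxnorm} concerns the discrete Hermite states $\sket{\psi_n^\rD}$, not the $\sket{\phi_n^\rD}$), so the inversion identities are exact and that check is vacuous.
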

 \begin{proof}
 We let $N \ge N'/c$. Then, Eq.~\eqref{eq:DJCHspectrum} implies
 \begin{align}
 \| [ e^{-i H_{\rm JC}^\rD {t_n}} - e^{- i (\pm \sqrt{n+1} t_n)} ] \sket{\gamma^\rD_{n,\pm}} \| = |t_n| \nu_1(N)  \; ,
 \end{align}
 for all $n \le N'$. For $t_n=  \pi/(2 \sqrt{n+1}) $, this implies
 $ \| (e^{-i H_{\rm JC}^\rD t_n} \pm i ) \sket{\gamma^\rD_{n,\pm}} \| = \nu_1(N)$,
 and then
  \begin{align}
  \label{eq:DJCStateError}
 \| e^{-i H_{\rm JC}^\rD t_n} \sket{\phi_n^\rD} \ket 0 + i  \sket{\phi_{n+1}^\rD} \ket 1 \| =\nu_1(N)  \; .
 \end{align}
 Then, there is $N = O(\log(1/\epsilon))$ so that the right hand side of Eq.~\eqref{eq:DJCStateError} is $O(\epsilon)$.
\end{proof}
 
 We can combine Thms.~\ref{thm:groundstatepreparation} and~\ref{thm:eigenstatepreparation} to prepare approximations of other eigenstates $\sket{\phi_n^\rD}$, $n \ge 1$,
 by a sequential action of $(\one \otimes  \sigma_x ) . e^{-i H_{\rm JC}^\rD t_n} $ on the initial state $\sket{\phi_0^\rD} \ket 0$.
 The Pauli operator $\sigma_x$ is necessary  to transform $\ket 1 \rightarrow \ket 0$ for the state of the ancilla qubit at each step.
Then,
 \begin{align}
 \label{eq:eigenstateprep1}
\sket{\phi_n^\rD} \ket 0 \approx \prod_{n'=0}^{n-1} (\one \otimes  \sigma_x ) e^{-i H_{\rm JC}^\rD t_{n'}} \sket{\phi_0^\rD} \ket 0\; .
\end{align}
 
 We now seek a quantum algorithm to prepare
 the eigenstates $\sket{\phi_n^\rD}$, up to a given approximation error.
 This requires giving a quantum circuit to approximate each $e^{-i H_{\rm JC}^\rD t_{n'}}$ in Eq.~\eqref{eq:eigenstateprep1}.
 Since the unitaries $e^{-i (x^\rD \otimes \sigma_x) t}$ and $e^{-i (p^\rD \otimes \sigma_y) t}$ can be simulated within precision $\tilde \epsilon$ using
 a number of two-qubit gates that is ${\rm polylog}(N |t|/\tilde \epsilon)$ (Sec.~\ref{sec:TSdecomp}), we will use the Trotter-Suzuki
 approximation. In Appx.~D, Lemma~\ref{lem:DJCevol}, we show that if $s = O(\epsilon / \sqrt{n+1})$ for some $n \le N' \le cN$, then
 \begin{align}
 \| [e^{-i (x^\rD \otimes \sigma_x) s/\sqrt 2} e^{i (p^\rD \otimes \sigma_y) s/\sqrt 2} - e^{-i H_{\rm JC}^\rD s}]\sket{\gamma^\rD_{n,\pm}} \| = O(\epsilon^2) \; .
 \end{align} 
 The proof uses a simple Trotter-Suzuki approximation and the scaling with $\epsilon$ can be improved using higher order approximations.
 We can use this result to prove:
 \begin{theorem}
 Let $\epsilon >0$. Then, there is a quantum circuit $W^\rD$ that satisfies
 \begin{align}
 \| W^\rD \sket{\phi_0^\rD} \ket 0 - \sket{\phi_n^\rD} \ket 0 \| = O(\epsilon) \; 
 \end{align}
 for any given $n \le N' \le cN$, where $N$ is the dimension of the Hilbert space and $c>0$ is a constant. $W^\rD$ can be implemented using a number of two-qubit gates
 that is $O((n^2/\epsilon) {\rm polylog}(N/\epsilon))$.
 \end{theorem}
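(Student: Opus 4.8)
The plan is to realize $W^\rD$ as the Trotterized version of the idealized sequence in Eq.~\eqref{eq:eigenstateprep1} and then bound the overall preparation error by subadditivity. Concretely, I would replace each exact factor $e^{-i H_{\rm JC}^\rD t_{n'}}$, with $t_{n'}=\pi/(2\sqrt{n'+1})$, in Eq.~\eqref{eq:eigenstateprep1} by a product of $k_{n'}$ first-order Trotter factors of the form analyzed in Lemma~\ref{lem:DJCevol}, writing
\begin{align}
V_s := e^{-i (x^\rD \otimes \sigma_x)s/\sqrt 2}\, e^{i (p^\rD \otimes \sigma_y)s/\sqrt 2} \;, \qquad s = t_{n'}/k_{n'} \;,
\end{align}
so that $W^\rD$ is obtained from Eq.~\eqref{eq:eigenstateprep1} by substituting $e^{-i H_{\rm JC}^\rD t_{n'}} \to (V_{t_{n'}/k_{n'}})^{k_{n'}}$, keeping the intervening $\one \otimes \sigma_x$ factors that reset the ancilla from $\ket 1$ to $\ket 0$.

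The core of the argument is a telescoping estimate for a single JC step. Abbreviating $G_s := e^{-i H_{\rm JC}^\rD s}$, I would use
\begin{align}
V_s^{k_{n'}} - G_s^{k_{n'}} = \sum_{j=0}^{k_{n'}-1} V_s^{\,j}\,(V_s - G_s)\, G_s^{\,k_{n'}-1-j} \;,
\end{align}
apply it to $\sket{\gamma^\rD_{n',\pm}}$, and exploit that these are approximate eigenvectors of $H_{\rm JC}^\rD$ by Eq.~\eqref{eq:DJCHspectrum}: each $G_s$ acts on $\sket{\gamma^\rD_{n',\pm}}$ as multiplication by $e^{\mp i\sqrt{n'+1}\,s}$ up to norm error $|s|\,\nu_1(N)$. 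Then $(V_s - G_s)$ acting on (a phase times) $\sket{\gamma^\rD_{n',\pm}}$ is $O((\epsilon')^2)$ by Lemma~\ref{lem:DJCevol} provided $s=O(\epsilon'/\sqrt{n'+1})$, while each $V_s^{\,j}$ is unitary. Summing the $k_{n'}$ terms gives a single-step Trotter error $O(k_{n'}(\epsilon')^2)$ plus a residual $O(k_{n'}^2\,|s|\,\nu_1(N))$ from the eigenvector approximation. Since $t_{n'}=\pi/(2\sqrt{n'+1})$, the choice $s=O(\epsilon'/\sqrt{n'+1})$ forces $k_{n'}=t_{n'}/s=O(1/\epsilon')$, so the leading contribution is $O(\epsilon')$; taking $\epsilon'=\epsilon/n$ makes each JC step accurate to $O(\epsilon/n)$. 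Because $\sket{\phi_{n'}^\rD}\ket 0=(\sket{\gamma^\rD_{n',+}}+\sket{\gamma^\rD_{n',-}})/\sqrt 2$, these bounds transfer to the state actually propagated.

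With the per-step estimate in hand, I would close the proof by chaining the $n$ Trotterized steps and comparing with the idealized sequence of Eq.~\eqref{eq:eigenstateprep1}. By subadditivity the total error is at most the sum of the $n$ Trotter errors $O(\epsilon/n)$, i.e. $O(\epsilon)$, plus the $n$ idealized-step errors of Theorem~\ref{thm:eigenstatepreparation}, each of size $\nu_1(N)=\exp(-\Omega(N))$. Choosing $N=O(\log(n/\epsilon)+N')$ with $N'\ge n/c$ makes $n\,\nu_1(N)=O(\epsilon)$ and simultaneously kills the residual $O(k_{n'}^2\,|s|\,\nu_1(N))$ terms. For the gate count, each factor $V_s$ is simulated to precision $O(\epsilon^2/n^2)$ using ${\rm polylog}(N/\epsilon)$ two-qubit gates by the diagonal-unitary analysis of Sec.~\ref{sec:TSdecomp} applied to $e^{-i(x^\rD\otimes\sigma_x)s/\sqrt2}$ and $e^{i(p^\rD\otimes\sigma_y)s/\sqrt2}$, and the total number of Trotter factors is $\sum_{n'=0}^{n-1}k_{n'}=O(n\cdot n/\epsilon)=O(n^2/\epsilon)$, giving $O((n^2/\epsilon)\,{\rm polylog}(N/\epsilon))$ gates as claimed.

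The main obstacle I anticipate is organizing the error analysis around the fact that both Lemma~\ref{lem:DJCevol} and the single-step telescoping bound are \emph{state-dependent}: they control the error only on the two-dimensional subspaces spanned by $\sket{\gamma^\rD_{n',\pm}}$, not the full operator norm $\|V_s^{k_{n'}}-G_s^{k_{n'}}\|$. I therefore need a two-level telescoping in which, at the outer level, I compare $W^\rD$ with the idealized product of Eq.~\eqref{eq:eigenstateprep1} step by step, always evaluating the difference between the approximate and exact step on the \emph{on-trajectory} state $\sket{\phi_{n'}^\rD}\ket 0$ and pushing the remaining factors through by unitarity. This anchors every use of the state-dependent bound to the exact $\gamma$-subspaces, so that the off-trajectory drift accumulated from earlier approximate steps enters only additively; verifying that this drift together with the $\nu_1(N)$ eigenvector errors remains dominated by the chosen $N$ across all $O(n^2/\epsilon)$ factors is the part of the argument that requires the most care.
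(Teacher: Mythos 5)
Your proposal is correct and follows essentially the same route as the paper: Trotterize each Jaynes--Cummings factor of Eq.~\eqref{eq:eigenstateprep1} into $O(n/\epsilon)$ first-order steps of the form in Lemma~\ref{lem:DJCevol}, choose the step size so each JC evolution is reproduced to accuracy $O(\epsilon/n)$, and sum by subadditivity to get overall error $O(\epsilon)$ with $O(n^2/\epsilon)$ Trotter factors, each costing ${\rm polylog}(N/\epsilon)$ two-qubit gates. The paper's own proof is in fact terser than yours---it directly asserts the per-step error $O(\epsilon/n)$ from Lemma~\ref{lem:DJCevol} with $m=O(n/\epsilon)$ and declares all $\nu_1(N)$ contributions negligible in the asymptotic limit---so your explicit telescoping and on-trajectory handling of the state-dependent bounds supplies details the paper leaves implicit.
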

\begin{proof}
 The quantum circuit is
 \begin{align}
W^\rD =
 \prod_{n'=0}^{n-1} (\one \otimes  \sigma_x ) \left [e^{-i (x^\rD \otimes \sigma_x) t_{n'}/(m\sqrt 2)} e^{i (p^\rD \otimes \sigma_y) t_{n'}/(m\sqrt 2)} \right]^m .
 \end{align}
 Here, $m = O(n/\epsilon)$ so that each term in $W^\rD$ introduces an error $O(\epsilon/n)$ as implied by Lemma~\ref{lem:DJCevol}.
 Because we work in the asymptotic limit, approximation errors that are exponentially small in $N$ or $\sqrt N$ are negligible. Then, 
 \begin{align}
  \| W^\rD \sket{\phi_0^\rD} \ket 0 - \sket{\phi_n^\rD} \ket 0 \| = O((\epsilon/n)n) = O(\epsilon)\; .
 \end{align}
 The number of terms in the product is $O(n^2/\epsilon)$. The number of two-qubit gates is then
 \begin{align}
 O((n^2/\epsilon) {\rm polylog}(N/\epsilon)) \; .
 \end{align}
\end{proof}

\section{The quartic potential}
\label{sec:quartic}
We now analyze quantum algorithms to simulate the evolution operator of a
 quantum system with Hamiltonian
$H = \frac 1 2 (\hat p^2 + \hat x^4)$.
We will use the same discretization as that for the DQHO, where $x^\rD$ and $p^\rD$ where defined in Sec.~\ref{sec:DQHO}. Then,
\begin{align}
\label{eq:DQuartic}
H^\rD = \frac 1 2 ((p^\rD)^2 + (x^\rD)^4) \; .
\end{align} 
In contrast to previous sections, we only conduct a numerical analysis here
and state some observations from numerical results. In part, this is due to not
having an exact solution in this case. Our simulations suggest a polynomial
speedup for the computation of scattering amplitudes.

In Fig.~\ref{fig:x4Eigenvalues} we plot the eigenvalues $E_n^\rD(N)$ of $H^\rD$ for different dimensions $N$ and
as a function of $n=0,1,\ldots,N-1$.
Taking dimension $4000$ as a reference, in Table~\ref{table:x4Eigenvalues} we look for the maximum value of $n$ such that 
$|E_n^\rD (N) - E_n^\rD (4000)|$ is below some threshold $\epsilon^i \ll 1$.
The values of the ratios $r^i$ suggest that the eigenvalues of a large sector of the low-energy subspace of the CV
system can be well approximated by those of the discrete system. While $r^i$ does seem to decrease in $N$,
the scaling does not seem to be of the form $1/N^\chi$, for some $\chi>0$, but rather of the form $1/\log N$.
If this is the case, then, to approximate up to the $n$-th eigenvalue of $H$, it suffices to choose $N = O(n \log n)$
for the discrete system. Nevertheless, both the classical and quantum algorithms to simulate this system
will have complexity that depend on the same value of $N$, and the dependence of $r^i$ on $N$ is unimportant to demonstrate
a quantum speedup.

 \begin{figure}[htbp]
\includegraphics[width=8cm]{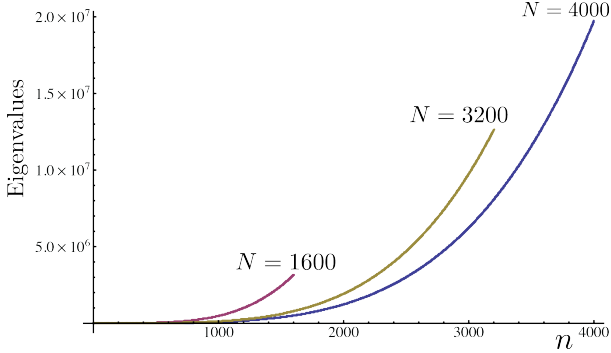}
\caption{\label{fig:x4Eigenvalues} Eigenvalues of $H^\rD$ for  $N=4000$ (blue), $N=3200$ (yellow), and $N=1600$ (purple).}
\end{figure}

\begin{table}[htb]
\centering
\begin{tabular} {c | c | c | c | c}
 $N $& $n^{1}$ & $r^{1}$ & $n^{2}$  & $r^{2}$   \\
 \hline
 100 & 16 &   0.16 & 12 & 0.12 \\
 200 & 30 &  0.15  & 27 & 0.135 \\
 400 & 57 & 0.1425   & 54 & 0.135 \\
 800 & 105 &   0.1312 & 99 & 0.125 \\
 1600 & 181 & 0.1131   & 177 & 0.1106 \\
 2000 & 216 & 0.108   & 212 & 0.106 \\
 3200 & 314 & 0.0981   & 310 & 0.0968\\

 \end{tabular}
 \caption{ \label{table:x4Eigenvalues}
$n^i$, maximum value of $n$ such that  $|E_n^\rD (N) - E_n^\rD (4000)|\le \epsilon^i$, for $\epsilon^1=10^{-5}$
 and $\epsilon^2=10^{-7}$. The ratios are $r^i = n^i/N$.}
\end{table} 

We can use the Trotter-Suzuki approximation to simulate the evolution operator $U^\rD(t) = \exp \{- i H^\rD t \}$.
This approximation splits the evolution operator into a product of exponentials or short time evolutions under $(p^\rD)^2$ and $(x^\rD)^4$ and,
using the result of Sec.~\ref{sec:TSdecomp}, each of these exponentials can be simulated efficiently. In contrast to the QHO, the operators $\hat x^4$
and $\hat p^2$ do not form a small dimensional Lie algebra and tight error bounds from high-order Trotter-Suzuki approximations
may be difficult to obtain. The recursive definition for $U_p^\rD(s)$ is given in Eq.~\eqref{eq:TSArecdef}, assuming
\begin{align}
 U_1^\rD(s):= e^{-i s (x^\rD)^4/4 }
 e^{-i s (p^\rD)^2/2 }  e^{-i s (x^\rD)^4/4 } \; .
\end{align}

A worst-case analysis of the Trotter-Suzuki formula results in an approximation error bounded by $\epsilon_p(s)=O(|s|^{2p+1} \| H^\rD \|^{2p+1}) = O((|s|N^2)^{2p+1})$~\cite{trotter_1959,Huyghebaert_1990,suzuki_90,suzuki_qmc_1998,berry_efficient_2007,wiebe_product_2010}.
However, as in the case of the DQHO, 
we would expect that the error for the current case is significantly smaller than that for the worst case.
This is because the operators in $H$, while they do not form a finite dimensional Lie algebra, posses an algebraic structure
that results in an effective norm for nested commutators that is significantly smaller than the product of the effective norms~\cite{Som_15}.
In Fig.~\ref{fig:TSErrorx4p} we plot the error $\| (U_p^\rD(s) - U^\rD(s)) \sket{\phi_n^\rD} \|$ as a function of $n$, for $|s|=O(1)$, and $p=1,2,3$.
Here,   $\sket{\phi_n^\rD}$ are the eigenstates of $H^\rD$ in Eq.~\eqref{eq:DQuartic} and cannot be approximated
by the discrete Hermite states $\sket{\psi_n^\rD}$. The results suggest $\| (U_p^\rD(s) - U^\rD(s)) \ket{\phi_n^\rD} \| = O(|s|^{2p+1} n^{(2p+4)/3})$.
The order dependence in $s$ follows from the analysis of the high-order Trotter-Suzuki approximations and was verified
by additional numerical simulations, to assure that we are in a region of convergence.

 \begin{figure}[htbp]
\includegraphics[width=8cm]{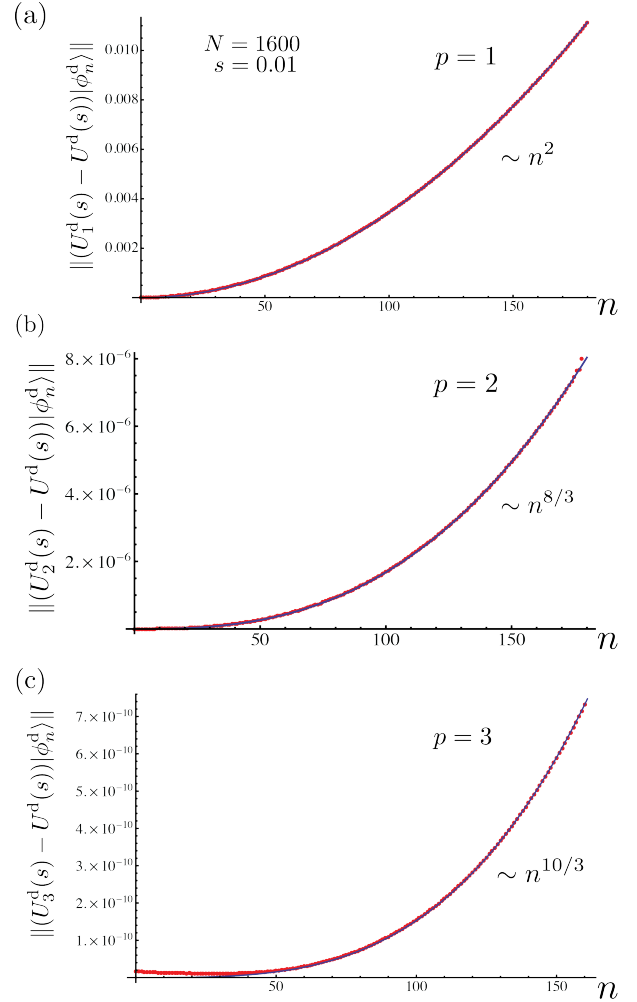}
\caption{\label{fig:TSErrorx4p}
The error from high-order Trotter-Suzuki approximations (red dots) as a function of $n$. 
The results are for dimension $N=1600$, $s=0.01$, and for $n=0,1,...,181$, according to Table~\ref{table:x4Eigenvalues}.
Our simulations suggest that the error is  $ O ( s^{2p+1} n^{p'})$, where $p'$ depends on $p$.
(a) $p=1$. The blue line indicates a fit with the function $f_1= 3.44.10^{-7} \; n^2$.  (b) $p=2$. The blue line indicates a fit with the function $f_1= 7.8.10^{-12}n^{8/3}$. (c)
$p=3$. The blue line indicates a fit with the function $f_3= 3.3.10^{-17} \; n^{10/3}$. }
\end{figure}

Under these numerical observations, 
we can analyze the complexity of a quantum algorithm
that computes scattering amplitudes
\begin{align}
\label{eq:x4propagator}
\bra{\varphi'} U(t) \ket{\varphi} \;,
\end{align}
within some  precision $\epsilon$. For simplicity, we assume that the initial and final states, $\sket{\varphi}$
and $\sket{\varphi'}$, can be written as linear superpositions of the eigenstates $\sket{\phi_n}$, with $n \le N'$.
This suggests that we can approximate Eq.~\eqref{eq:x4propagator} by
$
\sbra{\varphi'^{\rD}} U^\rD(t) \sket{\varphi^\rD}$ ,
where the dimension is $N = O(N' \log (N'))$, and $\sket{\varphi^\rD}$ and $\sket{\varphi'^\rD}$ are quantum states
obtained by replacing $\sket{\phi_n}$ by $\sket{\phi_n^\rD}$ in the spectral decompositions of 
$\sket{\varphi}$ and $\sket{\varphi'}$, respectively.
 We assume that $\sket{\varphi^\rD}$ and $\sket{\varphi'^\rD}$
can be efficiently prepared so that the main cost of the algorithm is that from simulating $U^\rD(t)$.
We split the evolution time into $k$ parts of size $s=t/k$. Using the $(2p+1)$-th order Trotter-Suzuki approximation,
our numerical simulations suggest that the error is bounded by
$k \omega (\omega' \; |s|)^{2p+1} N^{(2p+4)/3}$. $\omega$ and
 $\omega' > 1$ are constants. Replacing $s$ by $t/k$, the number of terms
in the product formula is
\begin{align}
k 5^p = O\left(   \frac{ |t|^{1+1/2p} N^{(1+2/p)/3}}{\epsilon^{1/2p}} 5^p \right) \; .
\end{align}
As $p$ grows large, the number of terms can be made $O(|t|^{1+\eta} N^{1/3 + 4 \eta/3} /\epsilon^\eta)$, for arbitrary small $\eta$.
Then,  the quantum circuit that approximates the evolution operator $U^\rD(t)$,
needed to compute Eq.~\eqref{eq:x4propagator},
can be implemented using a number of two-qubit gates that is
\begin{align}
{ \cal M_Q}= O \left (\frac{|t|^{1+\eta} N^{1/3 + 4 \eta/3} }{\epsilon^\eta} {\rm polylog}(Nt/\epsilon)\right) \; .
\end{align}
This complexity represents a polynomial quantum speedup, with respect to $N$, over the quantum algorithm
that approximates scattering amplitudes for the quartic potential. Classical algorithms,
in the worst-case, may require computing and obtaining the spectral properties of $U^\rD(t)$, which can be done 
in complexity $O(N^\sigma)$, for $\sigma>2$.

\section{One-dimensional quantum systems: An upper bound}
\label{sec:bound}
We now present an upper bound on the complexity of simulating 
the evolution operator of one-dimensional quantum systems described
by
$H = \frac 1 2 \hat p^2 + V(\hat x)$, where $V(\hat x)$ is the potential, i.e., some operator that depends on $\hat x$.
If $\ket{\phi_n}$, $n=0,1,\ldots$, denote the eigenstates of $H$ and $\ket{\varphi}$ is the initial state, we will assume
$\ket{\varphi} = \sum_{n=0}^{N'} c_n \ket{\phi_n}$ and $\| V(\hat x) \ket{\phi_n} \| = O({\rm poly}(N'))$, for all $n \le N'$.

We also assume that there exists $N<\infty$, the dimension of the Hilbert space, such that, if
\begin{align}
H^\rD = \frac 1 2  (p^\rD)^2 + V( x^\rD) \; ,
\end{align}
the scattering amplitudes of the CV system can be well approximated by those of the discrete system. 
$N$ will depend on $N'$ and the precision parameter $\epsilon$.
In particular, for constant precision, we will assume that $N=O({\rm poly}(N'))$,
an assumption that is satisfied by a large class of one-dimensional quantum systems 
such as the QHO or the quartic potential.
\begin{theorem}
Let $\epsilon>0$ and $t$ be the evolution time, $|t|=\Omega(1)$. Then, there is a quantum circuit $W^\rD$ that satisfies
\begin{align}
\| (W^\rD - U^\rD(t)) \ket{\varphi^\rD} \| \le \epsilon \; ,
\end{align}
for all $\ket{\varphi^\rD} \propto \sum_{n=0}^{N'} c_n \sket{\phi^\rD_n}$. Here, $\sket{\phi^\rD_n}$
are the eigenstates of $H^\rD$ and $U^\rD(t)=\exp \{-i H^\rD t \}$.   $W^\rD$ can be implemented
using $ O(N|t| {\rm polylog}(N|t|/\epsilon))$ two-qubit gates.
\end{theorem}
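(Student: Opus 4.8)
The plan is to simulate $U^\rD(t)=\exp\{-iH^\rD t\}$ by passing to the interaction picture with respect to the potential $V(x^\rD)$, rather than treating $H^\rD=\frac12(p^\rD)^2+V(x^\rD)$ as a single Hamiltonian of large norm. The point is that $V(x^\rD)$ is diagonal in the position basis and, for a potential of degree $d$, may have norm as large as $O(N^{d/2})$, so feeding $H^\rD$ directly into the series method of \cite{BCC_Taylor_2015} would yield a complexity scaling with $\|H^\rD\||t|=O(N^{d/2}|t|)$. Writing instead
\begin{align}
U^\rD(t)=e^{-iV(x^\rD)t}\;\cT\!\exp\!\left(-i\int_0^t d\tau\; K_I(\tau)\right),\quad K_I(\tau)=e^{iV(x^\rD)\tau}\,\tfrac12(p^\rD)^2\,e^{-iV(x^\rD)\tau}\,,
\end{align}
absorbs the large-norm piece into the diagonal, hence trivially implementable, rotating frame and leaves only the kinetic generator inside the time-ordered exponential. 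Since conjugation by a unitary preserves the spectral norm, $\|K_I(\tau)\|=\frac12\|(p^\rD)^2\|=O(N)$ for every $\tau$, where I use that $p^\rD=(F_{\rm c}^\rD)^{-1}x^\rD F_{\rm c}^\rD$ is unitarily equivalent to $x^\rD$ and $\|x^\rD\|=O(\sqrt N)$.

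First I would assemble the oracles required by the time-dependent (Dyson) version of the series method: (i) the diagonal unitaries $e^{\pm iV(x^\rD)\tau}$, whose entries are efficiently computable and which each cost $O({\rm polylog}(N|t|/\epsilon))$ two-qubit gates (Sec.~\ref{sec:TSdecomp}); and (ii) an LCU (linear combination of unitaries) representation of $\frac12(p^\rD)^2=\frac12(F_{\rm c}^\rD)^{-1}(x^\rD)^2 F_{\rm c}^\rD$ with one-norm $O(N)$, built from the diagonal operator $(x^\rD)^2$ conjugated by the centered Fourier transform, again at cost $O({\rm polylog}(N))$. Composing these with a clock register that carries $\tau$ gives a controlled implementation of $K_I(\tau)$ costing $O({\rm polylog}(N|t|/\epsilon))$ gates per use.

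Next I would invoke the series method to approximate the time-ordered exponential. Splitting $[0,t]$ into $r=O(N|t|)$ segments so that $\max_\tau\|K_I(\tau)\|\cdot(|t|/r)=O(1)$, and truncating the Dyson series at order $L=O(\log(r/\epsilon)/\log\log(r/\epsilon))$ per segment, produces an $\epsilon$-accurate approximation of $\cT\exp(-i\int_0^t K_I)$. The total number of oracle uses is then $\tilde O(N|t|)$, and since each costs $O({\rm polylog}(N|t|/\epsilon))$ gates, the circuit $W^\rD$—the diagonal prefactor $e^{-iV(x^\rD)t}$ times the segmented series—uses $O(N|t|\,{\rm polylog}(N|t|/\epsilon))$ two-qubit gates. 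Because $W^\rD$ approximates $U^\rD(t)$ on the whole space, the bound $\|(W^\rD-U^\rD(t))\ket{\varphi^\rD}\|\le\epsilon$ holds in particular for the low-energy states $\ket{\varphi^\rD}$; the hypotheses $\|V(\hat x)\ket{\phi_n}\|=O({\rm poly}(N'))$ and $N=O({\rm poly}(N'))$ are what render this gate count polynomial in $N'$ and make the discrete system a faithful stand-in for the CV one.

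The step I expect to be the main obstacle is the faithful realization of the time-dependent simulation: constructing the LCU of the rotating-frame generator $K_I(\tau)$ with the correct $O(N)$ normalization, handling the time-ordering through the clock-register/ancilla construction so that the truncation error is controlled uniformly in $\tau$, and verifying that every overhead beyond the $\tilde O(N|t|)$ oracle calls is genuinely polylogarithmic in $N|t|/\epsilon$. A smaller technical point is confirming $\|(p^\rD)^2\|=O(N)$, which is what keeps the whole complexity linear rather than superlinear in $N$.
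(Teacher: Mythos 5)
Your proposal follows essentially the same route as the paper's own proof: pass to the interaction picture with respect to $V(x^\rD)$ so that the generator $e^{iV(x^\rD)\tau}\,\tfrac12(p^\rD)^2\,e^{-iV(x^\rD)\tau}$ has norm $O(N)$ and an $O(N)$-term unitary decomposition, apply the time-dependent truncated-series method of~\cite{BCC_Taylor_2015} at cost $\tilde O(N|t|)$, and compose with the efficiently implementable diagonal unitary $e^{-iV(x^\rD)t}$. The only detail the paper makes explicit that you leave implicit is the bound $\|\dot H_I^\rD\|=O({\rm poly}(N))$, which enters the logarithmic factor of the gate count in~\cite{BCC_Taylor_2015} but does not alter the argument.
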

\begin{proof}
We use our results in~\cite{BCC_Taylor_2015} for Hamiltonian simulation
and do some modifications to obtain the desired result. 
In~\cite{BCC_Taylor_2015} we showed that for a $d$-sparse time dependent Hamiltonian $A(t)$ acting on $q$ qubits,
the evolution operator can be approximated within precision $\epsilon$ using
\begin{align}
\label{eq:gatecostHS}
O(q d^2 \|A\|_{\max} |t| \frac{\log((d^2 \|A\|_{\max} + \| \dot A \|) |t| /\epsilon)}{\log\log(d^2 \|A\|_{\max}| t|/\epsilon)}
\end{align}
two-qubit gates, where $\dot A = \partial_t A$ and the norms are the maximum norms 
in the time interval $[0,t]$. The gate cost results from a decomposition
of $A(t)$ in terms of $O(d^2 \|A\|_{\max})$ unitary operators.
 Since $H^\rD$ is not sparse and its norm can be larger than $O(N)$,
Eq.~\eqref{eq:gatecostHS} would result in a large gate complexity in this case.
To overcome this difficulty, we analyze the evolution operator in the interaction
picture. We then define the time-dependent interaction Hamiltonian
\begin{align}
\nonumber
H^\rD_I(s) &= \frac 1 2 e^{i V(x^\rD) s} (p^\rD)^2 e^{-i V(x^\rD) s} \\
& =  \frac 1 2 e^{i V(x^\rD) s} (F_{\rm c}^\rD)^{-1}(x^\rD)^2 F_{\rm c}^\rD e^{-i V(x^\rD) s}  \; ,
\end{align}
and denote $U^\rD_I(t)$ for the corresponding evolution. Since $x^\rD$ is 1-sparse
and $\|x^\rD \|=O(\sqrt N)$, we obtain $\| H^\rD_I \|_{\max} = O(N)$ and $H^\rD_I(s)$ can be decomposed
as a sum of $O(N)$ unitary operators. In addition, $\| \dot H_I^\rD \| = O({\rm poly}(N))$ under the assumptions and $q=O(\log N)$.
This implies that the gate complexity to simulate $U_I^\rD$ as given by Eq.~\eqref{eq:gatecostHS} is
\begin{align}
O(N   |t|  \log^2(N |t| /\epsilon) ) = \tilde O(N|t|)
\end{align}
if $|t|=\Omega(1)$.
We note that $U^\rD(t) =e^{-i V(x^\rD) t}  U_I^\rD(t) $
and that $ e^{-i V(x^\rD) t}$ can be efficiently simulated with complexity ${\rm polylog}(N|t|/\epsilon)$
as explained in Sec.~\ref{sec:TSdecomp}.
\end{proof}

\section{The fractional Fourier transform and Related work}
\label{sec:frft}
\label{sec:frft}
The evolution induced by the QHO results in a transformation
referred to as the fractional Fourier transform (frFT), which corresponds to an
arbitrary rotation in phase space. The frFT has been proven useful in signal analysis~\cite{Alm94,LZ03,Jou04},
in noise filtering in particular, when the noise does not have a well defined frequency spectrum.
The evolution operator $U^\rD(t)$ can then be interpreted as an approximate version to a ``discrete''
frFT, and our results may prove useful in the design of classical or quantum algorithms for discrete signal analysis.
This would require efficient methods for encoding and decoding of signals, which
may exist under some assumptions such as sparsity.

The split-step (Fourier) method is a well known technique to solve the nonlinear Schr\"odinger equation
in quantum mechanics and in fiber optics (c.f.,~\cite{TA84}). As in our case, the idea is to evolve the initial state according
to small step evolutions under the corresponding operators. This also requires a discretization
of the continuous-variable coordinates. Our results on Trotter-Suzuki
approximations can then be used and generalized to bound the errors induced by the split-step method.

The development of quantum simulation methods for continuous-variable quantum systems, including quantum chemistry, 
is an active area of research (c.f.,~\cite{kassal_chem_2011,yung_chem_2012,wecker_chem_2013} and references therein). Commonly, the complexity of such methods is polynomial
in the energy of the system. As an example, in~\cite{papageorgiou_CV_2013}, a quantum algorithm for approximating the ground state energy
of a continuous-variable quantum system is provided. The algorithm works under some assumptions on the energy potential and its complexity
is proportional to $d$, the number of state variables (or particles). Another example is~\cite{jordan_QFT_2012}, which presents a quantum algorithm to compute scattering probabilities
in a certain quantum field theory ($\phi^4$ theory). The complexity of such algorithm is also polynomial in the energy. In contrast, our quantum algorithm to simulate the time evolution operator of the QHO has complexity that is subexponential in $\log N'$, where $N'$ 
denotes the relevant energy scale of the problem. 
The QHO provides a basis 
for the quantization of the electromagnetic field and quantum field theories, and our algorithms
are expected to find wide applications in quantum simulation.

\section{Conclusions}
\label{sec:conc}
We provided a quantum algorithm to approximate the propagator of the QHO within arbitrary accuracy.
For  precision $\epsilon>0$, the complexity of the algorithm is ${\cal M_Q}=O(\exp (\tilde \gamma \sqrt{\log (N/\epsilon)}))$, where 
the evolution time can be assumed to be constant, and $N$ is the relevant energy scale of the simulation. Asymptotically,  ${\cal M_Q} / (N)^\eta \rightarrow 0$, for any $\eta>0$,
so the complexity of the algorithm is subexponential in $\log( N)$. 
Remarkably, this represents a superpolynomial speedup over 
the corresponding classical algorithm to compute the propagator, whose complexity is $O(N)$.
Our results consider a refined analysis of the error of high-order Trotter-Suzuki approximations. This analysis works in this case because
the operators under consideration form a Lie algebra of dimension 3 [i.e., $sp(2)$]. We can then use properties of commutators
to show that the error induced by a high-order Trotter-Suzuki approximation is significantly smaller than that obtained considering the worst-case scenario;
recent results consider this problem more generally~\cite{Som_15}. 
Our quantum algorithm considers a discrete version of the QHO whose low-energy spectrum can be shown
to reproduce the properties of the QHO with very high accuracy (i.e., the approximation errors decay exponentially with $N$).

We also provided quantum algorithms to compute spectral properties by preparing approximations of the eigenstates of the QHO.
These algorithms have complexity polynomial in $\log(N)/\epsilon$ and may be of independent interest (e.g., the ground state
has Gaussian-like amplitudes). Here, $N$ is the number of points in the discretization or dimension of the Hilbert space.
To prepare such states, our quantum algorithms simulate the evolution induced by a version of the Jaynes-Cummings
model that is  used in quantum optics. This evolution is also approximated using a high-order Trotter-Suzuki approximation.

Last, we presented quantum algorithms to simulate more complex one-dimensional quantum systems. For the case
of a quartic potential, we presented numerical evidence for the existence of a method that simulates the evolution operator
with complexity ${\cal M_Q}=\tilde O(N^{1/3 + o(1)})$, if $t$ and $\epsilon$ are constant.
This method is also based on high-order Trotter-Suzuki approximations.
Our result represents a polynomial quantum speedup over the classical method in this case. The quantum advantage may be a result
of the algebraic structure satisfied by the operators in the Hamiltonian~\cite{Som_15}.
We also showed how general quantum systems can be tackled on the basis of recent results in~\cite{BCC_Taylor_2015}
that simulate the evolution operator by implementing its Taylor series decomposition.  We proved
an $\tilde O(N)$  bound for the complexity of simulating the evolution operator under fairly general assumptions.

We conjecture that some of our results can be generalized to provide subexponential time quantum algorithms to simulate
the evolution   of other continuous-variable quantum systems.  


\acknowledgements
We thank R. Cleve, S. Gharibian, N. Nguyen and A. Chowdhury for discussions. 
 We acknowledge support from the NSF through the CCF program and the Laboratory Directed Research and Development
 program at LANL.



\appendix

\section{Properties of the discrete QHO}
 \label{appendix:spectralproperties}
 Most of our results can be obtained from approximations of integrals appearing
 in the continuous-variable case as finite sums appearing in the discrete variable case.
 For completeness, the Hermite functions are
  \begin{align}
  \psi_n(x) = \frac 1 {\sqrt{2^n n! \sqrt \pi}} e^{-x^2/2} H_n(x) \; ,
  \end{align}
  where $H_n(x)$ is the (physicists') $n$-th Hermite polynomial, $n \ge 0$,
  \begin{align}
  H_n(x)=(-1)^n e^{x^2} \frac{ \partial^n (e^{-x^2})}{\partial x^n }\; .
  \end{align}
  The orthogonality (or normalization) condition is $\int dx \; \psi_m(x) \psi_n(x)=\delta_{n,m}$, and
  a useful property is $x \psi_n(x) = \sqrt{n/2}\psi_{n-1}(x)+\sqrt{(n+1)/2} \psi_{n+1}(x)$.
  In Dirac's bra-ket notation, these are $\sbra{\psi_m} \psi_n \rangle = \delta_{n,m}$ and $\hat x
  \sket{\psi_n} = \sqrt{n/2}\sket{\psi_{n-1}}+\sqrt{(n+1)/2}\sket{ \psi_{n+1}}$, respectively.
  The Hermite functions are eigenfunctions of the Fourier transform with eigenvalues $(-i)^n$, $n \ge 0$.
  Then, we can write $\hat F$ for an operator that applies the Fourier transform and $\hat F \sket{\psi_n} = (-i)^n \sket{\psi_n}$.
 
  Unless a region of integration is given explicitly, we assume $x \in (-\infty,\infty)$. Similarly, sums
  are infinite unless otherwise stated. In all cases,  $l \ge 0$, $k$, and $n \ge 0$ are  integer numbers.
  $N \ge 4$ denotes the dimension of the Hilbert space for the discrete QHO.
  
 We will show that our main results follow from statements about the ``tails'' of the Hermite
 functions for $x \notin [-kT/2,kT/2]$, where $T=\sqrt{2 \pi N}$ and $k \ge 1$. 
 For even $n$, the Hermite polynomials satisfy
  \begin{align}
  |H_n(x)| \le   \frac{n!}{(n/2)!} e^{x \sqrt{2 n}} \; ,
  \end{align}
and there is a similar upper bound for odd $n$ (see inequality 8.954 of ~\cite{IntTable}).
  Stirling's approximation implies $c_1 \sqrt{n} (n/e)^n \le n! \le c_2 \sqrt{n} (n/e)^n$,
  with $0< c_1=\sqrt {2\pi} < c_2=e$. Then
  \begin{align}
  \label{eq:HB1}
  | \psi_n(x)| \le c_3 e^{-x^2/2} e^{x \sqrt{2 n}}
  \end{align}
  for some constant $c_3 \approx 0.7$ and $x \ge 0$.  In the event that $x/2-\sqrt{2n} \ge \tilde \beta \sqrt N$, for some $\tilde \beta>0$,
  Eq.~\eqref{eq:HB1} implies $\psi_n(x) = O(\exp(-x \tilde \beta \sqrt N))$. A similar result is obtained for odd $n$.
  Also (\cite{IntTable}),
  \begin{align}
  |H_n(x)|< c_4 \sqrt{n!}2^{n/2}e^{x^2/2} \;,
  \end{align}
 which implies
 \begin{align}
 \label{eq:HB2}
 |\psi_n(x)| < 1 \;,
 \end{align}
  where $c_4 \approx 1.0864$.

 The results described in Secs.~\ref{sec:spectralprop} and~\ref{sec:DQHOexpval}  will follow from the following lemmas, which suffice to provide analytical proofs. For simplicity, we use $\nu_1(N)$ to denote the order of a function that decays exponentially in $N$.
 That is, $\nu_1(N)= \exp(-\Omega (N))$, and there exists a constant $\beta >0$ such that, if $f(N) =\nu_1(N)$, then $f(N)  \le \exp(-\beta N)$.
 Note that $(N^\alpha \exp(-\Omega (N)))$  and  $\nu_1(N)^\alpha$ are also $\exp(-\Omega (N))$
 for any constant $\alpha>0$ and sufficiently large $N$.
Then, if $f(N)=N^\alpha \nu_1(N)$ or $f(N)= \nu_1(N)^\alpha$, we have $f(N)=\nu_1(N)$.
 The value of the constants appearing in these lemmas, such as $\beta$, can be improved with more detailed analyses.
  \begin{lemma}
 \label{lem:smallsupport}
Given $l \ge 0$, there exists a constant   $c >0$ such that, for all $ n \le cN$ and all $k \ge 1$,  
 \begin{align}
 \int_{kT/2}^\infty dx \; (\psi_n(x))^2 x^l  =  (\nu_1(N))^k\; .
 \end{align}
 \end{lemma}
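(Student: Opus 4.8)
The plan is to reduce the tail integral to a shifted Gaussian integral using the pointwise bound \eqref{eq:HB1}. Squaring that estimate and completing the square gives, for all $x\ge 0$,
\begin{align}
(\psi_n(x))^2 \le c_3^2\, e^{-x^2+2x\sqrt{2n}} = c_3^2\, e^{2n}\, e^{-(x-\sqrt{2n})^2}\;,
\end{align}
so the integrand is a width-$O(1)$ Gaussian centered at $\sqrt{2n}$, times the polynomial $x^l$. Writing $a:=kT/2=k\sqrt{\pi N/2}$ for the lower limit and $b:=\sqrt{2n}$ for the center, the hypothesis $n\le cN$ gives $b\le\sqrt{2cN}$, so choosing $c<\pi/16$ forces $a>b$ (indeed $a\ge 2b$) for every $k\ge 1$. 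Thus the cutoff sits to the right of the Gaussian center, with separation of order $\sqrt N$, and $x-b\ge 0$ throughout the range of integration.

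First I would extract the dominant exponential factor. For $x\ge a\ge b$ one has the elementary inequality $(x-b)^2=(x-a)^2+2(x-a)(a-b)+(a-b)^2\ge (a-b)^2+(x-a)^2$, whence, after the substitution $u=x-a$,
\begin{align}
\int_{a}^\infty dx\; e^{-(x-b)^2} x^l \le e^{-(a-b)^2}\int_0^\infty du\; e^{-u^2}(u+a)^l = e^{-(a-b)^2}\,P_l(a)\;,
\end{align}
where $P_l$ is a fixed degree-$l$ polynomial whose coefficients are Gaussian moments. Since $a=O(k\sqrt N)$, the prefactor is $P_l(a)=O((k\sqrt N)^l)$, polynomial in both $k$ and $N$.

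The core of the argument is the net exponent. With $\mu:=\sqrt{\pi/2}$ and $\nu:=\sqrt{2c}$ one has $a-b\ge (k\mu-\nu)\sqrt N$, and since $c<\pi/16$ implies $\mu>\nu$ this is nonnegative, so
\begin{align}
2n-(a-b)^2 \le \big[\nu^2-(k\mu-\nu)^2\big]N = -\,k\mu(k\mu-2\nu)\,N\;.
\end{align}
Again $c<\pi/16$ gives $\mu>2\nu$, hence $k\mu-2\nu\ge \mu-2\nu>0$ for all $k\ge 1$, so $2n-(a-b)^2\le-\beta kN$ with $\beta:=\mu(\mu-2\nu)>0$; for $n=0$ this reproduces the constant $\pi/2$ noted in the text. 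Combining the three displays yields $\int_{kT/2}^\infty dx\,(\psi_n(x))^2 x^l \le c_3^2\, P_l(a)\, e^{-\beta kN}$.

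Finally I would absorb the polynomial prefactor. Since $P_l(a)=O(k^l N^{l/2})$ and $\ln k/k$ is bounded for $k\ge 1$, for $N$ large enough uniformly in $k$ one has $k^l N^{l/2} e^{-\beta kN}\le e^{-(\beta/2)kN}=(\nu_1(N))^k$, which is the claim. I anticipate the main obstacle to be securing the \emph{linear-in-$k$} decay uniformly over all $k\ge 1$: one must confirm that the separation $a-b$ grows like $k\sqrt N$ rather than merely $\sqrt N$, so that the quadratic $(a-b)^2$ dominates both the $e^{2n}$ factor and the polynomial $x^l$ at once. The threshold $c<\pi/16$ is precisely what makes $k\mu-2\nu$ positive for every $k\ge1$, and the clean splitting $(x-b)^2\ge(a-b)^2+(x-a)^2$ is what lets the $x^l$ factor be peeled off without degrading the Gaussian rate.
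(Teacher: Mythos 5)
Your proof is correct and follows essentially the same route as the paper's: both start from the pointwise Hermite bound of Eq.~\eqref{eq:HB1}, use the threshold $c<\pi/16$ to place the cutoff $kT/2$ a distance $\Omega(k\sqrt N)$ beyond the peak at $\sqrt{2n}$, extract an exponent linear in $kN$, and absorb the polynomial prefactor $O((k\sqrt N)^l)$. The only difference is presentational --- you complete the square and split $(x-b)^2\ge(a-b)^2+(x-a)^2$, whereas the paper linearizes the exponent into a pointwise bound $|\psi_n(x)|=O(e^{-x\tilde\beta\sqrt N})$ --- and both yield the identical rate constant $k(\pi/2-2\sqrt{\pi c})N$, including $\pi/2$ at $n=0$.
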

  \begin{proof}
  We will choose $c>0$ so that  $x/2-\sqrt{2n} \ge T/4-\sqrt{2n} \ge \tilde \beta \sqrt N$, for some $\tilde \beta>0$,
  in the integration region.
  For example, we can choose $c < \pi/16$ and $\tilde \beta$ to satisfy $\tilde \beta = (\sqrt{\pi/8}-\sqrt{2c}) >0$.
  Then, Eq.~\eqref{eq:HB1} implies $\psi_n(x) = O(\exp(-x \tilde \beta \sqrt N))$ for $x \ge T/2$.
  If $I_{k}$ is the integral of interest, explicit calculation  
  gives
   \begin{align}
    \label{eq:IB1}
  I_{k} &= O(\exp(-kT \tilde \beta \sqrt N)(k \sqrt N)^l) \;.
    \end{align}
    Since $l=O(1)$, $T = \Omega(\sqrt N)$, and $(k \sqrt N)^l) = \exp(l \log(k \sqrt N))$, we obtain $I_k = (\nu_1(N))^k$.
    
 
\end{proof}

We note that the value of $\tilde \beta$ in the exponential of Eq.~\eqref{eq:IB1}
depends on $c$ (or $n$). For example, if we are interested in the case where $n=0$, we can choose
$\tilde \beta = \sqrt{\pi/8}$ so that $I_k= O(\exp(-k (\pi/2)N)(k \sqrt N)^l)$. 

 
 \begin{lemma}
 \label{lem:D-CVposition}
Given $l \ge 0$, there exists a constant  $c>0$ such that, for all $ n,m \le c N$,  
\begin{align}
 \left | \sbra{\psi_m^\rD} (x^\rD)^l \sket{\psi_n^\rD} - \sbra{\psi_m } \hat x ^l \sket{\psi_n }  \right| =  \nu_1(N)  \; .
\end{align}
 \end{lemma}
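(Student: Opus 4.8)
The plan is to recognize the left-hand matrix element as a lattice Riemann sum (equivalently, a trapezoidal quadrature) for the right-hand integral, and to exploit that such a rule converges \emph{exponentially} for the smooth, rapidly decaying integrands built from Hermite functions. Since $x^\rD$ is diagonal with entries $x_j=j\sqrt{2\pi/N}$ and the mesh is $\Delta x:=\sqrt{2\pi/N}$, setting $g(x):=x^l\,\psi_m(x)\,\psi_n(x)$ gives
\begin{align}
\label{eq:plan_setup}
\sbra{\psi_m^\rD}(x^\rD)^l\sket{\psi_n^\rD}=\Delta x\!\!\sum_{j=-N/2}^{N/2-1}\!\! g(x_j)\;,\qquad
\sbra{\psi_m}\hat x^l\sket{\psi_n}=\int_{-\infty}^\infty\! g(x)\,dx\;.
\end{align}
I would bound the difference by inserting the full infinite lattice sum $\Delta x\sum_{j\in\mathbb Z}g(x_j)$ as an intermediary, splitting the error into a truncation part (finite versus infinite sum) and an aliasing part (infinite sum versus integral), and controlling each by $\nu_1(N)$.

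For the truncation part I would estimate the omitted lattice values $\Delta x\sum_{|j|\ge N/2}|g(x_j)|$ using the pointwise bound \eqref{eq:HB1}, rewritten as $|\psi_n(x)|\le c_3\,e^{n}e^{-(|x|-\sqrt{2n})^2/2}$. Choosing $c$ small enough --- compatibly with the $c<\pi/16$ of Lemma~\ref{lem:smallsupport} --- forces the turning points $\sqrt{2n},\sqrt{2m}$ to lie well inside the box $[-T/2,T/2]$, $T=\sqrt{2\pi N}$, so the discarded values are Gaussian-suppressed and the tail sum is $\nu_1(N)$ by the same mechanism as Lemma~\ref{lem:smallsupport} at $k=1$.

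The aliasing part is the crux, and I would treat it by Poisson summation,
\begin{align}
\label{eq:plan_poisson}
\Delta x\sum_{j\in\mathbb Z}g(x_j)=\sum_{k\in\mathbb Z}\hat g(kT)\;,\qquad \hat g(\xi):=\int_{-\infty}^\infty g(x)\,e^{-i\xi x}\,dx\;,
\end{align}
where $2\pi/\Delta x=T=\sqrt{2\pi N}$. The $k=0$ term is exactly the integral in \eqref{eq:plan_setup}, so the quadrature error equals $\sum_{k\neq 0}\hat g(kT)$. Because the Hermite functions are eigenfunctions of the Fourier transform, $\widehat{\psi_m\psi_n}=(-i)^{m+n}(\psi_m*\psi_n)$ and hence $\hat g=(i\,d/d\xi)^l[(-i)^{m+n}\,\psi_m*\psi_n]$; the convolution of two Gaussian-enveloped functions is again Gaussian-enveloped, which yields $|\hat g(\xi)|=O\!\big(e^{m+n}\,|\xi|^{l}\,e^{-(|\xi|-\sqrt{2m}-\sqrt{2n})^2/4}\big)$ once $|\xi|$ exceeds the combined turning points. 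Evaluating at $|\xi|\ge T$ and summing the rapidly decaying $k\neq 0$ terms shows the aliasing error is $\nu_1(N)$.

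The main obstacle is exactly this last estimate when $n,m$ are as large as $cN$: the prefactor $e^{m+n}$ (together with the degree-$(m+n+l)$ polynomial content) competes with the Gaussian factor $e^{-T^2/4}=e^{-\pi N/2}$. The net exponent at $\xi=T$ is at most $N\big[2c-\tfrac14(\sqrt{2\pi}-2\sqrt{2c})^2\big]+O(\log N)$, which is negative only once $c$ is small enough; this forces a small $c$ and makes the rate $\beta$ in $\nu_1(N)=e^{-\beta N}$ degrade as $n,m\to cN$, consistent with the $n$-dependence of the constant flagged in Sec.~\ref{sec:spectralprop}. I would then combine the truncation and aliasing contributions by subadditivity to conclude that the full difference is $\nu_1(N)$.
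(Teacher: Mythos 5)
Your proposal is correct and follows essentially the same route as the paper: the paper's Dirac-comb identity is exactly your Poisson summation, its conversion of the $k\neq 0$ terms into translated overlaps $\int dx\,\psi_m(x)\psi_n(x+kT)$ via the Fourier-eigenfunction property of the Hermite functions is the physical-space face of your convolution bound on $\hat g(kT)$, and both arguments rest on the same truncation-plus-aliasing split with $c$ taken small enough (your exponent condition $2c-\tfrac14(\sqrt{2\pi}-2\sqrt{2c})^2<0$ reproduces the paper's $c<\pi/16$). The only cosmetic difference is that the paper first reduces $l\ge 0$ to $l=0$ through the recurrence $x\psi_n(x)=\sqrt{n/2}\,\psi_{n-1}(x)+\sqrt{(n+1)/2}\,\psi_{n+1}(x)$, so its aliasing analysis only ever handles products of two Hermite functions, whereas you carry the $x^l$ factor through the quadrature estimate as derivatives in Fourier space.
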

 \begin{proof}
 From the property of the Hermite functions, we note
 \begin{align}
 (\hat x)^l \sket{\psi_n}  =\sum_{n' \le n+l} c_{n'} \sket{\psi_{n'}} \Rightarrow
  ( x^\rD)^l \sket{\psi_n^\rD} =\sum_{n' \le n+l} c_{n'} \sket{\psi^\rD_{n'}}  \; ,
 \end{align}
 where $|c_{n'}|=O(n^{l/2})$.
 Since $l=O(1)$, the lemma follows from showing that $\langle \psi^\rD_m \sket{\psi^\rD_{n}}$ approximates
 $\delta_{n,m}$ within precision that is exponentially small in $N$.
 
 The remainder of the proof has two parts. First, we will show that 
  \begin{align}
 \label{eq:sum1}  
  \left({2 \pi}/N\right)^{1/2}  \sum_j \psi_m(x_j)\psi_n(x_j)   \; ,
  \end{align}
  with $x_j = j \sqrt{2 \pi/N}$, approximates $\langle{\psi_m^\rD}   \sket{\psi_n^\rD}$ at the 
  desired order. The Cauchy-Schwarz inequality implies
\begin{align}
\nonumber
& |  \sum_{j=N/2}^\infty  \psi_m(x_j) \psi_n(x_j)   | ^2 \le \\
& \le
 \left( \sum_{j=N/2}^\infty  \psi_m^2(x_j)   \right)\left(  \sum_{j=N/2}^\infty  \psi_n^2(x_j)    \right) \; .
\end{align}
In this case, $x_j \ge \sqrt{\pi N/2}$. We will choose $c$ so that $x_j/2 - \sqrt{2n} \ge T/4 - \sqrt{2n} \ge \tilde \beta \sqrt N$,
for some $\tilde \beta >0$. That is, $c$ and $\tilde \beta$ can be those of Lemma~\ref{lem:smallsupport}, and $n \le cN$. Then, Eq.~\eqref{eq:HB1} implies $\psi_n(x_j) = O(\exp(-x_j \tilde \beta \sqrt N))$ for $x_j \ge T/2$ or, equivalently, for $j \ge N/2$. Explicit calculation of the sum implies  
\begin{align}
(2\pi/N)^{1/2} \! \! \! \sum_{j=N/2}^\infty   \psi_n^2(x_j)   =  \nu_1(N)   \; ,
\end{align}
for sufficiently large $N$.
This coincides with the result of Lemma~\ref{lem:smallsupport} when $k=1$.  
The same result can be obtained if we replace $n \rightarrow m$, with the   assumption that $m \le cN$.
Then, from the Cauchy-Schwarz inequality,
 \begin{align}
\left |\sqrt{ \frac{2 \pi} {N} } \! \!\sum_{j=N/2}^\infty  \psi_m(x_j) \psi_n(x_j)   \right | = \nu_1(N)   \; ,
  \end{align}  
  implying that $\langle{\psi_m^\rD}   \sket{\psi_n^\rD}$
  can be approximated by Eq.~\eqref{eq:sum1} within precision that is exponentially small in $N$.
  
  In the second part of the proof, we will show that Eq.~\eqref{eq:sum1} also approximates
  the expectation $\langle{\psi_m }   \sket{\psi_n }=\delta_{n,m}$.
   We use the identity $\sum_j \delta(x-x_j)=(N/(2\pi))^{1/2} \sum_{k}e^{-ikTx}$, where $\delta(x)$ is the Dirac delta,
  to write Eq.~\eqref{eq:sum1} as
  \begin{align}
  \label{eq:sum2}
  \sum_k \int dx \; \psi_m(x) \psi_n(x)  e^{-ikTx} =\sum_k \sbra{\psi_m} e^{-ikTx} \sket{\psi_n} \; .
  \end{align}
The term with $k=0$ is $\sbra{\psi_m } {\psi_n } \rangle$, so we need to show that the 
sum of the terms with $k \ne 0$ is small and satisfies the desired bound. 
 The Hermite functions are also eigenstates of the Fourier transform and $ \sbra{\psi_m}e^{-ikT \hat x} \ket{\psi_{n}}
 = (-i)^{n-m}\sbra{\psi_m}e^{-ikT \hat p} \ket{\psi_{n}}$, where we used $\hat F \hat x (\hat F)^\dagger = \hat p$.
 Note that $e^{-ikT \hat p}$ is the space translation operator. 
   Then, each term of Eq.~\eqref{eq:sum2} can be written as
\begin{align}
  \label{eq:sum4}
 i^{m-n}    \int dx \; \psi_m(x) \psi_{n}(x+kT)   \; ,
  \end{align}
  because  $\exp\{-kT\partial_x\}\psi_{n}(x)=\psi_{n}(x+kT)$.

   We are then interested in showing that  $\int dx \; \psi_m(x) \psi_{n}(x+kT)$ is small when $k \ne 0$.
  From symmetry arguments, it suffices to analyze the case $k \ge 1$ only. We write $\psi_m(x) = \tilde \psi_m(x) +
  \phi_m(x)$, where $\tilde \psi_m(x) = \psi_m(x)$ if $x \le kT/2$ and $\tilde \psi_m(x)=0$ otherwise.
  We use a similar decomposition for $\psi_n(x)=\hat \psi_n(x) + \phi'_n(x)$, where $\hat \psi_n(x)=\psi_n(x)$ if $x > kT/2$
  and $\hat \psi_n(x)=0$ otherwise. It follows that $\hat \psi_n(x) \tilde \psi_m(x)=\phi_m(x)\phi'_n(x)=0$ because these functions are supported in disjoint regions.
  Also, Eqs.~\eqref{eq:HB1} and~\eqref{eq:HB2} imply $\hat \psi_n(x) \phi_m(x) = \nu_1(N)^k$ and $\phi'_n(x) \tilde \psi_m(x)=\nu_1(N)^k$
  so that Eq.~\eqref{eq:sum4} is also of order $\nu_1(N)^k$. Summing over $k\ne 1$ implies that 
  Eq.~\eqref{eq:sum1} can be approximated by $\sbra{\psi_m}   {\psi_n} \rangle =\delta_{n,m}$,
  the term with $k=0$, or by $\sbra{\psi_m^\rD}   {\psi_n^\rD} \rangle$, within precision $\nu_1(N)$. 
  \end{proof}

The previous analysis implies:
\begin{corollary}
\label{cor:orthogonal}
There exists a constant $c>0$  such that,  for all $n,m \le c N$,
the discrete Hermite states are almost orthonormal:
\begin{align}
\left | \sbra{\psi_m^\rD} \psi_n^\rD \rangle - \delta_{n,m} \right | = \nu_1(N) \; .
\end{align}
\end{corollary}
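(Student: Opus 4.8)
The plan is to read off this corollary as the $l=0$ instance of Lemma~\ref{lem:D-CVposition}. Setting $l=0$ there collapses both $(x^\rD)^0$ and $\hat x^0$ to the identity $\one$, so the lemma gives directly
\begin{align}
\left| \sbra{\psi_m^\rD} \psi_n^\rD \rangle - \sbra{\psi_m} \psi_n \rangle \right| = \nu_1(N)
\end{align}
for all $n,m \le cN$, with the very same constant $c>0$. I would then invoke the orthonormality of the Hermite functions, $\sbra{\psi_m} \psi_n \rangle = \delta_{n,m}$, to rewrite the continuous overlap and obtain the claimed bound. No new estimate is needed beyond what the lemma supplies.

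For the sake of transparency I would also recall where the actual content lives, namely inside the proof of Lemma~\ref{lem:D-CVposition}, which proceeds in two stages for the relevant object $\sbra{\psi_m^\rD} \psi_n^\rD \rangle = (2\pi/N)^{1/2}\sum_{j=-N/2}^{N/2-1}\psi_m(x_j)\psi_n(x_j)$, with $x_j=j\sqrt{2\pi/N}$. First one extends this finite Riemann-type sum to an infinite one: the omitted tails $|j|\ge N/2$ are controlled by Cauchy--Schwarz together with the $k=1$ tail estimate of Lemma~\ref{lem:smallsupport}, hence are $\nu_1(N)$. Second one replaces the infinite sum by the integral $\sbra{\psi_m} \psi_n \rangle$ via the Dirac-comb identity $\sum_j \delta(x-x_j)=(N/(2\pi))^{1/2}\sum_k e^{-ikTx}$ with $T=\sqrt{2\pi N}$; the $k=0$ term reproduces the continuous inner product, while each $k\neq 0$ term equals a translated overlap $\int dx\,\psi_m(x)\psi_n(x+kT)$, using that the Hermite functions diagonalize the Fourier transform so that $e^{-ikT\hat x}$ conjugates into the translation $e^{-ikT\hat p}$.

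The one genuinely delicate point, and the step I would expect to carry the weight, is the suppression of these translated overlaps. Here the supports of $\psi_m(x)$ and of $\psi_n(x+kT)$ can only overlap where at least one factor has entered its exponentially small tail, so the Gaussian-times-exponential bounds of Eqs.~\eqref{eq:HB1} and~\eqref{eq:HB2} force each such term to be of order $\nu_1(N)^k$; summing the resulting geometric-type series over $k\neq 0$ preserves the overall $\nu_1(N)$ order. Since all of this is already established, the corollary itself follows with no additional work once $\sbra{\psi_m} \psi_n \rangle=\delta_{n,m}$ is substituted.
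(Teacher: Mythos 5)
Your proposal is correct and matches the paper's own proof, which likewise obtains the corollary as the $l=0$ case of Lemma~\ref{lem:D-CVposition} combined with the orthonormality $\sbra{\psi_m}\psi_n\rangle=\delta_{n,m}$. Your recap of the lemma's internal two-stage argument (tail truncation via Lemma~\ref{lem:smallsupport} and Cauchy--Schwarz, then the Dirac-comb reduction with suppression of the translated overlaps) is also faithful to the paper, so no gap remains.
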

  \begin{proof} It is a direct consequence of Lemma~\ref{lem:D-CVposition} for $l=0$.
  \end{proof}

 \vspace{0.8cm}

  For all $n' \ge 0$ integer, we define the states
  \begin{align}
    \sket{\bar \psi_{n'}^\rD}=\left(\frac{2 \pi}N\right)^{1/4} \sum_{j=-N/2}^{N/2-1} \bar \psi_{n'}(x_j) \ket j \; ,
  \end{align}
  with
  \begin{align}
  \label{eq:CFTeigenstates}
  \bar \psi_{n'}(x_j) = \sum_{k=-\infty}^\infty \psi_{n'}(x_j +kT) \; .
  \end{align}
  Such states will be useful to prove the following lemmas.
   Remarkably, the $\sket{\bar \psi_{n'}^\rD}$ are eigenvectors of the centered Fourier transform with eigenvalues $(-i)^{n'}$.
   To show this, we work in the bra-ket notation and let $\sket{C_j} = \sum_{k} \sket{x_j + kT}$ be the states
   that represent the corresponding Dirac combs (sums of Dirac deltas): $\sbra{C_j} \psi_{n'}\rangle =   \bar \psi_{n'}(x_j)$.
   Then, 
   \begin{align}
     \sket{\bar \psi_{n'}^\rD}=\left(\frac{2 \pi}N\right)^{1/4} \sum_{j=-N/2}^{N/2-1} \ket j \sbra{C_j} \psi_{n'}\rangle \; .
   \end{align}
   The properties of the Fourier transform when acting on the Dirac comb implies
   \begin{align}
    \hat F \sket{C_j}=(1/\sqrt N)\sum_{j'=-N/2}^{N/2-1}
   e^{i 2 \pi j j'/N} \sket{C_{j'}} \; .
   \end{align}
    The centered Fourier transform has a similar action on $\ket j$:
    \begin{align}
    F^\rD_{\rm c} \sket{j}=(1/\sqrt N)\sum_{j'=-N/2}^{N/2-1}
   e^{i 2 \pi j j'/N} \sket{{j'}} \; .
   \end{align}
    Then,
   \begin{align}
   F^\rD_{\rm c}  \sket{\bar \psi_{n'}^\rD}& = \left(\frac{2 \pi}N\right)^{1/4}   \sum_{j=-N/2}^{N/2-1} \ket j \sbra{C_j} \hat F \sket{\psi_{n'}} 
     = (-i)^{n'}  \sket{\bar \psi_{n'}^\rD} \;,
   \end{align}
   where we used $\hat F \sket{\psi_{n'}} =(-i)^{n'}  \sket{\psi_{n'}} $.
   
   We note that
   \begin{align}
  \nonumber
  \| \sket{\psi_{n}^\rD}-\sket{\bar \psi_{n}^\rD} \|^2 &=(2\pi/N)^{1/2} \sum_{j=-N/2}^{N/2-1} |\psi_{n}(x_j)-\bar \psi_{n}(x_j)|^2 \\ &
 =(2\pi/N)^{1/2} \sum_{j=-N/2}^{N/2-1} |\sum_{k \ne 0}\psi_{n}(x_j+kT) |^2 \; .
  \end{align}
  For $k \ne 0$, we obtain $|x_j + kT| \ge T/2$.  Then, if $c$ and $\tilde \beta$
  are the constants in Lemma~\ref{lem:smallsupport} and Lemma~\ref{lem:D-CVposition}, and $n \le cN$,
  Eq.~\eqref{eq:HB1} implies $| \psi_{n}(x_j +kT)| = O(\exp(-x \tilde \beta \sqrt N))$, with $x=|x_j + kT|$.
  Consider, for example, $k \ge 1$. In that case, $|x_j + kT| \ge kT/2$ and explicit calculation
  gives $\sum_{k \ge 1} |\psi_{n}(x_j+kT)| = O(\nu_1(N))$ for sufficiently large $N$.
  The result for $k \le  -1$ is similar. It follows that
  \begin{align}
  \label{eq:convapprox}
   \| \sket{\psi_{n}^\rD}-\sket{\bar \psi_{n}^\rD} \| =  \nu_1(N) \; .
  \end{align}

  \begin{lemma}
  \label{lem:D-CVmomentum}
  Given $l \ge 0$, there exists a constant  $c>0$   such that, for all $n,m \le c N$,   
  \begin{align}
  \left | \sbra{\psi_m^\rD} (p^\rD)^l \sket{\psi_n^\rD} - \sbra{\psi_m } \hat p ^l \sket{\psi_n }  \right| = \nu_1(N)   \; .
  \end{align}
   \end{lemma}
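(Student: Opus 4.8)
The plan is to reduce the momentum statement to the already-established position statement, Lemma~\ref{lem:D-CVposition}, by exploiting the fact that both $p^\rD$ and $\hat p$ are defined as Fourier conjugates of the corresponding position operators, and that the relevant Hermite states diagonalize the respective Fourier transforms with \emph{identical} eigenvalues. Concretely, by definition $(p^\rD)^l = (F^\rD_{\rm c})^{-1}(x^\rD)^l F^\rD_{\rm c}$, while in the continuous case $\hat p^l = \hat F^{-1}\hat x^l\hat F$ in the same convention. Since the continuous Hermite functions satisfy $\hat F\sket{\psi_n}=(-i)^n\sket{\psi_n}$, and the states $\sket{\bar\psi_n^\rD}$ were shown above to be exact eigenvectors of $F^\rD_{\rm c}$ with the same eigenvalues $(-i)^n$, Fourier conjugation produces identical phase factors on the two sides, leaving only the position matrix elements to be compared.

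First I would rewrite the discrete quantity using the definition of $p^\rD$,
\begin{align}
\nonumber
\sbra{\psi_m^\rD}(p^\rD)^l\sket{\psi_n^\rD} = \sbra{\psi_m^\rD}(F^\rD_{\rm c})^{-1}(x^\rD)^l F^\rD_{\rm c}\sket{\psi_n^\rD} \; ,
\end{align}
and then replace $\sket{\psi_n^\rD}$ by $\sket{\bar\psi_n^\rD}$ (and likewise for $m$) using Eq.~\eqref{eq:convapprox}. The cost of each replacement is at most $\|(x^\rD)^l\|\,\|\sket{\psi_n^\rD}-\sket{\bar\psi_n^\rD}\|$; since $\|x^\rD\|=O(\sqrt N)$ and $l=O(1)$, this is $O(N^{l/2})\nu_1(N)=\nu_1(N)$. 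Invoking the exact eigenvector property $F^\rD_{\rm c}\sket{\bar\psi_n^\rD}=(-i)^n\sket{\bar\psi_n^\rD}$, the Fourier operators pass through as scalars, leaving $(i)^m(-i)^n\,\sbra{\bar\psi_m^\rD}(x^\rD)^l\sket{\bar\psi_n^\rD}$. Converting the $\sket{\bar\psi^\rD}$ back to $\sket{\psi^\rD}$ at the same $\nu_1(N)$ cost and applying Lemma~\ref{lem:D-CVposition} yields $(i)^m(-i)^n[\sbra{\psi_m}\hat x^l\sket{\psi_n}+\nu_1(N)]$ for the whole discrete expression.

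An identical manipulation on the continuous side, using the eigenvalue relations for $\hat F$, gives $\sbra{\psi_m}\hat p^l\sket{\psi_n}=(i)^m(-i)^n\,\sbra{\psi_m}\hat x^l\sket{\psi_n}$ with exactly the same prefactor, so the two expressions agree up to $\nu_1(N)$ and the lemma follows for the constant $c$ inherited from Lemmas~\ref{lem:smallsupport} and~\ref{lem:D-CVposition}. The main obstacle is purely bookkeeping: one must confirm that the phase $(i)^m(-i)^n=(i)^{m-n}$ produced by Fourier conjugation genuinely coincides on the discrete and continuous sides, which hinges on the matching eigenvalues $(-i)^n$, and that the error amplification by $\|(x^\rD)^l\|=O(N^{l/2})$ incurred when swapping $\sket{\psi_n^\rD}$ for $\sket{\bar\psi_n^\rD}$ does not spoil the exponentially small bound. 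The latter is harmless because polynomial factors are absorbed into $\nu_1(N)$; the parity structure $m-n\equiv l \pmod 2$ of the nonvanishing matrix elements guarantees that the phase is the same unit-modulus number on both sides, so no spurious factor survives.
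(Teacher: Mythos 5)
Your proposal is correct and follows essentially the same route as the paper's proof: replace $\sket{\psi_n^\rD}$ by the periodized states $\sket{\bar\psi_n^\rD}$ at cost $\nu_1(N)$, use the exact eigenvector property of $F^\rD_{\rm c}$ (and of $\hat F$ on the continuous side) to reduce the momentum matrix elements to position ones, invoke Lemma~\ref{lem:D-CVposition}, and combine errors by the triangle inequality. Your treatment is in fact slightly more explicit than the paper's about the cancellation of the $i^{m-n}$ phases and about converting the barred states back before applying the position lemma, both of which the paper leaves implicit.
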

  \begin{proof}
 Because $\sket{\psi_{n}^\rD}$ approximates $\sket{\bar \psi_{n}^\rD}$ for $n \le cN$, where $c$ is the constant of Lemma~\ref{lem:smallsupport},
 we can transform the operators $\hat p$ and $p^\rD$ with the corresponding Fourier transformations, and use Lemma~\ref{lem:D-CVposition}
  to obtain the desired result. Since $\| p^\rD \| =\| x^\rD \| = O(\sqrt N)$,   the properties of the norm and Eq.~\eqref{eq:convapprox} imply
  \begin{align}
 \label{eq:momentumapprox1}
  | \sbra{\psi_m^\rD} (p^\rD)^l \sket{\psi_n^\rD} - \sbra{\bar \psi_m^\rD} (p^\rD)^l \sket{\bar \psi_n^\rD} |=
   \nu_1(N)   \; ,
  \end{align}
  for   $n,m \le c N$.
  Additionally, conjugation by the corresponding Fourier transforms gives
  \begin{align}
  \nonumber
& | \sbra{\bar \psi_m^\rD} (p^\rD)^l \sket{\bar \psi_n^\rD} - \sbra{  \psi_m} \hat p ^l \sket{  \psi_n }  | = \\
  & =| \sbra{\bar \psi_m^\rD} (x^\rD)^l \sket{\bar \psi_n^\rD} - \sbra{  \psi_m} \hat x ^l \sket{  \psi_n }  | \; ,
  \end{align}
  and Lemma~\ref{lem:D-CVposition} implies
  \begin{align}
  \label{eq:momentumapprox2}
   | \sbra{\bar \psi_m^\rD} (p^\rD)^l \sket{\bar \psi_n^\rD} - \sbra{  \psi_m} \hat p ^l \sket{  \psi_n }  | =
   \nu_1(N)   \; .
  \end{align}
  Applying the triangle inequality to Eqs.~\eqref{eq:momentumapprox1} and \eqref{eq:momentumapprox2} gives the desired result.
    \end{proof}

  \begin{lemma}
  \label{lem:D-CVpos-mom}
  Given $l_1,l_2 \ge 0$, there exists a constant   $c>0$   such that, for all $n,m \le c N$,   
  \begin{align}
  \left | \sbra{\psi_m^\rD}  (p^\rD)^{l_1} (x^\rD)^{l_2}  \sket{\psi_n^\rD} - \sbra{\psi_m } \hat p ^{l_1} \hat x ^{l_2} \sket{\psi_n }  \right| = \nu_1(N)   \; .
  \end{align}
   \end{lemma}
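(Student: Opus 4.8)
The plan is to reduce the mixed moment to a pure momentum moment by exploiting that $x^\rD$ acts \emph{exactly} on the discrete Hermite states in the same way that $\hat x$ acts on the Hermite functions, after which Lemma~\ref{lem:D-CVmomentum} finishes the job. Concretely, I would first apply the $l_2$ position operators to the ket, and only then deal with the momentum operators through the previously established lemma.

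First I would establish the key closure identity. Since $x^\rD$ is diagonal with eigenvalue $x_j = j\sqrt{2\pi/N}$ on $\ket j$ and $\sket{\psi_n^\rD} = (2\pi/N)^{1/4}\sum_{j=-N/2}^{N/2-1}\psi_n(x_j)\ket j$, applying $x^\rD$ simply multiplies each amplitude by $x_j$. The three-term recurrence $x\psi_n(x)=\sqrt{n/2}\,\psi_{n-1}(x)+\sqrt{(n+1)/2}\,\psi_{n+1}(x)$ holds pointwise, and because the truncation range of $j$ is identical for all three states, it follows that
\begin{align}
x^\rD \sket{\psi_n^\rD}=\sqrt{n/2}\,\sket{\psi_{n-1}^\rD}+\sqrt{(n+1)/2}\,\sket{\psi_{n+1}^\rD}
\end{align}
\emph{with no approximation error}. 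Iterating, $(x^\rD)^{l_2}\sket{\psi_n^\rD}=\sum_{n'\le n+l_2}c_{n'}\sket{\psi_{n'}^\rD}$, where the coefficients $c_{n'}$ are \emph{exactly} the same ones appearing in the continuous expansion $\hat x^{l_2}\sket{\psi_n}=\sum_{n'\le n+l_2}c_{n'}\sket{\psi_{n'}}$, the sum runs over $O(1)$ indices $n'$, and $|c_{n'}|=O(n^{l_2/2})=O(N^{l_2/2})$.

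Next I would substitute this expansion into both sides of the claimed bound. The discrete matrix element becomes $\sum_{n'\le n+l_2}c_{n'}\sbra{\psi_m^\rD}(p^\rD)^{l_1}\sket{\psi_{n'}^\rD}$ and the continuous one $\sum_{n'\le n+l_2}c_{n'}\sbra{\psi_m}\hat p^{l_1}\sket{\psi_{n'}}$, with the \emph{same} coefficients. Shrinking $c$ slightly so that $n+l_2\le c'N$ for the constant $c'$ of Lemma~\ref{lem:D-CVmomentum} (harmless since $l_2=O(1)$), that lemma gives $|\sbra{\psi_m^\rD}(p^\rD)^{l_1}\sket{\psi_{n'}^\rD}-\sbra{\psi_m}\hat p^{l_1}\sket{\psi_{n'}}|=\nu_1(N)$ for each of the $O(1)$ terms. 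The triangle inequality then bounds the total difference by $O(1)\cdot O(N^{l_2/2})\cdot\nu_1(N)$, which is again $\nu_1(N)$, since a polynomial prefactor does not change the order of an exponentially small quantity.

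I expect the only delicate point to be the exactness of the $x^\rD$ step: it is crucial that applying position operators introduces \emph{zero} error and reproduces the continuous coefficients verbatim, since this is what lets the single approximation (the momentum lemma) carry the whole bound. The asymmetry is worth emphasizing --- $x^\rD$ closes exactly on the $\sket{\psi_n^\rD}$ because it is diagonal and the Hermite recurrence is pointwise, whereas $p^\rD$ does not (the $\sket{\psi_n^\rD}$ are only approximate Fourier eigenstates, cf.\ the exact eigenstates $\sket{\bar\psi_n^\rD}$), which is precisely why the momentum factors must be handled via Lemma~\ref{lem:D-CVmomentum} rather than by an analogous exact recurrence.
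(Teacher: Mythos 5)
Your proposal is correct and follows essentially the same route as the paper's proof: both exploit the exact closure $x^\rD\sket{\psi_n^\rD}=\sqrt{n/2}\,\sket{\psi_{n-1}^\rD}+\sqrt{(n+1)/2}\,\sket{\psi_{n+1}^\rD}$ from the pointwise Hermite recurrence, expand $(x^\rD)^{l_2}\sket{\psi_n^\rD}$ with the same $O(N^{l_2/2})$ coefficients as the continuous case, and then apply Lemma~\ref{lem:D-CVmomentum} term by term, noting that a polynomial prefactor cannot spoil the exponential smallness. Your remark on the asymmetry between $x^\rD$ (exact) and $p^\rD$ (only approximate, hence requiring the momentum lemma) is precisely the structural point underlying the paper's argument.
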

  \begin{proof}
  The property of the Hermite functions  $ x \psi_n(x) = \sqrt{(n+1)/2} \psi_{n+1}(x)+ \sqrt{n/2} \psi_{n-1}(x)$
  immediately implies
  \begin{align}
  \nonumber
    x^\rD \sket{\psi^\rD_n} &= (2\pi/N)^{1/4}\sum_{j=-N/2}^{N/2-1} x_j \psi_n(x_j) \sket j \\
    \label{eq:DPosAction}
  & = \sqrt{(n+1)/2} \sket{\psi_{n+1}^\rD}+ \sqrt{n/2} \sket{\psi_{n-1}^\rD} \; ,
  \end{align}
  for all $n \ge 0$. Then, if $x^{l_2} \psi_n(x) = \sum_{l'=-l_2}^{l_2} c_{l'} \psi_{n+l'}(x)$, 
  we obtain 
  \begin{align}
   (x^\rD )^{l_2} \sket{\psi^\rD_n} =\sum_{l'=-l_2}^{l_2} c_{l'}\sket{\psi^\rD_{n+l'}}\; ,
   \end{align}  
  and
  \begin{align}
  \label{eq:discreteposmom}
  \sbra{\psi_m^\rD}  (p^\rD)^{l_1} (x^\rD)^{l_2}  \sket{\psi_n^\rD} =   \sum_{l'=-l_2}^{l_2} c_{l'} \sbra{\psi_m^\rD}  (p^\rD)^{l_1} \sket{\psi^\rD_{n+l'}} \; .
  \end{align}
  Because $c_{l'} = O(N^{l_2/2})$ and $l_1$ and $l_2$ are constants,   Lemma~\ref{lem:D-CVmomentum} implies that   Eq.~\eqref{eq:discreteposmom}
  can be approximated by
  \begin{align}
  \sum_{l'=-l_2}^{l_2} c_{l'} \sbra{\psi_m}  \hat p^{l_1} \sket{\psi_{n+l'}} =  \sbra{\psi_m}  \hat p^{l_1}  \hat x^{l_2} \sket{\psi_{n}}
  \end{align}
  within precision exponentially small in $N$, as long as $n,m \le cN$. 
  The constants $c>0$ is as in Lemma~\ref{lem:smallsupport}. The constant $\beta >0$ (used for the lower bound of $\nu_1(N)$) is as in Lemma~\ref{lem:D-CVmomentum}.  
    \end{proof}
  
  A similar result is obtained if we swap the order of $x^\rD$ and $p^\rD$, and $\hat x$ and $\hat p$.
  This can be shown by acting with the corresponding Fourier transforms.
  
 

  \begin{corollary}
  \label{cor:maincor}
  There exists a constant $c>0$ such that, for all $n \le c N$,
  \begin{align}
\| ( H_\rD -(n+1/2) ) \sket{\psi_n^\rD} \|^2 = \nu_1(N)  \; .
  \end{align}
  \end{corollary}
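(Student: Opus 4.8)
The plan is to expand the squared norm into a single expectation value and then replace each discrete matrix element by its continuous-variable (CV) counterpart, which is known exactly. Since $H^\rD = \frac12((x^\rD)^2 + (p^\rD)^2)$, I would first write
\[
\| (H^\rD - (n+\tfrac12))\sket{\psi_n^\rD}\|^2 = \sbra{\psi_n^\rD}(H^\rD)^2\sket{\psi_n^\rD} - 2(n+\tfrac12)\sbra{\psi_n^\rD}H^\rD\sket{\psi_n^\rD} + (n+\tfrac12)^2\sbra{\psi_n^\rD}\psi_n^\rD\rangle ,
\]
and handle the three pieces separately. Expanding $(H^\rD)^2$ produces, with coefficient $1/4$ each, the four ordered monomials $(x^\rD)^4$, $(p^\rD)^4$, $(x^\rD)^2(p^\rD)^2$, and $(p^\rD)^2(x^\rD)^2$.

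The core step is the term-by-term reduction to the CV QHO. Lemma~\ref{lem:D-CVposition} with $l=4$ controls $(x^\rD)^4$, Lemma~\ref{lem:D-CVmomentum} with $l=4$ controls $(p^\rD)^4$, and Lemma~\ref{lem:D-CVpos-mom} with $l_1=l_2=2$ (together with its stated position--momentum swap) controls the two mixed orderings; likewise the $l=2$ versions of Lemmas~\ref{lem:D-CVposition} and~\ref{lem:D-CVmomentum} handle $\sbra{\psi_n^\rD}H^\rD\sket{\psi_n^\rD}$, and Corollary~\ref{cor:orthogonal} gives $\sbra{\psi_n^\rD}\psi_n^\rD\rangle = 1 + \nu_1(N)$. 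I would take $c>0$ to be the smallest of the constants appearing in these results, so that all of them apply simultaneously for $n \le cN$. Because each lemma already bounds its discrepancy by $\nu_1(N)$ and all prefactors are constants, summing the four contributions gives $\sbra{\psi_n^\rD}(H^\rD)^2\sket{\psi_n^\rD} = \sbra{\psi_n}H^2\sket{\psi_n} + \nu_1(N)$, and similarly $\sbra{\psi_n^\rD}H^\rD\sket{\psi_n^\rD} = \sbra{\psi_n}H\sket{\psi_n} + \nu_1(N)$.

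Finally I would invoke that $\sket{\psi_n}$ is an exact eigenstate of the CV QHO, $H\sket{\psi_n}=(n+\tfrac12)\sket{\psi_n}$, so that $\sbra{\psi_n}H^2\sket{\psi_n}=(n+\tfrac12)^2$, $\sbra{\psi_n}H\sket{\psi_n}=(n+\tfrac12)$, and $\sbra{\psi_n}\psi_n\rangle = 1$. Substituting back, the leading $(n+\tfrac12)^2$ contributions cancel identically, leaving
\[
\| (H^\rD - (n+\tfrac12))\sket{\psi_n^\rD}\|^2 = \nu_1(N) - 2(n+\tfrac12)\nu_1(N) + (n+\tfrac12)^2\nu_1(N) .
\]
Since $n \le cN$, the surviving prefactors are polynomial in $N$, and the preamble's observation that $N^\alpha\nu_1(N)=\nu_1(N)$ absorbs them to yield the claimed $\nu_1(N)$ bound. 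The only real obstacle is bookkeeping: one must verify that every ordered monomial arising from $(H^\rD)^2$ is matched by an identically ordered CV product covered by one of the lemmas. No genuine difficulty appears, because the noncommutativity of $x^\rD$ and $p^\rD$ is already built into the lemmas, and the exact CV eigenrelation collapses all these ordered products to powers of $(n+\tfrac12)$.
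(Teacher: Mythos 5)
Your proposal is correct and follows essentially the same route as the paper: expanding the squared norm into $\sbra{\psi_n^\rD}(H^\rD)^2 - 2(n+\tfrac12)H^\rD + (n+\tfrac12)^2\sket{\psi_n^\rD}$, reducing each ordered monomial to its CV counterpart via Lemmas~\ref{lem:D-CVposition}, \ref{lem:D-CVmomentum}, and \ref{lem:D-CVpos-mom} (with the swapped-order remark) together with Corollary~\ref{cor:orthogonal}, and letting the exact eigenrelation collapse the result to zero up to $\nu_1(N)$. Your explicit handling of the polynomial prefactors through $N^\alpha\nu_1(N)=\nu_1(N)$ is a minor bookkeeping point the paper leaves implicit, but the substance is identical.
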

  \begin{proof}
  Alternatively, we can show that 
  \begin{align}
  \label{eq:normsquared}
  \left |\sbra{\psi_n^\rD} (H_\rD)^2 -2 (n+1/2) H_\rD + (n+1/2)^2 \sket{\psi_n^\rD} \right | 
  \end{align}
  is exponentially small in $N$.  Corollary~\ref{cor:orthogonal} implies that there exists $c>0$  (as in Lemma~\ref{lem:smallsupport}) such that,
  if $n \le cN$,
  \begin{align}
 | \sbra{\psi_n^\rD} (n+1/2)^2 \sket{\psi_n^\rD} - (n+1/2)^2 | = \nu_1(N) \; .
  \end{align}
Also, since $H_\rD=((x^\rD)^2 + (p^\rD)^2)/2$, Lemmas~\ref{lem:D-CVposition} and \ref{lem:D-CVmomentum} imply ($l=2$)
\begin{align}
\nonumber
 | \sbra{\psi_n^\rD} H_\rD \sket{\psi_n^\rD} -  \sbra{\psi_n } H \sket{\psi_n } | &=  | \sbra{\psi_n^\rD} H_\rD \sket{\psi_n^\rD} -  (n+1/2) |  
 \\ & =\nu_1(N) \; ,
\end{align}
 and then
 \begin{align}
| \sbra{\psi_n^\rD} (n+1/2) H_\rD \sket{\psi_n^\rD} - (n+1/2)^2 | = \nu_1(N) \; .
 \end{align} 
 Since $( H_\rD)^2=[ ((x^\rD)^4+ (p^\rD)^4) + (x^\rD)^2(p^\rD)^2 + (p^\rD)^2 (x^\rD)^2]/4$, Lemma~\ref{lem:D-CVpos-mom} ($l=2$) implies
   \begin{align}
| \sbra{\psi_n^\rD} ( H_\rD)^2 \sket{\psi_n^\rD} - (n+1/2)^2 | = \nu_1(N)  \; .
 \end{align} 
  It follows that Eq.~\eqref{eq:normsquared} can be approximated by $(n+1/2)^2-2(n+1/2)^2+(n+1/2)^2=0$
  within precision that is exponentially small in $N$. 
 
  \end{proof}

\section{High-order Trotter-Suzuki formula for the discrete QHO} 
\label{appendix:Trotter-Suzuki}
We  first prove our results for the continuous
variable QHO and then find approximations in the discrete case.
Since $[\hat x, \hat p]=i$, the operators of the QHO form the Lie algebra ${ sp} (2)$ and satisfy the following commutation relations:
  \begin{align}
       [\hat x^2, \hat p^2] =2i\{\hat x, \hat p\} \; , 
       [\hat x^2,\{\hat x, \hat p\}] =4i \hat x^2 \; , 
         \label{eq:commrel}
       [ \hat p^2,\{ \hat x, \hat p\}]  =-4i \hat p^2\; .
       \end{align}
For $s \in {\bf R}$, Eqs.~\eqref{eq:commrel} imply
   \begin{align} 
   \nonumber
   e^{-i s \hat x^2} \hat p^2 e^{i s \hat x^2}& =\hat p^2 + 2 s \{\hat x,\hat p\} + 8 s^2 \hat x^2 \; , \\
   \nonumber
   e^{-i s \hat p^2} \hat x^2 e^{i s \hat p^2}&=\hat x^2 - 2 s \{\hat x,\hat p\} + 8 s^2 \hat p^2 \; , \\
   \nonumber
   e^{-i s \hat x^2} \{\hat x,\hat p\} e^{i s \hat x^2}&=\{\hat x,\hat p\} + 4 s \hat x^2 \; , \\
   e^{-i s \hat p^2} \{\hat x,\hat p\} e^{i s \hat p^2}&=\{\hat x,\hat p\} - 4 s \hat p^2 \; .
      \label{eq:QHOtransformations}
     \end{align}

We let $U(t) =\exp (-i Ht)$ be the evolution operator of the QHO for time $t \in {\bf R}$. The second order Trotter-Suzuki (symmetric) approximation  over a course of evolution time $s$ is
 \begin{align} \label{eq:U1}
 U_1(s)=e^{-i s \hat x^2/4}  e^{-i s \hat p^2/2}e^{-i s \hat x^2/4} \; .
 \end{align}
While such an approximation is typically defined in a finite dimensional Hilbert space, 
here we use it in Hilbert spaces of infinite dimension. We also construct  higher order 
Trotter-Suzuki approximations using the recurrence relation
\begin{align}\label{eq:Upplus}
U_{p+1}(s)=\left(U_p(s_p)\right)^2  U_p(s-4s_p) \left(U_p(s_p)\right)^2 \; ,
\end{align}
with $s_p=s/(4-4^{1/(2p+1)})$ and $p=2,3,\ldots$~\cite{suzuki_qmc_1998,childs_thesis,berry_efficient_2007}. The operator $\epsilon_p (s) =U_p(s) U(-s) - \one $,
where $\one$ is the identity operation,
can be used to quantify the error made in the approximation.  
\begin{lemma}
\label{lemma:TSA}
There exists a constant $d \ge 1$ such that, for all $s$ satisfying $|s|<1/d$ and all $n\ge 0 $, $p \ge 1$,  
\begin{align}
\| \epsilon_p(s) \ket{\psi_n} \| = O((n+2) |s|^{2p+1}) \; .
\end{align}
\end{lemma}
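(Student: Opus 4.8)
The plan is to exploit the fact that the three operators $B_1:=\hat x^2$, $B_2:=\hat p^2$, $B_3:=\{\hat x,\hat p\}$ span the three–dimensional Lie algebra $sp(2)$, which is closed both under commutation [\eqref{eq:commrel}] and under conjugation by the Trotter factors [\eqref{eq:QHOtransformations}]. Writing $M(s):=U_p(s)U(-s)$, so that $\epsilon_p(s)=M(s)-\one$, every factor of $U_p(s)$ is an exponential $e^{-i\theta B_j}$ with $j\in\{1,2\}$ and step size $|\theta|=O(|s|)$ [see \eqref{eq:U1}, \eqref{eq:Upplus}], while $U(-s)=e^{iHs}$ with $H=(B_1+B_2)/2\in sp(2)$; hence $M(s)$ is a product of exponentials of algebra elements. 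The very first step I would record is the elementary bound $\|B_j\ket{\psi_n}\|=O(n+1)$ for $j=1,2,3$: applying the Hermite recurrence $\hat x\ket{\psi_n}=\sqrt{n/2}\,\ket{\psi_{n-1}}+\sqrt{(n+1)/2}\,\ket{\psi_{n+1}}$ of Appx.~A (and its Fourier conjugate for $\hat p$) twice shows that each $B_j\ket{\psi_n}$ is supported on $\ket{\psi_{n-2}},\ket{\psi_n},\ket{\psi_{n+2}}$ with coefficients $O(n)$. This is the sole source of the linear-in-$n$ behavior.

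The crux is to show that $\epsilon_p(s)$ acts on $\ket{\psi_n}$ as a \emph{single} algebra element rather than as a high product of generators. By \eqref{eq:QHOtransformations}, conjugating any $B_j$ by $e^{-i\theta B_1}$ or $e^{-i\theta B_2}$ returns a linear combination of $B_1,B_2,B_3$ whose coefficients are polynomials of degree at most two in $\theta$; thus the adjoint action of each Trotter factor preserves $sp(2)$ and is represented by an explicit $3\times3$ matrix with $O(1)$ entries once $|\theta|=O(|s|)<1/d$. Composing these matrices shows $M(s)$ is a curve through the identity in the group $SP(2)$, and for $|s|$ below a fixed threshold $1/d$ it can be written as a single exponential $M(s)=e^{-iD_p(s)}$ with $D_p(s)=d_1(s)B_1+d_2(s)B_2+d_3(s)B_3\in sp(2)$, real coefficients $d_i(s)$ depending analytically on $s$; this is exactly the Lie-group Trotter reduction of \cite{Som_15}, and $d$ is the constant that keeps us inside the domain where this single-exponential representation is valid for the non-compact group. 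Because $U_p(s)$ is by construction a symmetric formula of order $2p$ \cite{suzuki_90,berry_efficient_2007,childs_thesis}, the order condition gives $M(s)=\one+O(s^{2p+1})$; together with $D_p(0)=0$ and analyticity, this forces each $d_i(s)=O(s^{2p+1})$ (with a constant that may depend on $p$).

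Assembling the bound is then immediate. Since $D_p(s)$ is self-adjoint, $e^{-i\tau D_p(s)}$ is unitary, and from $(e^{-iD_p(s)}-\one)\ket{\psi_n}=-i\int_0^1 e^{-i\tau D_p(s)}D_p(s)\ket{\psi_n}\,d\tau$ we get $\|\epsilon_p(s)\ket{\psi_n}\|\le\|D_p(s)\ket{\psi_n}\|\le\sum_i|d_i(s)|\,\|B_i\ket{\psi_n}\|$. Inserting $|d_i(s)|=O(|s|^{2p+1})$ and $\|B_i\ket{\psi_n}\|=O(n+1)$ yields $\|\epsilon_p(s)\ket{\psi_n}\|=O((n+2)|s|^{2p+1})$, where the $(n+2)$ rather than $(n+1)$ merely makes the estimate uniform down to $n=0$.

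I expect the main obstacle to be the rigorous justification of the single-exponential representation $M(s)=e^{-iD_p(s)}$ with analytic coefficients: because $SP(2)\cong SU(1,1)$ is non-compact, the exponential map is not surjective and the Baker--Campbell--Hausdorff series has only a finite radius of convergence, so one must (i) fix $1/d$ so that the relevant product of adjoint matrices stays in the image of the exponential, and (ii) verify that all manipulations are legitimate on the common invariant domain spanned by the $\ket{\psi_n}$ (on which every $B_j$ and every $e^{-i\theta B_j}$ acts), so that differentiation in $s$ and the integral remainder are well defined for these unbounded operators. A more self-contained alternative would expand $\partial_s^{2p+1}[M(s)\ket{\psi_n}]$ by Leibniz, reorganize the resulting products of generators into nested commutators via \eqref{eq:commrel}, and push them onto $\ket{\psi_n}$ using \eqref{eq:QHOtransformations}; this avoids abstract group theory at the cost of heavier bookkeeping over the $O(5^p)$ Trotter factors, but in either route the linear-in-$n$ scaling follows entirely from the closure of $sp(2)$.
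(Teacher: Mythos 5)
Your proposal is built on the same key insight as the paper's proof---the $sp(2)$ closure under commutation [Eq.~\eqref{eq:commrel}] and conjugation [Eqs.~\eqref{eq:QHOtransformations}] reduces the Trotter error to a \emph{single} algebra element, whose action on $\ket{\psi_n}$ is $O(n+2)$ by the Hermite recurrence---and your final assembly via Duhamel, $\|(e^{-iD_p(s)}-\one)\ket{\psi_n}\|\le\sum_i|d_i(s)|\,\|B_i\ket{\psi_n}\|$, is sound. But the technical route is genuinely different, and the difference matters. The paper never takes a global logarithm of $M(s)=U_p(s)U(-s)$. It writes $\epsilon_p(s)=\int_0^s ds'\,U_p(s')\hat f_p(s')U(-s')$ [Eq.~\eqref{eq:ep}], where the logarithmic derivative $\hat f_p(s')$ is \emph{automatically} of the form $a_{l(p)}(s')\hat x^2+b_{l(p)}(s')\hat p^2+c_{l(p)}(s')\{\hat x,\hat p\}$, no matter how long the product of exponentials is or how far $M(s)$ wanders from the identity; it then proves by induction on $p$ (using the recursion \eqref{eq:Upplus} and the symmetry $U_{p+1}(s)U_{p+1}(-s)=\one$, rather than importing the order condition) that the lowest degree of these polynomials is $2p$, bounds the coefficients by a convergent series in $|s|<1/d$, and finishes using unitarity of $U_p(s')$ together with the fact that $U(-s')\ket{\psi_n}$ is a pure phase. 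Your route concentrates all the difficulty into the single-exponential representation $M(s)=e^{-iD_p(s)}$, and that step is strictly harder than anything in the paper's argument: $U_p(s)$ is a product of $\Theta(5^p)$ factors whose total ``path length'' in the group is $\Theta((5/3)^p|s|)$, so a naive argument that the product stays in a neighborhood of the identity where $\exp$ is a diffeomorphism (the group is non-compact and $\exp$ is not surjective) yields a threshold $|s|\lesssim(3/5)^p$, not the $p$-independent $1/d$ the lemma asserts; making it uniform essentially presupposes that $M(s)$ is close to $\one$, which is what is being proved. So what each approach buys: yours is conceptually shorter and connects cleanly to the Lie-group framework of~\cite{Som_15}, provided the logarithm step is granted; the paper's differential formulation sidesteps the exponential-map question entirely at the cost of re-deriving the order-$2p$ property and doing explicit coefficient bookkeeping. (To be fair, the paper's own claim $|u_l|\le d^l$ with $p$-independent $d$ is asserted tersely, and in the downstream application $|s|=\Theta(5^{-p})$ anyway, so both arguments survive where they are used.) Note finally that the ``self-contained alternative'' you sketch in your last paragraph---differentiate, reorganize into algebra elements via Eqs.~\eqref{eq:QHOtransformations}, and integrate---is not merely an alternative: it \emph{is} the paper's proof.
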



\begin{proof}
With no loss of generality, we write
\begin{align}
\epsilon_p(s)&= \int_0^s ds' \; \partial_{s'}  \epsilon_p(s')) \;.
\end{align}
From the definition of $\epsilon_p(s)$, we obtain
\begin{align}
\label{eq:ep}
 \partial_{s'}  \epsilon_p(s')= U_p(s')  \hat f_p(s') U(-s') \; ,
\end{align}
where $\hat f_p$ is an operator that depends on the approximation order $p$.
Furthermore, we can use Eqs.~\eqref{eq:QHOtransformations} to show  
\begin{align}
\hat f_p(s) = a_{l(p)}(s) \hat x^2 + b_{l(p)}(s) \hat p^2 + c_{l(p)}(s) \{ \hat x, \hat p \} \; ,
\end{align}
where $a_{l(p)}(s)$, $b_{l(p)}(s)$, and $c_{l(p)}(s)$  are polynomials in $s$ of lowest degree ${l(p)}$, and ${l(p)}$ is a positive integer that depends on $p$.
For example, if $p=1$, explicit calculation of Eq.~\eqref{eq:ep} 
results in
\begin{align}
\nonumber
 & \hat f_{1}(s)  = e^{is \hat x^2/4}e^{is \hat p^2/2} (-i \hat x^2/4)e^{-is \hat p^2/2}e^{-is \hat x^2/4} + \\
 &+
e^{is \hat x^2/4} (-i \hat p^2/2) e^{-is \hat x^2/4} -i \hat x^2/4 +i H \; ,
\end{align}
with $H= (\hat x^2 + \hat p^2)/2$.
Equations~\eqref{eq:QHOtransformations} 
imply $\hat f_{1}(s) = -i (s^4/4) \hat x^2 -i
(s^2/2) \hat p^2 -i (s^3/4) \{ \hat x, \hat p \}$ and then $l(p=1)=2$.
The first goal is to obtain upper bounds on $|a_{l(p)}(s)|$, $|b_{l(p)}(s)|$, and $|c_{l(p)}(s)|$; we will obtain
such bounds from the corresponding series expansions in $s$.

In general, we will show by induction that $l(p)=2p$. This result also follows from~\cite{suzuki_90}.
Since $U_p(s)=(\one+ \epsilon_p(s))U(s)$, we can rewrite 
Eq.~(\ref{eq:Upplus})
as
\begin{align}
\nonumber
U_{p+1}(s) = &((\one+ \epsilon_p(s_p))U(s_p ))^2 (\one+ \epsilon_p(s-4s_p))    \\
& \ \ U(s-4s_p ) ((\one+ \epsilon_p(s_p))U(s_p ))^2 \; .
\end{align}
This is a sum of $2^5$ terms; the term without $\epsilon_p$ corresponds to $(U(s_p))^2 U(s-4s_p)(U(s_p))^2 =U(s)$, and the 
remaining terms sum up to $U_{p+1}(s) - U(s)$. The sum of the terms containing a single $\epsilon_p$
is 
\begin{align}
\nonumber
 E_p(s) = & \epsilon_p (s_p) U(s) + U (s_p)\epsilon_p (s_p ) U(s-s_p) + \\
 \nonumber & + U (2s_p) 
\epsilon_p (s-4s_p)U (s-2s_p) + \\ 
\nonumber
& +U (s-2s_p)\epsilon_p (s_p)U (2 s_p)+ \\
&+  U (s-s_p)\epsilon_p (s_p) U(s_p)\; .
\end{align}

In the induction step we assume that $l(p)=2p$, for some $p \ge 1$. 
A Taylor series expansion of $\epsilon_p(s)$ can be obtained
by Taylor expanding $U_p(s')$ and $U(-s')$ in Eq.~\eqref{eq:ep}.
Because the lowest degree of $\hat f_p(s')$ is assumed to be $2p$,
integration in $s'$ implies that the lowest degree in the series expansion of
$\epsilon_p(s)$ will be $2p+1$. The lowest degree in the Taylor series
of $E_p(s)$ will be determined by the lowest degree of $4 \epsilon_p(s_p) + \epsilon_p (s-4s_p)$,
which is the operator obtained from $E_p(s)$ if we replace $U$ by $\one$ (i.e., the lowest degree term
in the expansion of $U$).
Then, since $4(s_p)^{2p+1} + (s-4s_p)^{2p+1}=0$,
the lowest degree in the series of $E_p(s)$ is, at least, $2p+2$.
Also, the lowest degree in the  series  of $U_{p+1}(s) - U(s)$  (and $\epsilon_{p+1}(s)$) is determined
by that of $E_p(s)$ (i.e., the term of lowest order in $\epsilon_p$), and   is also bounded from below by $2p+2$.
It follows that $U_{p+1}(s) -U(s)= \hat V s^{2p+2} + O(s^{2p+3})$, where $\hat V$ is some operator
that depends on (powers of) $\hat x^2$, $\hat p^2$ and $\{ \hat x, \hat p \}$.
The Trotter-Suzuki approximations determined by Eq.~\eqref{eq:Upplus} are symmetric and imply
\begin{align}
\nonumber
&\one = U_{p+1}(s) U_{p+1}(-s)  \\
\nonumber
&= (U(s) + \hat V s^{2p+2} + O(s^{2p+3})) \times \\ \nonumber & \ \ \ \times
 (U(-s) + \hat V s^{2p+2} + O(s^{2p+3})) \\
\nonumber
& = \one + U(s) \hat V s^{2p+2} +\hat V s^{2p+2} U(-s) + O(s^{2p+3})   \\
& = \one +  2 \hat V s^{2p+2} + O(s^{2p+3}) \; ,
\end{align}
which can only be satisfied if $\hat V=0$. The last equality follows 
from the expansion of $U(s)$, whose lowest-degree term is $\one$.
Then, the lowest degree in the Taylor series of $U_{p+1}(s) -U(s)$, or $\epsilon_{p+1}(s)$, is 
$2p+3 = 2(p+1)+1$, implying that $l_{p+1} = 2(p+1)$, and proving the induction step.

From Eq.~\eqref{eq:Upplus}, $U_p(s)$ is a product of $O(5^p)$ exponentials of $\hat x^2$ and $\hat p^2$.
The relations in Eqs.~\eqref{eq:QHOtransformations} and Eq.~\eqref{eq:ep} imply that the highest degree in $\hat f_p$
is smaller than $2 \times 5^p$. Since we already proved that $l(p)=2p$, we obtain
\begin{align}
a_{l(p)}(s) = \sum_{l=2p}^{2 \times 5^p} u_l \; s^l \; ,
\end{align}
and there is a similar expression for $b_{l(p)}(s)$ and $c_{l(p)}(s)$. In order to show that the dominant term
in $\epsilon_p(s)$ is that of degree $2p+1$, as in the lemma, we still need to show that the coefficients
$u_l$ are bounded. In Eqs.~\eqref{eq:QHOtransformations} we showed that the corresponding unitary 
transformations of $\hat x^2$, $\hat p^2$ and $\{ \hat x,
\hat p \}$,   needed to obtain $\hat f_p$, are linear combinations of the same operators, and the largest prefactor
in such combinations is a constant (8 in this case). The term of degree $l$ in $\hat f_p$ is obtained from $O(l)$
unitary transformations of the operators. Since $|s_p| \le |s|$ and $|s-4s_p| \le |s|$,   there exists a constant
$d>0$ such that $|u_l| \le d^l$. The value of $d$ can be determined from Eqs.~\eqref{eq:QHOtransformations}.
If $s$ is such that $|s| < 1/d$, the series for $a_{l(p)}(s)$ is convergent and $|a_{l(p)}(s)| = O((|s|/d)^{2p})$.
That is, $|a_{l(p)}(s)| = O(|s|^{2p})$, and a similar result can be obtained for $b_{l(p)}(s)$ and $c_{l(p)}(s)$.

The properties of the Hermite functions imply  $\| \hat x^2 \sket {\psi_n} \|= \| \hat p^2 \sket {\psi_n} \|=O(n+2)$,
and also $\| \{ \hat x , \hat p \}  \sket {\psi_n} \|= O(n+2)$. From the triangle inequality and   the previous results, 
we obtain
$\| \hat f_p(s')  \sket {\psi_n} \|= O((n+2) |s'|^{2p})$. 
In addition, since $U(-s')  \sket {\psi_n} =e^{is'(n+1/2)}  \sket {\psi_n}$ 
and $\|U_p(s')\|=1$,   Eq.~\eqref{eq:ep} implies
\begin{align}
\| \epsilon_p (s) \sket {\psi_n} \| & \le \int_{0}^{|s|} ds' \;  \| \hat f_p(s')  \sket {\psi_n} \|  
 =O  (  (n+2) |s|^{2p+1} )\; ,
\end{align}
which is the desired result.

\end{proof}

Lemma~\ref{lemma:TSA} basically demonstrates that $U_p(s)$ is a  product formula
approximation of $U(s)$ of order $2p+1$ in $s$.
The approximation is better for smaller values of $n$, i.e., for the low energy states. It is important to note that 
the dependence of the approximation error in $n$ is only linear.

\begin{lemma} \label{lemma:TSA2}
 The number of exponentials of $\hat x^2$ and $\hat p^2$ needed to 
 prepare $U(t) \sket{\psi_n}$, for $|t| \ge 1$, and within precision $\epsilon >0$, is
 \begin{align}
 {\cal M} = \Theta \left (|t| \exp(\gamma \sqrt{\log ((n+2) |t| /\epsilon)}) \right) \; ,
 \end{align}
 where $\gamma >0$ is a constant.
\end{lemma}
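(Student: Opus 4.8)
The plan is to approximate $U(t)=U(ks)$ by the $k$-fold product $(U_p(s))^k$ with $s=t/k$, promote the single-step bound of Lemma~\ref{lemma:TSA} to a bound on the accumulated error, and then choose $p$ and $k$ so as to minimize the total number of exponentials subject to the error being $O(\epsilon)$.

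First I would control the accumulated error by telescoping. Writing $U_p(s)=(\one+\epsilon_p(s))U(s)$ so that $U_p(s)-U(s)=\epsilon_p(s)U(s)$, and using $(U(s))^k=U(t)$, I expand
\begin{align}
[(U_p(s))^k-U(t)]\sket{\psi_n}=\sum_{j=0}^{k-1}(U_p(s))^{k-1-j}\,\epsilon_p(s)\,(U(s))^{j+1}\sket{\psi_n}.
\end{align}
Since each $U_p(s)$ is unitary and $(U(s))^{j+1}\sket{\psi_n}=e^{-i(j+1)s(n+1/2)}\sket{\psi_n}$ is a phase times $\sket{\psi_n}$, the triangle inequality and Lemma~\ref{lemma:TSA} give
\begin{align}
\| [(U_p(s))^k-U(t)]\sket{\psi_n}\|\le k\,\|\epsilon_p(s)\sket{\psi_n}\|=O\!\left((n+2)\,k\,|s|^{2p+1}\right).
\end{align}
The crucial point is that $U(s)$ keeps $\sket{\psi_n}$ an eigenstate, so the per-step error stays $O((n+2)|s|^{2p+1})$ uniformly in $j$ and the total grows only linearly in $k$. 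Substituting $|s|=|t|/k$ turns this into $O((n+2)|t|^{2p+1}k^{-2p})$, and demanding it be $O(\epsilon)$ forces $k=\Theta(|t|\,R^{1/(2p)})$ with $R:=(n+2)|t|/\epsilon$. I should then check that this choice gives $|s|=\Theta(R^{-1/(2p)})<1/d$ so that Lemma~\ref{lemma:TSA} applies; since $R\to\infty$ and $p=O(\sqrt{\log R})$ one has $R^{1/(2p)}=\exp(\Theta(\sqrt{\log R}))\to\infty$, so $|s|\to0$ and the hypothesis holds.

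Next I would count exponentials and optimize over $p$. From the recursion~\eqref{eq:Upplus}, $U_p(s)$ contains $O(5^p)$ exponentials of $\hat x^2$ and $\hat p^2$, so
\begin{align}
{\cal M}=\Theta(k\,5^p)=\Theta\!\left(|t|\,5^p\,R^{1/(2p)}\right).
\end{align}
Minimizing $5^pR^{1/(2p)}=\exp(p\ln5+\tfrac{\ln R}{2p})$ over $p$, the exponent is minimized at $p^\ast=\sqrt{\ln R/(2\ln5)}=\Theta(\sqrt{\log R})$, where it equals $\sqrt{2\ln5\,\ln R}$. Hence the optimal choice $p=\Theta(\sqrt{\log R})$ yields ${\cal M}=\Theta(|t|\exp(\gamma\sqrt{\log R}))$ with $\gamma=\sqrt{2\ln5}$, which is exactly the claimed bound. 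I would handle the integrality of $p$ by rounding $p^\ast$ to the nearest integer: because $g(p)=p\ln5+\tfrac{\ln R}{2p}$ has $g''(p^\ast)=\Theta((\log R)^{-1/2})$, an $O(1)$ change in $p$ perturbs the exponent by only $o(1)$, so $\exp(g)$ changes by at most a constant factor absorbed into $\Theta$; the ceiling on $k$ likewise costs only a constant.

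The main obstacle I anticipate is the lower-bound (i.e. the $\Theta$ rather than $O$) direction. The upper bound is exactly the computation above. One clean reading is immediate: with $p$ and $k$ fixed to the explicit values just chosen, ${\cal M}=k\,5^p$ is a definite quantity whose order is genuinely $\Theta(|t|\exp(\gamma\sqrt{\log R}))$. If instead one wants optimality over the whole product-formula family, one must argue that no parameter choice $(p,k)$ achieving error $O(\epsilon)$ can use fewer exponentials; this needs the per-step bound of Lemma~\ref{lemma:TSA} to be two-sided (the leading $s^{2p+1}$ coefficient acting on $\sket{\psi_n}$ is genuinely $\Theta(n+2)$, as the numerics of Fig.~\ref{fig:TS1} suggest) together with the contributions over the $k$ steps not cancelling, so that $k\ge\Omega(|t|R^{1/(2p)})$ is forced; then $\min_p 5^pR^{1/(2p)}\ge\exp(\gamma\sqrt{\log R})$, since the discrete minimum dominates the continuous one. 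Establishing this non-cancellation is the delicate step, and for the purposes of the stated result it suffices to read $\Theta$ as the exact order of the explicitly chosen ${\cal M}=k\,5^p$.
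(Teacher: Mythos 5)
Your proposal is correct and follows essentially the same route as the paper's proof: split $U(t)=(U(s))^k$, apply Lemma~\ref{lemma:TSA} per step with linear error accumulation (your telescoping argument is just the explicit form of the paper's ``subadditivity of errors,'' valid precisely because $U(s)$ keeps $\sket{\psi_n}$ an eigenstate), bound ${\cal M}$ by $k\,5^p$, and minimize $5^p R^{1/(2p)}$ over $p$ to get $p=\Theta(\sqrt{\log R})$ and $\gamma\approx\sqrt{2\log 5}$. Your closing remarks on integrality of $p$, the validity of $|s|<1/d$, and reading $\Theta$ as the order of the explicitly chosen $k\,5^p$ match how the paper itself treats these points.
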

\begin{proof}
We first divide $t$ into $k$ intervals of size $s =t/k$, i.e., $U(t) = (U(s))^k$.   We will
approximate each $U(s)$ by $U_p(s)$, for some $p \ge 1$, and then choose $p$ and $k$
that minimize the number of exponentials in the product $(U_p(s))^k$ to obtain $\cal M$.
If $|s| < 1/d$, for some constant $d>0$, subadditivity of errors and Lemma~\ref{lemma:TSA} imply
\begin{align}
\label{eq:TSAerror}
k ((n+2) |s|^{2p+1}) = O(\epsilon) \; .
\end{align}
Each $U_p(s)$ can be implemented with less than $5^p$ exponentials of $\hat x^2$ and $\hat p^2$.
Then, the number of exponentials in  $(U_p(s))^k$
is ${\cal M} \le k 5^p$.
Equation~\eqref{eq:TSAerror} implies that there exists
\begin{align}
k = \Theta \left( \frac {(n+2)^{1/2p} |t|^{1+1/2p}} {\epsilon^{1/2p}}\right)
\end{align}
to satisfy the desired error bound. This implies that the number of exponentials is, for any $p=1,2,\ldots$,
\begin{align}
{\cal M }(p)= \Theta \left( \frac {5^p (n+2)^{1/2p} |t|^{1+1/2p}} {\epsilon^{1/2p}}\right) \;.
\end{align}
It is simple to obtain the optimal value of $p$, defined by $p = \arg \min_{p' \ge 1} {\cal M}(p')$. The result is
\begin{align}
 &\lceil  \sqrt{\log((n+2)|t|/\epsilon) /  (2\log 5)} \rceil \ge p \; ,\\
 & p \ge \lfloor  \sqrt{\log((n+2)|t|/\epsilon) /  (2\log 5)} \rfloor \; ,
\end{align}
implying that there is a constant $\gamma >0$ such that
\begin{align}
{\cal M} 
 = \Theta \left ( |t| \exp(\gamma \sqrt{\log((n+2)|t|/\epsilon)}) \right) \; .
\end{align}
The constant $\gamma$ can be obtained from the value of $p$
and is approximately $\sqrt{2 \log 5}$. The idea of computing an optimal value of $p$
was also considered in~\cite{berry_efficient_2007}.

Note that $|s|$ decreases with $|t|$ so the assumption $|s| < 1/d$ is valid, with no loss
of generality. In particular, we can always reduce $s$
by a constant factor, at a constant increase in the cost, without changing the total order of operations. For example, we 
can assume that $|s| 5^p \le \tilde c$ for any constant $\tilde c=O(1)$, and still satisfy
${\cal M}= \Theta \left ( |t| \exp(\gamma \sqrt{\log((n+2)|t|/\epsilon)}) \right)$. 

\end{proof}

Three remarks are in order. First, so far we assumed $\hbar \rightarrow 1$ so we can disregard units.
If the Hamiltonian is $H = \hbar \omega (\hat x^2 + \hat p^2)/2$, and units are considered,
then the number of exponentials is ${\cal M}= \Theta \left ( \omega |t| \exp(\gamma \sqrt{\log((n+2)\omega |t|/\epsilon)}) \right)$.
We also note that $|s| 5^p =\Theta(1)$ in Lemma~\ref{lemma:TSA2}.  
Also, if $|t|=O(1)$ and $\epsilon=O(1)$, the number of exponentials is ${\cal M}=\Theta(\exp(\tilde \gamma \sqrt{\log( n+2)}))$,
for some constant $\tilde \gamma >0$.
This implies $\lim_{n\rightarrow \infty} {\cal M}/n^\alpha =0$  and $\lim_{n\rightarrow \infty} (\log n)^\alpha/{\cal M}=0$, for all $\alpha >0$.
Finally, if $s$ and $p$ are as in Lemma~\ref{lemma:TSA2}, then $\| [(U_p(s))^k - U(t)] \sket{\psi_{m}} \| = O(\epsilon)$ for all $m \le n$.


In the discrete case, we define 
the $N \times N$ unitary matrices $U^{\rD}_p(s)$ by replacing $\hat x \rightarrow x^\rD$ and 
$\hat p \rightarrow p^\rD$ in the definition of $U_1(s)$ and $U_p(s)$ -- see Eqs.~\eqref{eq:U1} and~\eqref{eq:Upplus}. 
The next goal is to use the previous results
for continuous variables, and show that $U^\rD(t) \sket{\psi_n^\rD}$ approximates $(U^\rD_p(s))^k \sket{\psi^\rD_n}$,
within precision $O(\epsilon)$, for certain values of $n$.
Because $U^\rD_p(s)$ is unitary, it will suffice to show $\sbra{\psi_n^\rD}(U^\rD_p(s))^k \sket{\psi^\rD_n}$ is sufficiently close to
 $\sbra{\psi_n}(U_p(s))^k \sket{\psi_n}$ as these are close to 1 in absolute value.

\begin{lemma}
\label{lem:TSA10}
Let $s$ and $p$ satisfy $|s| 5^p =\Theta(1)$. Then, there exists a constant   $c'>0$ such that, for all $n \le c' N$,
\begin{align}
 | \sbra{\psi_n^\rD} U_p^\rD(s) \sket{\psi_n^\rD} - \sbra{\psi_n} U_p(s) \sket{\psi_n} |& =  O(5^p (\nu_1(N))^{|s|})  \; .
\end{align}
\end{lemma}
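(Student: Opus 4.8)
The plan is to compare the two overlaps by routing everything through a single linear map and reducing the problem to an operator-intertwining estimate. I would define the map $\Phi$ on the span of the low-energy Hermite functions by $\Phi\sket{\psi_m}=\sket{\psi_m^\rD}$. The decisive structural fact, already available as Eq.~\eqref{eq:DPosAction}, is that $\Phi$ intertwines $\hat x$ and $x^\rD$ \emph{exactly}: $x^\rD\Phi\sket{\psi_m}=\Phi\hat x\sket{\psi_m}$, hence $(x^\rD)^2\Phi=\Phi\hat x^2$ and $e^{-i\theta(x^\rD)^2}\Phi=\Phi e^{-i\theta\hat x^2}$ exactly on this span. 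For the momentum operator the intertwining is only approximate: writing $p^\rD=(F_{\rm c}^\rD)^{-1}x^\rD F_{\rm c}^\rD$ and $\hat p=\hat F^\dagger\hat x\hat F$, and using that the states $\sket{\bar\psi_m^\rD}$ are exact eigenvectors of $F_{\rm c}^\rD$ with eigenvalue $(-i)^m$ together with $\|\sket{\psi_m^\rD}-\sket{\bar\psi_m^\rD}\|=\nu_1(N)$ [Eq.~\eqref{eq:convapprox}], one obtains $F_{\rm c}^\rD\Phi=\Phi\hat F$ up to an error of order $\nu_1(N)$ on any low-energy state. Since $\langle\Phi\psi_n|\Phi\psi_m\rangle=\sbra{\psi_n^\rD}\psi_m^\rD\rangle=\delta_{n,m}+\nu_1(N)$ by Corollary~\ref{cor:orthogonal}, the difference of overlaps is bounded by $\|(U_p^\rD(s)\Phi-\Phi U_p(s))\sket{\psi_n}\|$ plus a harmless $\nu_1(N)$ term, so it suffices to estimate this intertwining defect.

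To estimate it I would telescope over the $M=O(5^p)$ elementary factors. Writing $U_p(s)=V_M\cdots V_1$ with $V_j=e^{-i\theta_j O_j}$, $O_j\in\{\hat x^2,\hat p^2\}$, $|\theta_j|=O(|s|)$, and $U_p^\rD(s)=V_M^\rD\cdots V_1^\rD$ the discrete product, one has the standard identity
\begin{align}
\nonumber
U_p^\rD(s)\Phi-\Phi U_p(s)=\sum_{j=1}^M V_M^\rD\cdots V_{j+1}^\rD\,(V_j^\rD\Phi-\Phi V_j)\,V_{j-1}\cdots V_1 \;.
\end{align}
Because the $V_k^\rD$ are unitary, the defect is bounded by $\sum_j\|(V_j^\rD\Phi-\Phi V_j)\chi_{j-1}\|$ with $\chi_{j-1}=V_{j-1}\cdots V_1\sket{\psi_n}$. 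For the factors with $O_j=\hat x^2$ the exact intertwining makes $V_j^\rD\Phi-\Phi V_j$ vanish on the low-energy sector, so that (modulo the leakage controlled below) only the $O(5^p)$ momentum factors contribute. For such a factor, $V_j^\rD\Phi-\Phi V_j=(F_{\rm c}^\rD)^{-1}e^{-i\theta_j(x^\rD)^2}F_{\rm c}^\rD\Phi-\Phi\hat F^\dagger e^{-i\theta_j\hat x^2}\hat F$; inserting the approximate $F_{\rm c}^\rD\leftrightarrow\hat F$ intertwining together with the exact $\hat x^2$-intertwining reduces each term to a discrepancy between $\sket{\psi_m^\rD}$ and $\sket{\bar\psi_m^\rD}$ evaluated on the evolved state, which is controlled by the tail/translation estimates of Lemma~\ref{lem:smallsupport}.

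The main obstacle — and the place where the exponent $|s|$ in $(\nu_1(N))^{|s|}$ is produced — is that the approximate momentum intertwining requires its argument $\chi_{j-1}$ to lie in the low-energy subspace, whereas the preceding squeezing and dispersion factors progressively spread the state in position space and push weight toward higher Hermite levels. The crux is therefore a quantitative leakage bound: starting from $\sket{\psi_n}$ with $n\le c'N$ (with $c'$ chosen strictly below the constant $c$ of Lemma~\ref{lem:smallsupport}), the intermediate states $\chi_{j-1}$ retain all but an exponentially small fraction of their weight inside the low-energy sector, since $|\theta_j|=O(|s|)$ and $|s|\,5^p=\Theta(1)$ keep the total evolution parameter $\Theta(1)$. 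The evolution enlarges the effective position-space width of the state, so re-running the tail estimate of Lemma~\ref{lem:smallsupport} on the spread state replaces the boundary tail $\nu_1(N)$ by the weaker, evolution-dependent bound $(\nu_1(N))^{|s|}$ per momentum factor; this is precisely the form needed for the subsequent $k$-fold iteration in Lemma~\ref{lem:TSA}. Summing the $O(5^p)$ momentum contributions, each of size $O((\nu_1(N))^{|s|})$, and absorbing the $\nu_1(N)$ non-orthonormality term, then yields the claimed bound $O(5^p(\nu_1(N))^{|s|})$ for all $n\le c'N$, with $c'$ dictated by the leakage estimate and taken slightly below $c$.
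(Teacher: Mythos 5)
Your skeleton --- telescoping over the $O(5^p)$ elementary factors, using the exact intertwining $x^\rD\Phi=\Phi\hat x$ from Eq.~\eqref{eq:DPosAction} and the approximate Fourier intertwining $F_{\rm c}^\rD\Phi\approx\Phi\hat F$ via the states $\sket{\bar\psi_m^\rD}$ and Eq.~\eqref{eq:convapprox} --- is essentially the same algebra the paper exploits. The gap is in the one step you yourself flag as ``the crux'' and then do not prove: the control of the intermediate states $\chi_{j-1}=V_{j-1}\cdots V_1\sket{\psi_n}$ outside the low-energy sector, and with it your claim that this leakage is what produces the exponent $|s|$ in $(\nu_1(N))^{|s|}$. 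Both intertwining relations you rely on are available only on the span of $\sket{\psi_m}$ with $m\le cN$: above that level $\sbra{\psi_m^\rD}\psi_{m'}^\rD\rangle$ is not close to $\delta_{m,m'}$, Eq.~\eqref{eq:convapprox} fails, and the sampling map $\Phi$ is not even bounded, so the high-level component can only be estimated through an $\ell_1$-weighted tail $\sum_{m>cN}|a_m|\,\|\sket{\psi_m^\rD}\|$ with $\|\sket{\psi_m^\rD}\|=O(N^{1/4})$. But $\chi_{j-1}$ is an \emph{exactly} evolved CV state, hence supported on all Hermite levels, so you need an $\ell_1$-type bound on its Hermite coefficients above level $cN$, uniformly over the $O(5^p)$ partial products. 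Lemma~\ref{lem:smallsupport}, which you invoke, bounds position-space tails of individual Hermite functions; it says nothing about the Hermite-level tails of $e^{-i\theta_j\hat x^2}\cdots e^{-i\theta_1\hat p^2}\sket{\psi_n}$, and no such lemma exists in the paper. Moreover, your heuristic for the exponent (position-space spreading degrading $\nu_1(N)$ to $(\nu_1(N))^{|s|}$) is not the right mechanism: the per-factor parameters are $O(|s|)$, so tail constants degrade by factors $1+O(|s|)$, which does not yield a power $|s|$.

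The exponent $|s|$ in the paper comes from a different device, which also dissolves the leakage problem by construction: each exponential is first replaced by its Taylor polynomial of order $l=\lceil 2e|s|N\rceil$. The truncation costs $O((N|s|)^l/l!)=O((\nu_1(N))^{|s|})$ per factor --- this, and not any leakage estimate, is the origin of the exponent --- and, crucially, a degree-$l$ polynomial in $(x^\rD)^2$ or $(p^\rD)^2$ moves a discrete Hermite state by at most $2l$ levels, with coefficients identical to the CV ones (exactly for $x^\rD$, and up to $\nu_1(N)$ for $p^\rD$ via $\sket{\bar\psi_m^\rD}$). Hence every intermediate state stays inside a band of width $5^p\cdot 2l=O(5^p|s|N)=O(N)$, and the hypothesis $5^p|s|=\Theta(1)$ (with a small enough constant) is precisely what keeps this band below $cN$; there is nothing escaping the low-energy sector to control. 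To repair your argument you would either have to import this truncation step --- at which point you recover the paper's proof --- or prove the missing $\ell_1$ Hermite-tail bound for products of Gaussian unitaries from scratch and show the accumulated errors still assemble into $O(5^p(\nu_1(N))^{|s|})$; as written, that bound is asserted, not derived.
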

\begin{proof}
By definition, $U_p^\rD(s)$ is a product of $ r \le 5^p$ exponentials of $(x^\rD)^2$ and $(p^\rD)^2$:
\begin{align}
\label{eq:discreteTSA}
U_p^\rD(s) = e^{-i t_1 (x^\rD)^2}e^{-i t_2 (p^\rD)^2} \ldots e^{-i t_r (x^\rD)^2} \; .
\end{align}
The corresponding evolution times satisfy $|t_i| < |s|$. We first approximate each exponential
by truncating the Taylor series at order $l$, where $l$ will be chosen below. Since $\|(x^\rD)^2 \| = \|(p^\rD)^2 \| =O(N)$,
the subadditivity of errors implies that the overall error in approximating all the exponentials in Eq.~\eqref{eq:discreteTSA} is $O(5^p (N|s|)^l/l!)$,
assuming that $l \ge N|s|$.
We then choose, for example, $l=\lceil 2e |s| N \rceil$ and Stirling's approximation implies
\begin{align}
O(5^p (N|s|)^l/l!) = O(5^p (1/2)^{2e |s| N})  =O(5^p (\nu_1(N))^{|s|})  \; .
\end{align}
 The property $x \psi_n(x) = \sqrt{n/2} \; \psi_{n-1}(x) + \sqrt{(n+1)/2} \; \psi_{n+1}(x)$ implies 
 \begin{align}
 x^\rD \sket{\psi_n^\rD} = \sqrt{n/2} \sket{\psi_{n-1}^\rD} + \sqrt{(n+1)/2} \sket{\psi_{n+1}^\rD}
 \end{align}
 so that 
 the approximation of $e^{-i t_j  (x^\rD)^2}$, when acting on $\sket{\psi_n^\rD}$, gives
 $ \sum_{n'=0}^{n+2l} c_{n'}  \sket{\psi_{n'}^\rD}$. The coefficients $c_{n'}$ can be obtained
 from the continuous-variable case; that is,
 \begin{align}
 \left( \sum_{l'=0}^l (-i t_j \hat x^2)^{l'}/l'! \right)  \sket{\psi_{n}}= \sum_{n'=0}^{n+2l} c_{n'}  \sket{\psi_{n'}} \; .
 \end{align}
 
  If $n+2l \le cN$, for some constant $c>0$ determined in Lemma~\ref{lem:smallsupport} and Lemma~\ref{lem:D-CVposition},
 we can  approximate $\sket{\psi_{n'}^\rD}$ by $ \sket{\bar \psi_{n'}^\rD}$ within precision $\nu_1(N)$ --
 see Eq.~\eqref{eq:convapprox}.  
 Since $e^{-i t_j  (p^\rD)^2} =  F^\rD_{\rm c} e^{-i t_j  (x^\rD)^2} (F^\rD_{\rm c})^\dagger$, a truncated series
 approximation of $e^{-i t_j  (p^\rD)^2}$, when acting on $\sket{\psi_n^\rD}$, gives
 $\sum_{n'=0}^{n+2l} d_{n'}  \sket{\psi_{n'}^\rD} +\nu_1(N)$. The coefficients $d_{n'}$
 can also be obtained from the corresponding continuous-variable case:
  \begin{align}
 \left( \sum_{l'=0}^l (-i t_j \hat p^2)^{l'}/l'! \right)  \sket{\psi_{n}}= \sum_{n'=0}^{n+2l} d_{n'}  \sket{\psi_{n'}} \; .
 \end{align}
 
 Because we approximate a product of $O(5^p)$ exponentials,
 the above approximations are valid as long as $n + 5^p 2 l \le cN$. Then, the  assumptions 
 of working within the``low-energy" subspace   apply in this analysis. Equivalently, 
 we can assume that $n \le c'N$, for some constant $c'<c$. The result is
 \begin{align}
 \nonumber
 U_p^\rD(s)   \sket{\psi_n^\rD} = \sum_{n'=0}^{cN} x_{n'}  \sket{\psi_{n'}^\rD} +  
 O(5^p \nu_1(N)) + \\ + O(5^p( \nu_1(N))^{|s|}) \;.
 \end{align}
 Because $|s|=O(1)$ and $\nu_1(N) <1$, the dominant order in the approximation is $O(5^p (\nu_1(N))^{|s|})$.
 The coefficients $x_{n'}$ can be obtained from 
 \begin{align}
 \nonumber
&( \sum_{l_1=0}^l  (-i t_1 \hat x^2)^{l_1}/l_1 ! ) ( \sum_{l_2=0}^l  (-i t_2 \hat p^2)^{l_2}/l_2 ! ) \ldots \\
 & \ldots ( \sum_{l_r=0}^l  (-i t_r \hat x^2)^{l_r}/l_r ! ) \sket{\psi_n} = \sum_{n'=0}^{cN} x_{n'}  \sket{\psi_n'} \; .
 \end{align}

Since  $n+5^p (2l) \le cN$, approximating
$U_p(s)$ by truncating the Taylor series of each exponential of $\hat x^2$ and $\hat p^2$ at order $l$ implies
\begin{align}
\nonumber
U_p(s) \sket{\psi_n} & = \sum_{n'=0}^{cN} x_{n'}  \sket{\psi_n'} + O(5^p (N|s|)^l/l!) \\ &
= \sum_{n'=0}^{cN} x_{n'}  \sket{\psi_n'} + O(5^p (\nu_1(N))^{|s|}) \; .
\end{align}
 Then,
 \begin{align}
 \nonumber
 x_n & = \sbra{\psi_n}U_p(s) \sket{\psi_n}+ O(5^p (\nu_1(N))^{|s|}) \\ &
 = \sbra{\psi_n^\rD} U_p^\rD(s)   \sket{\psi_n^\rD} +  O(5^p (\nu_1(N))^{|s|})\; ,
 \end{align}
 where we used $n \le c'N \le cN$ and $|s| \le 1$, $5^p \ge 1$. The result is
 \begin{align}
 \label{eq:Sevolapprox}
 \sbra{\psi_n}U_p(s) \sket{\psi_n} = \sbra{\psi_n^\rD} U_p^\rD(s)   \sket{\psi_n^\rD} + O(5^p (\nu_1(N))^{|s|}) \; .
 \end{align}
\end{proof}

For the following results, we will assume that $n \le N'$ for some $N' \ge 1$.
We will set $s$ and $p$ as given by Lemma~\ref{lemma:TSA2}, that is 
\begin{align}
\label{eq:spoptimal2}
p & = \Theta \left(\sqrt{\log(N' |t|/\epsilon) } \right) \; ,
\end{align}
and $5^p |s|=\tilde c$, for some constant $\tilde c>0$. 

\begin{corollary}
\label{cor:TSA}
Let $p$ be as in Eq.~\eqref{eq:spoptimal2}, $N' \ge 1$, $|t| \ge 1$, and $\epsilon >0$. Then, 
there exist constants $c'>0$, $\delta >0$, and dimension $N=\lceil \exp(\delta \sqrt{\log (N' |t|/\epsilon)}) +N'/c \rceil$ such that, for all $n \le N'$,   
 \begin{align}
 | \sbra{\psi_n^\rD} U_p^\rD(s) \sket{\psi_n^\rD} - \sbra{\psi_n} U_p(s) \sket{\psi_n} |& =  O (\epsilon |s/t|) \; .
 \end{align}
\end{corollary}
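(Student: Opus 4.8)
The plan is to read off the corollary from Lemma~\ref{lem:TSA10} by choosing the dimension $N$ just large enough to push the exponentially small remainder below the per-step target $O(\epsilon|s/t|)$ (this target is designed so that, after the later summation over $k=t/s$ intervals in Lemma~\ref{lem:TSA}, the accumulated error is $O(\epsilon)$). First I would verify the hypotheses of Lemma~\ref{lem:TSA10}: the prescription $5^p|s|=\tilde c$ gives precisely $|s|5^p=\Theta(1)$, while the linear-in-$N'$ term in the definition of $N$ ensures $N\ge N'/c'$, so that $n\le N'\le c'N$ for every $n\le N'$ and the lemma applies. It then yields
\begin{align}
| \sbra{\psi_n^\rD} U_p^\rD(s) \sket{\psi_n^\rD} - \sbra{\psi_n} U_p(s) \sket{\psi_n} | = O\!\left(5^p (\nu_1(N))^{|s|}\right) \;.
\end{align}

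Next I would simplify the right-hand side using $5^p=\tilde c/|s|$ together with $\nu_1(N)\le e^{-\beta N}$ (for some $\beta>0$ and large $N$), which since $|s|>0$ gives $(\nu_1(N))^{|s|}\le e^{-\beta N|s|}$ and hence the bound $O(|s|^{-1}e^{-\beta N|s|})$. Writing $L:=\log(N'|t|/\epsilon)$, the optimal-$p$ choice of Eq.~\eqref{eq:spoptimal2} gives $p=\Theta(\sqrt L)$, so $|s|=\Theta(5^{-p})=\exp(-\Theta(\sqrt L))$. To force $|s|^{-1}e^{-\beta N|s|}=O(\epsilon|s|/|t|)$ I would take logarithms, reducing the requirement to
\begin{align}
\beta N |s| \ge \log(1/\epsilon) + \log|t| - 2\log|s| + O(1) \;.
\end{align}
Since $N'\ge 1$ and $|t|\ge 1$, the first two terms are at most $L$, while $-2\log|s|=\Theta(\sqrt L)$, so the whole right side is $O(L)$. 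Dividing by $\beta|s|$, it suffices to take $N=\Omega(L/|s|)=\Omega(L\,5^p)$.

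It remains to check that $N=\exp(\delta\sqrt L)$ meets this for suitable $\delta$. Since $5^p=\exp(\Theta(\sqrt L))$, we have $L\,5^p=\exp(\Theta(\sqrt L)+\log L)$, and because $\log L=o(\sqrt L)$ this is dominated by $\exp(\delta\sqrt L)$ once $\delta$ exceeds the implied constant and $L$ is large. Thus in $N=\lceil\exp(\delta\sqrt{\log(N'|t|/\epsilon)})+N'/c\rceil$ the first term secures the error bound and the second secures the applicability of Lemma~\ref{lem:TSA10}, delivering the claimed $O(\epsilon|s/t|)$.

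The main obstacle I anticipate is the tension between the two roles of the step size: $|s|$ must be exponentially small in $\sqrt L$ for the high-order formula to be accurate, yet the discretization error $e^{-\beta N|s|}$ is only small when $N|s|$ is large, which forces $N$ to grow like $1/|s|=\Theta(5^p)$. The delicate bookkeeping is confirming that this $N$ remains subexponential — that the polynomial factor $L$, the prefactor $|s|^{-1}$, and the $|s|^{-2}$ appearing inside the logarithm can all be absorbed by an arbitrarily small increase in $\delta$ — which hinges precisely on $\log L=o(\sqrt L)$.
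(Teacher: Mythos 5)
Your proposal is correct and follows essentially the same route as the paper: both rest on Lemma~\ref{lem:TSA10} under the condition $n \le N' \le c'N$, replace $(\nu_1(N))^{|s|}$ by $e^{-\beta N|s|}$, and reduce the target $O(\epsilon|s/t|)$ to the requirement $\beta N|s| - \Theta(p\log 5) = \Omega(\log(N'|t|/\epsilon))$, satisfied by $N = \Theta(\mathrm{poly}(L)\,5^p) \le \exp(\delta\sqrt{L})$. Your bookkeeping via $L$ and $\log L = o(\sqrt{L})$ is just a rephrasing of the paper's argument via $p^2 = \Theta(L)$ and $N = O(p^2/|s|)$.
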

\begin{proof}
Note that $n \le N' \le c'N$, so that Eq.~\eqref{eq:Sevolapprox} is valid. We then consider the term
$O(5^p (\nu_1(N))^{ |s|}))$ in Lemma~\ref{lem:TSA10}.
To make this term $O(\epsilon |s/ t|)$, it suffices to satisfy
\begin{align}
5^p e^{-\beta N |s|} = O(\epsilon |s/t|) 
 = O(\epsilon 5^{-p}/|t|) \;,
\end{align}
for some constant $\beta >0$ since $\nu_1(N)$ is exponentially small in $N$. This implies $5^{2p} e^{-\beta N |s|} = O(\epsilon/|t|)$ or, equivalently,
\begin{align}
\beta N |s| - 2p \log 5 = \Omega (\log(|t|/\epsilon)) \;.
\end{align}
Since $p^2 = \Theta (\log(N' |t|/\epsilon))$, a choice of $N$ that satisfies $\beta N |s| - 2p \log 5 = \Omega (p^2)$ also works. 
This can be satisfied if $N= O(p^2/|s|)$, which is exponentially large in $p$. Then, there exists a constant $\delta >0$ such that, if $N \le \exp(\delta \sqrt{\log(N'|t|/\epsilon)})$,
the desired error bound is obtained if, in addition, $N ' \le c ' N$.
%
\end{proof}

We can use the previous Lemmas to prove one of our main results:

\begin{lemma}
\label{lem:TSA}
Let  $p$ be as in Eq.~\eqref{eq:spoptimal2}, $N' \ge 1$, $|t| \ge 1$, and $\epsilon >0$.
 Then, there exist constants $c'>0$, $\delta >0$,   and dimension $N= \lceil \exp(\delta \sqrt{\log (N' |t|/\epsilon)}) +N'/c' \rceil$, such that
\begin{align}
\| [(U_p^\rD(s))^k - U^\rD(t) ]\sket {\psi_n^\rD} \| = O(\epsilon) \; ,
\end{align}
for all $n \le N'$, where  $k=t/s$. The unitary $(U_p^\rD(s))^k$
is a product of ${\cal M}= O(|t| 5^{2p})$ exponentials of $(x^\rD)^2$ and $(p^\rD)^2$.  
\end{lemma}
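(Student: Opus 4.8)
The plan is to reduce the discrete statement to the continuous-variable product-formula estimates (Lemma~\ref{lemma:TSA} and Lemma~\ref{lemma:TSA2}) through the dictionary between $\sket{\psi_n^\rD}$ and $\sket{\psi_n}$ built in Lemma~\ref{lem:TSA10} and Corollary~\ref{cor:TSA}, together with the near-eigenstate property of the discrete Hermite states. I would start from the triangle inequality
\begin{align}
\nonumber
\| [(U_p^\rD(s))^k - U^\rD(t)]\sket{\psi_n^\rD} \| & \le \| (U_p^\rD(s))^k \sket{\psi_n^\rD} - e^{-i(n+1/2)t} \sket{\psi_n^\rD} \| \\
& \quad + \| (U^\rD(t) - e^{-i(n+1/2)t}) \sket{\psi_n^\rD} \| \; ,
\end{align}
and treat the two terms separately. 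The second term is $O(|t|\,\nu_1(N))$ by Eq.~\eqref{eq:additional2} (the states $\sket{\psi_n^\rD}$ are approximate eigenvectors of $H^\rD$ with eigenvalue $n+1/2$), which the prescribed $N=\exp(\Omega(\sqrt{\log(N'|t|/\epsilon)}))$ renders $O(\epsilon)$.

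The first term is the crux, and I would bound it by an induction on the number of steps that avoids tracking how the support of the state spreads under repeated applications of $U_p^\rD(s)$. The single-step estimate I need is
\begin{align}
\label{eq:planperstep}
\| U_p^\rD(s) \sket{\psi_n^\rD} - e^{-i(n+1/2)s} \sket{\psi_n^\rD} \| = O((n+2)|s|^{2p+1}) + O(5^p (\nu_1(N))^{|s|}) =: \Delta \; ,
\end{align}
valid for $n \le c'N$. This follows from the full-state identity established inside the proof of Lemma~\ref{lem:TSA10}, namely $U_p^\rD(s)\sket{\psi_n^\rD} = \sum_{n'} x_{n'}\sket{\psi_{n'}^\rD} + O(5^p(\nu_1(N))^{|s|})$ with exactly the coefficients $x_{n'}$ appearing in $U_p(s)\sket{\psi_n}$, combined with the continuous identity $U_p(s)\sket{\psi_n} = e^{-i(n+1/2)s}(\one + \epsilon_p(s))\sket{\psi_n}$ and the bound $\|\epsilon_p(s)\sket{\psi_n}\| = O((n+2)|s|^{2p+1})$ of Lemma~\ref{lemma:TSA}; the near-orthonormality of Corollary~\ref{cor:orthogonal} lets me pass from the continuous coefficients to the discrete norm at the cost of a further $\nu_1(N)$. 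Writing $R_j := (U_p^\rD(s))^j\sket{\psi_n^\rD} - e^{-i(n+1/2)js}\sket{\psi_n^\rD}$ and expanding $(U_p^\rD(s))^{j+1}\sket{\psi_n^\rD} = e^{-i(n+1/2)js}\,U_p^\rD(s)\sket{\psi_n^\rD} + U_p^\rD(s) R_j$, unitarity gives $\|U_p^\rD(s) R_j\| = \|R_j\|$, so Eq.~\eqref{eq:planperstep} yields $\|R_{j+1}\| \le \|R_j\| + \Delta$ and hence $\|R_k\| \le k\Delta$. The point of this framing is that $U_p^\rD(s)$ is only ever applied to the single low-energy state $\sket{\psi_n^\rD}$ (plus a remainder controlled in norm), so the one-step spreading handled internally by the truncation argument of Lemma~\ref{lem:TSA10} is all that is required.

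It then remains to check $k\Delta = O(\epsilon)$ with $k=t/s$. The contribution $k\,O((n+2)|s|^{2p+1})$ is precisely the quantity forced to be $O(\epsilon)$ by the choice of $p$, $s$, $k$ in Lemma~\ref{lemma:TSA2} (Eq.~\eqref{eq:TSAerror}, using $n\le N'$). The contribution $k\,O(5^p(\nu_1(N))^{|s|})$ is controlled by Corollary~\ref{cor:TSA}, which fixes $N$ so that the per-step discretization error is $O(\epsilon|s/t|)=O(\epsilon/k)$, whence the sum over $k$ steps is again $O(\epsilon)$. Combining with the second triangle term gives the uniform $O(\epsilon)$ bound for all $n\le N'$. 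The gate count is then immediate: each $U_p^\rD(s)$ is a product of at most $5^p$ exponentials of $(x^\rD)^2$ and $(p^\rD)^2$ by Eq.~\eqref{eq:TSArecdef}, and since $5^p|s|=\Theta(1)$ gives $k=\Theta(|t|5^p)$, the product $(U_p^\rD(s))^k$ contains $\mathcal{M}=O(|t|5^{2p})$ such exponentials.

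I expect the main obstacle to be the simultaneous bookkeeping of the two distinct error sources over the $k$ steps: the genuine Trotter error $(n+2)|s|^{2p+1}$, small because the per-step order $2p+1$ is high, and the discretization error $5^p(\nu_1(N))^{|s|}$, small only once $N$ is taken exponentially large in $p$. Both must be driven below $O(\epsilon/k)$ at once, and the consistency of the choices $p=\Theta(\sqrt{\log(N'|t|/\epsilon)})$, $5^p|s|=\Theta(1)$, and $N=\exp(\Omega(\sqrt{\log(N'|t|/\epsilon)}))$ is exactly what reconciles the two requirements while keeping $N$ subexponential in $N'|t|/\epsilon$; verifying this self-consistency (and that the accumulated $\nu_1(N)$ terms from near-orthonormality stay negligible throughout the induction) is the delicate part.
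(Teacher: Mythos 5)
Your proposal is correct and follows essentially the same route as the paper's proof: a per-step estimate obtained by combining the continuous-variable Trotter bound of Lemma~\ref{lemma:TSA} (with the parameter choices of Lemma~\ref{lemma:TSA2}, i.e.\ Eq.~\eqref{eq:TSAerror}) with the discrete--continuous correspondence of Lemma~\ref{lem:TSA10}/Corollary~\ref{cor:TSA}, accumulation of the per-step error over the $k$ segments, the approximate-eigenvector property of $\sket{\psi_n^\rD}$ to replace $U^\rD(t)$ by the phase $e^{-i(n+1/2)t}$, and a final triangle inequality, with the same choices of $p$, $s$, and $N$. The only deviation is a technical refinement rather than a different approach: you extract the per-step bound directly in vector norm from the coefficient identity inside the proof of Lemma~\ref{lem:TSA10} and make the accumulation an explicit unitarity induction ($\|R_{j+1}\|\le\|R_j\|+\Delta$), whereas the paper passes through the diagonal matrix element $\sbra{\psi_n^\rD}U_p^\rD(s)\sket{\psi_n^\rD}$ and converts to a norm via Corollary~\ref{cor:orthogonal} and unitarity -- a conversion that, done carefully, costs a square root of the error, which your version avoids (though this only changes constants such as $\gamma$ and $\delta$ in the final statement, not its form).
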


\begin{proof}
 $p$ and $s$ satisfy $5^p |s| = \tilde c$, for some constant $\tilde c >0$ --
see Eq.~\eqref{eq:spoptimal2}. The subadditivity of errors, Lemma~\ref{lemma:TSA} and Corollary~\ref{cor:TSA} imply
\begin{align}
\nonumber
| \sbra{\psi_n^\rD} U_p^\rD(s) \sket{\psi_n^\rD} - e^{-i(n+1/2)s}| & = O(\epsilon|s/t|)+ O(N' |s|^{2p+1}) \\
& = O(\epsilon |s/t|) \; ,
\end{align}
where we used Eq.~\eqref{eq:TSAerror}. This is valid if $n \le N' \le c'N$, for some constant $c'>0$,
and our choice of $N$ already satisfies such a condition.
Then, using Corollary~\ref{cor:orthogonal}, we obtain
\begin{align}
\| U_p^\rD(s) \sket{\psi_n^\rD} - e^{-i(n+1/2)s} \sket{\psi_n^\rD}\| =  O(\epsilon|s/t|) + \nu_1(N) \; .
\end{align}
 In particular, there exists a constant $\delta >0$ such that,  if $N \ge \exp(\delta \sqrt{\log (N' |t|/\epsilon)})$, then 
$\nu_1(N)$ can be made negligible with respect to $(\epsilon|s/t|)$.

Using again the subadditivity property of errors, we obtain
\begin{align}
\| (U_p^\rD(s))^k \sket{\psi_n^\rD} - e^{-i(n+1/2)t} \sket{\psi_n^\rD}\| =  O(\epsilon ) \;,
\end{align}
for the corresponding choices of $N$ and $p$.
Also, Corollary~\ref{cor:maincor} implies, for $n \le N' \le c'N$,
\begin{align}
\|  (U^\rD(t) -  e^{-i(n+1/2)t})\sket{\psi_n^\rD} \| = O(|t| \nu_1(N) N') \; .
\end{align}
 Then,  there exists a constant $\delta >0$
such that, if $N \ge \exp(\delta \sqrt{\log (N' |t|/\epsilon)})$, then $O(|t| \nu_1(N) N) = O(\epsilon)$.
The triangle inequality gives
\begin{align}
\|  [( U_p^\rD(s) )^k - U^\rD(t)] \sket{\psi_n^\rD} \| = O(\epsilon)
\end{align}
for the corresponding choices of $p$ and $N$, which determines the first  result of the Theorem.

By definition, $U^\rD_p(s)$ contains less than $5^p$ exponentials of $(x^\rD)^2$ and $(p^\rD)^2$.
The total number of exponentials in $(U^\rD_p(s))^k$ is bounded by $|t/s| 5^p = O(|t| 5^{2p})$, for our
choices of $s$ and $p$. That is, there exists a constant $\gamma >0$ such that the total number
of exponentials is ${\cal M}=O(|t| \exp (\gamma \sqrt{\log (N' |t|/\epsilon)}))$. 

\end{proof}

\section{Error bounds for the preparation of quantum states with Gaussian amplitudes}
\label{appendix:gaussians}
In this section we prove some results regarding the preparation
of quantum states with approximate Gaussian-like amplitudes that serve as a basis
for preparing  approximations of other eigenstates of the discrete QHO.
The results in this section may be of independent interest.

  \begin{lemma}
  \label{lemma:GSpreparation}
 Given $N$ and $\delta>0$, let 
 \begin{align}
 \sket{\varphi^\rD} = (1/\sqrt \kappa) \sum_{j=-N/2}^{N/2}  \exp(- j^2/(2\delta)) \ket j \; ,
 \end{align}
with $\kappa = \sum_{j=-N/2}^{N/2}  \exp(- j^2/\delta)$, and let $\sigma^2=\delta \pi/N$, $t=\sqrt{\sigma^2(2-4\sigma^2)}/2$,
 and $t'=1/(4t+4\sigma^4/t)$. Then, there exists a constant $\lambda>0$ such that
 \begin{align}
 \label{eq:gaussianerror}
 \|\sket{\psi_0^\rD} - e^{i \alpha(t)} e^{i (x^\rD)^2 t'} e^{i (p^\rD)^2 t} \sket{\varphi^\rD}\|   = O(\exp(- \lambda \delta)) \; .
 \end{align}
 The global phase $\alpha(t) \in {\bf R}$ only depends on $t$.
 \end{lemma}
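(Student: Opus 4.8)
The plan is to reduce the claim to an exact statement for the continuous-variable (CV) oscillator and then bound the discretization error, which I expect to be exponentially small in $\delta$.

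First I would establish the exact CV identity. Writing $x_j=j\sqrt{2\pi/N}$ and $\sigma^2=\pi\delta/N$, one has $j^2/(2\delta)=x_j^2/(4\sigma^2)$, so $\sket{\varphi^\rD}$ is the grid sampling of the Gaussian $\varphi(x)\propto e^{-\alpha x^2}$ with $\alpha=1/(4\sigma^2)$. Both $e^{i\hat p^2 t}$ and $e^{i\hat x^2 t'}$ are Gaussian (metaplectic) operations, hence map an unnormalized Gaussian to another, and it suffices to track the single complex width parameter. Passing to momentum space, $e^{i\hat p^2 t}$ sends $\alpha\mapsto \alpha'=\alpha/(1-4it\alpha)$, and the diagonal multiplier $e^{i\hat x^2 t'}$ sends $\alpha'\mapsto\alpha''=\alpha'-it'$. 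A direct computation gives $\mathrm{Re}\,\alpha'=\sigma^2/[4(\sigma^4+t^2)]$ and $\mathrm{Im}\,\alpha'=t/[4(\sigma^4+t^2)]$. The choice $t^2=\sigma^2(1-2\sigma^2)/2$, equivalently $t=\sqrt{\sigma^2(2-4\sigma^2)}/2$, forces $\mathrm{Re}\,\alpha'=1/2$; then $t'=\mathrm{Im}\,\alpha'=t/[4(t^2+\sigma^4)]$ cancels the imaginary part, so $\alpha''=1/2$ and $\varphi$ is carried exactly onto $\psi_0(x)\propto e^{-x^2/2}$, up to an $x$-independent factor absorbed into normalization and the global phase $e^{i\alpha(t)}$. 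This requires $\sigma^2<1/2$, i.e. $\delta<N/(2\pi)$, which holds in the regime of interest.

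Next I would transfer this to the lattice, isolating the one genuinely new object. The step $e^{i(x^\rD)^2 t'}$ is diagonal and multiplies the amplitude at site $j$ by exactly $e^{ix_j^2 t'}$, matching the CV multiplier sampled on the grid, so it is harmless. The initial state $\sket{\varphi^\rD}$ is the grid sampling of $\varphi$ up to normalization; since $\varphi$ has variance $\sigma^2=\pi\delta/N$, truncating the tails to $j\in\{-N/2,\ldots,N/2\}$ and renormalizing each cost only $O(e^{-\lambda\delta})$. Likewise $\sket{\psi_0^\rD}$ is by definition the grid sampling of $\psi_0$. Thus only the free-evolution step $e^{i(p^\rD)^2 t}=(F^\rD_{\rm c})^{-1}e^{i(x^\rD)^2 t}F^\rD_{\rm c}$ remains to be controlled, and the hard part will be showing that $F^\rD_{\rm c}$ applied to the discrete Gaussian reproduces, within $O(e^{-\lambda\delta})$, the grid sampling of the continuous transform $\tilde\varphi$. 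This is a Poisson-summation/aliasing estimate: the intermediate Gaussians are concentrated both in position and in momentum well inside the window $[-T/2,T/2]$, $T=\sqrt{2\pi N}$, while the spacing $\sqrt{2\pi/N}$ is far finer than each width (the position width $\sigma$ satisfies $\sigma/\sqrt{2\pi/N}=\sqrt{\delta/2}\gg1$, and the momentum width $1/\sigma$ lies inside $[-T/2,T/2]$ throughout the regime $1\ll\delta<N/(2\pi)$). Under these two conditions the wrap-around (periodization) corrections and the Riemann-sum error of the DFT relative to the continuous FT are both dominated by Gaussian tails $e^{-\Omega(\delta)}$; this is the very mechanism, via the periodized states $\sket{\bar\psi^\rD}$ and Eq.~\eqref{eq:convapprox}, that makes periodized Gaussians near-eigenvectors of $F^\rD_{\rm c}$. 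Applying the same bound to the inverse transform, and checking that the chirped state $e^{i(x^\rD)^2 t}F^\rD_{\rm c}\sket{\varphi^\rD}$ keeps $|\tilde\varphi|$ and hence its concentration, yields the free step with error $O(e^{-\lambda\delta})$.

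Finally I would assemble the pieces by the triangle inequality: the discrete chain $e^{i(x^\rD)^2 t'}e^{i(p^\rD)^2 t}\sket{\varphi^\rD}$ differs from the grid sampling of the exact CV output $\psi_0$ by finitely many errors, each $O(e^{-\lambda\delta})$, and that sampling is $\sket{\psi_0^\rD}$ up to negligible tails. Absorbing the accumulated $x$-independent complex factor into normalization and into $\alpha(t)$ (which by construction depends only on the evolution data $t,t',\sigma$, i.e. only on $t$) gives Eq.~\eqref{eq:gaussianerror}. The main obstacle is the Fourier step of the previous paragraph; the rest is tail estimation on Gaussians.
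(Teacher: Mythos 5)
Your proposal is correct, and in substance it rests on the same two pillars as the paper's proof: the exact continuous-variable identity (your complex-width bookkeeping reproduces precisely the paper's choices $t=\sqrt{\sigma^2(2-4\sigma^2)}/2$ and $t'=1/(4t+4\sigma^4/t)$, with $\mathrm{Re}\,\alpha'=1/2$), and Poisson-summation/Gaussian-tail estimates in which the dominant $e^{-\Omega(\delta)}$ error comes from the momentum-space width $1/\sigma$ reaching the edge of the window $[-T/2,T/2]$, all other errors being exponentially small in $N$ or $N^2/\delta$. Where you genuinely diverge is in how the discretization argument is organized. The paper never compares states step by step: it collapses the entire claim into a single scalar, writing $e^{i\alpha(t)}=\sbra{\psi_0}e^{i\hat x^2 t'}\hat F e^{i\hat x^2 t}\hat F\ket{\varphi}$ as a one-dimensional Gaussian integral $\int dx\,h(x)e^{ix^2t}g(x)$, approximating that integral by the finite grid sum through seven error terms (window truncation, aliasing, Dirac-comb identification of the sampled functions with $\sket{\varphi^\rD}$ and $e^{-i(x^\rD)^2t'}\sket{\psi_0^\rD}$, near-invariance of $\sket{\psi_0^\rD}$ under $(F_{\rm c}^\rD)^2$, and the normalization $\kappa\approx\sqrt{\pi\delta}$), and finally recognizing the sum as $\sbra{\psi_0^\rD}e^{i(x^\rD)^2t'}e^{i(p^\rD)^2t}\sket{\varphi^\rD}$ via $[(F_{\rm c}^\rD)^2,(x^\rD)^2]=0$ and $(F_{\rm c}^\rD)^4=\one$. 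You instead prove an $\ell_2$ statement for each unitary factor: the diagonal chirps are exact samplings, and each of the two Fourier steps is controlled by a DFT-versus-sampled-FT aliasing lemma applied to (chirped) Gaussians, assembled by the triangle inequality. The trade-off: the paper's scalar route only ever manipulates explicit Gaussian sums, but its final passage from ``inner product close to a phase'' to a norm bound implicitly costs a square root, since for (near-)unit vectors $\|\ket{a}-e^{-i\alpha}\ket{b}\|^2=2-2\,\mathrm{Re}(e^{-i\alpha}\braket{a}{b})$, which silently halves the constant $\lambda$ (harmless, as only existence of $\lambda>0$ is claimed); your route yields the norm bound directly and is more modular, at the price of carrying the vector-level Poisson-summation estimate, with its $\mathrm{poly}(N)$ prefactors from summing pointwise errors over the grid, through two transforms. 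Both land on the same error budget, so your plan constitutes a sound alternative proof.
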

 \begin{proof}
 First we show that in CVs, Eq.~\eqref{eq:gaussianerror} holds exactly if the discrete states and operators are replaced
 by their CV versions. Then, we approximate the integrals appearing in CVs by finite sums
 that appear in the actual Eq.~\eqref{eq:gaussianerror}.
 
 In CVs, the wave function of a free particle evolves according to the Schr\"odinger equation,
 with a Hamiltonian $-\hat p^2$.
 If the initial state ($t=0$) of the CV system is $\ket \varphi$ and the normalized wave function
is
\begin{align}
\varphi(x,0)= \bra x \varphi \rangle = \frac 1 { (2 \pi \sigma^2)^{1/4} }\exp(-x^2/(4\sigma^2)) \; ,
\end{align} 
then, the wave function at time $t \ge 0$ is given by
\begin{align}
\label{eq:gaussianevolved}
\varphi(x,t) =\left( \frac {2\sigma^2}\pi \right)^{1/4} \frac 1 {\sqrt{-i  2t + 2 \sigma^2} } \exp (-x^2/(-i4t + 4 \sigma^2)) \; .
\end{align}
We note that $|\varphi(x,0)|^2$ and  $|\varphi(x,t)|^2$ correspond to normal distributions
of variances $\sigma^2$ and $\sigma^2 +t^2/ \sigma^2$, respectively.    
We can rewrite $\varphi(x,t) $ as
\begin{align}
    \frac { \gamma(x,t)}  {(2 \pi (\sigma^2+t^2/\sigma^2))^{1/4}}\exp(-x^2/(4 \sigma^2+ 4 t^2/\sigma^2)) \; ,
\end{align}
where the phase factor satisfies
\begin{align}
\gamma(x,t)= e^{i \alpha(t)} e^{-i x^2/(4t+4 \sigma^4/t)} \; ,
\end{align}
 and $\alpha(t)$ solely depends on $t$, and can be computed exactly from Eq.~\eqref{eq:gaussianevolved}.
 Selecting $\sigma^2= \delta \pi/N$, for some given $N>0$, and $t$ so that $4 \sigma^2+4 t^2/\sigma^2=2$,
 we obtain
 \begin{align}
\bra x e^{-i \alpha(t)}e^{i \hat x^2 t '} e^{i \hat p^2 t}\ket{\varphi} =\langle x \sket{\psi_0}= \psi_0(x) \; ,
 \end{align}
with $t'= 1/(4t + 4 \sigma^4/t)$.
Then, in CVs, we can exactly prepare the ground state of the CV QHO from the initial state $\sket{\varphi}$
by applying the unitary sequence $e^{-i \alpha(t)}e^{i \hat x^2 t '} e^{i \hat p^2 t}$.
  We will show that a similar result is obtained, up to some
  bounded approximation errors, when we work in the discrete
  Hilbert space of dimension $N$.

 We rewrite
 \begin{align}
 \label{eq:innerproducthg}
 e^{i\alpha(t)}=\sbra{\psi_0}e^{i \hat x^2 t '} e^{i \hat p^2 t}\ket{\varphi} =  \sbra{\psi_0}e^{i \hat x^2 t '} \hat F e^{i \hat x^2 t} \hat F \ket{\varphi} \; ,
 \end{align}
 where $\hat F$ is the operator that implements the Fourier transform, and we used the property $\hat F^2 e^{-i \hat x^2 t '}  \sket{\psi_0}
 = e^{-i \hat x^2 t '}  \sket{\psi_0}$.
  We also
  define the functions
 \begin{align}
 g(x)& = \bra x \hat F \sket{\varphi}  = (2 \sigma^2/ \pi)^{1/4} \exp(-x^2 \sigma^2) \; , \\
 h(x) &=\sbra{\psi_0}e^{i \hat x^2 t '} \hat F \ket x = \frac {\exp(-x^2/(2-i4t'))} {\pi^{1/4} \sqrt{1-i2t'}}  \; , 
  \end{align}
  where we used $\psi_0(x)=e^{-x^2/2}/\pi^{1/4}$.
  Then, Eq.~\eqref{eq:innerproducthg} becomes
  \begin{align}
  \label{eq:innerproductgaussian}
    \int dx \; h(x) e^{ix^2 t} g(x) 
   = \frac{(2\sigma^2)^{1/4}} {\sqrt{\pi(1-i2t')}}\int dx \exp(-\alpha x^2) \;,
  \end{align}
  and 
  \begin{align}
  \nonumber
  \alpha & = \sigma^2 - it + 1/(2-i 4t') \\ &
  =[ \sigma^2 + 1/(2+8t'^2)] -i [t -t'/(1+4t'^2)] \; .
  \end{align}
 Next we approximate Eq.~\eqref{eq:innerproductgaussian} by a finite sum,
 assuming the discretization where $x_j = j \sqrt{2 \pi/N}$, $j = -N/2, \ldots, N/2-1$,
 and bound the errors of the approximation. Later, we relate this approximation with
 Eq.~\eqref{eq:gaussianerror}.
 
 We will show that one of the dominant sources of error in the approximation, for $N \gg 1$,
 results from bounding $j$ so that $|j| \le N/2$.
 To estimate this approximation error, we consider
 \begin{align}
 \varepsilon_1= \left| \frac{(2\sigma^2)^{1/4}} {\sqrt{\pi(1-i2t')}} \sqrt{\frac{2 \pi} {N}}\sum_{j = N/2}^\infty e^{-\alpha x_j^2} \right|\; .
 \end{align}
 In particular,
 \begin{align}
 | \sum_{j = N/2}^\infty e^{-\alpha x_j^2}| & \le  \sum_{j = N/2}^\infty e^{-\Re(\alpha)\pi j} 
  \le \frac{e^{-\Re(\alpha)\pi N/2}}{1 -e^{-\Re(\alpha)\pi} } \; ,
 \end{align}
 where $\Re (\alpha)$ is the real (and positive) part of $\alpha$,
 and we used $x_j \ge x_{N/2}=\sqrt{\pi N/2}$.
 Then, since $\Re (\alpha) \ge \sigma^2 = \delta \pi/N$,
 \begin{align}
 \label{eq:expsumapp}
 | \sum_{j = N/2}^\infty e^{-\alpha x_j^2}| & \le 2 \frac{e^{-\delta \pi^2 /2}}{(\delta \pi^2/N) } \; ,
 \end{align}
 where we used that $1-e^{-x} \ge x/2$ for $x \le 1$, which in this case requires $N \ge \delta \pi^2$.
 We also note
 \begin{align}
 \nonumber
  \left| \frac{(2\sigma^2)^{1/4} \sqrt 2} {\sqrt{(1-i2t') N}} \right |  & \le \frac{(2 \sigma^2)^{1/4}}{\sqrt{\pi t' N}} \\ \nonumber &
  \le \frac{(2 \sigma^2)^{1/4}}{\sqrt{\pi N }} \sqrt{4t} \\  &
  \le  2 (\sigma^2/(\pi N))^{1/2}    \;.
 \end{align}
 Combining this with Eq.~\eqref{eq:expsumapp}, we obtain
 \begin{align}
 \varepsilon_1 & \le e^{- \pi^2 \delta/2}/\sqrt \delta  = O(\exp(-\lambda^{(1)} \delta)) \; ,
 \end{align}
 where we replaced $\sigma^2 = \pi \delta/N$ and $\lambda^{(1)} >0$ is a constant.
 
%
 We consider now the approximation of Eq.~\eqref{eq:innerproductgaussian} 
 by the infinite sum
 \begin{align}
  \frac{(2\sigma^2)^{1/4}} {\sqrt{\pi(1-i2t')}} \sqrt{\frac{2 \pi} {N}}\sum_{j } e^{-\alpha x_j^2} \; .
 \end{align}
 Using similar tools as in previous proofs based on the Dirac comb
 of period $T=\sqrt{2\pi N}$, we obtain the error of the approximation 
 of the integral as
  \begin{align}
  \label{eq:gaussianerror2}
 \varepsilon_2= \frac{(2\sigma^2)^{1/4}} {\sqrt{2 \pi \alpha(1-i2t')}} \sum_{k \ne 0} e^{- \omega_k^2/(4 \alpha)} \; .
 \end{align}
 This error was obtained using the Fourier transform of $\exp (-\alpha x^2)$.
  The frequencies are $\omega_k = k \sqrt{2 \pi N}$.
  With our choices for $\sigma$, $t$, and $t'$, 
  we can bound
  \begin{align}
  \nonumber
  \varepsilon_2 & \le (\sigma^2 / (\pi |\alpha|))^{1/2} \sum_{k \ne 0} e^{- \omega_k^2/(4 \alpha)} \\
  \nonumber
&  \le \sum_{k \ne 0} e^{- \omega_k^2/(4 \Re( \alpha))} /\sqrt \pi \\ &
  \le \sum_{k \ne 0} e^{-N^2 k^2/(6 \delta)} /\sqrt \pi  \; ,
  \end{align}
  where we first used $|\alpha| \ge \Re (\alpha) \ge \sigma^2$ and then $\Re(\alpha) \le 3 \sigma^2$,
  which is valid under the assumption that $N \ge \delta \pi^2$ or $\sigma^2 \le 1/\pi$.
  Then, there is a constant $\lambda^{(2)}>0$ such that
 $ \varepsilon_2 = O(\exp(-\lambda^{(2)} N^2/\delta)) $.
   
%

 The next goal is to show that the finite sum
  \begin{align}
  \label{eq:gaussiandiscreteIP}
  \sqrt{\frac {2\pi} N} \sum_{j=-N/2}^{N/2-1} h(x_j) e^{i (x_j)^2 t} g(x_j) \; ,
  \end{align}
  which we showed is a good approximation to Eq.~\eqref{eq:innerproducthg}, also
  approximates the desired inner product between discrete states appearing in Eq.~\eqref{eq:gaussianerror}.
  Equation~\eqref{eq:gaussiandiscreteIP} can be realized as the inner product
  \begin{align}
    \sqrt{\frac {2\pi} N} \sbra{\tilde \xi^\rD} F^\rD_{\rm c}  e^{i (x^\rD)^2t }F^\rD_{\rm c} \sket{\tilde \varphi^\rD} \; .
  \end{align}
  with $ \sbra{\tilde \xi^\rD} F^\rD_{\rm c} \ket j = h(x_j)$ and $\bra j F^\rD_{\rm c} \sket{\tilde \varphi^\rD} = g(x_j)$.
  Using the properties of the discrete and CV Fourier transforms, the last condition implies
 $
  \sket{\tilde \varphi^\rD} = \sum_{j=-N/2}^{N/2-1} \tilde \varphi(x_j) \ket j
 $,
  with 
 $
  \tilde \varphi(x) = \sum_k \varphi(x + k T,0) $,
  i.e., the convolution of $\varphi$ with the Dirac comb.
  Equivalently, we are defining
  \begin{align}
   \sket{\tilde \varphi^\rD} = \sum_{j=-N/2}^{N/2-1} \ket j \bra{C_j} \varphi \rangle \; ,
  \end{align}
  as used in Corollary~\ref{cor:orthogonal}, Appendix~\ref{appendix:spectralproperties}.
  We note that
  \begin{align}
  \nonumber
  \tilde \varphi(x) - \varphi (x,0) &=  \sum_{k \ne 0} \varphi(x + k T,0) \\ &
  = O(N^{1/4} \exp(-\pi N/(2\sigma^2)) )
  \end{align}
  and then there is a constant $\lambda^{(3)}>0$ such that
  \begin{align}
  \varepsilon_3 & = \|  \sket{\tilde \varphi^\rD} - \sum_{j}   \varphi(x_j,0) \ket j \|
  = O(N^{5/4} \exp(-\lambda^{(3)} N^2)) \; .
  \end{align}
  We note that $\sum_{j}   \varphi(x_j,0) \ket j =(N/(2 \pi^2 \delta))^{1/4} \sqrt c \sket{\varphi^\rD}$, as defined in the statement
  of the Lemma.
  Similarly, the condition on $h(x_j)$ implies
  \begin{align}
  \sket{\tilde \xi^\rD} &= \sum_{j=-N/2}^{N/2-1} \ket j \bra{C_j} e^{-i \hat x^2 t'} \sket{\psi_0}
  =  \sum_{j=-N/2}^{N/2-1} \tilde \xi(x_j) \ket j \; ,
  \end{align}
  with 
   \begin{align}
\tilde   \xi(x)= \sum_k e^{-i (x+k T)^2 t'} e^{-(x+k T)^2/2}/\pi^{1/4} \; .
  \end{align}
  being the convolution with the Dirac comb.
  If $\xi(x) = e^{-i (x)^2 t'} e^{-(x)^2/2}/\pi^{1/4}$, then
  \begin{align}
  | \tilde \xi(x) - \xi(x) | = O(\exp (- \Omega(N))) \; .
  \end{align}
  This implies that there is a constant $\lambda ^{(4)}>0$ such that
  \begin{align}
  \varepsilon_4 & = \|  \sket{\tilde \xi^\rD} - \sum_{j}   \xi(x_j) \ket j \| 
 = O(N \exp(-\lambda^{(4)} N)) \; .
  \end{align}
  We then define  $\sket{\xi^\rD}=\sum_{j}   \xi(x_j) \ket j$ and, using the definition of $\sket{\psi_0^\rD}$,
  we obtain $\sket{\xi^\rD} =(N/(2\pi))^{1/4} e^{-i (x^\rD)^2 t'}\sket{\psi_0^\rD}$.
  
 In summary, so far we demonstrated that   Eq.~\eqref{eq:innerproducthg}, or Eq.~\eqref{eq:innerproductgaussian},
  can be approximated by
  \begin{align}
 \left(\frac{1}{\pi \delta} \right)^{1/4} \sqrt \kappa \bra{\psi_0^\rD} e^{i (x^\rD)^2 t'} F^\rD_{\rm c} e^{i(x^\rD)^2 t} F^\rD_{\rm c} \sket{\varphi^\rD} \; ,
  \end{align}
  within approximation error of order $\varepsilon_1+\varepsilon_2+\varepsilon_3+\varepsilon_4$,
  which can be obtained by using the subadditivity property of errors.
  
  There are two additional approximations that still need to be bounded.
  One is
  $ \varepsilon_5 = \| \sket{\psi_0^\rD} - (F^\rD_{\rm c})^2 \sket{\psi_0^\rD} \|$.
  Using the results in Eq.~\eqref{eq:convapprox}, we can show that $\sket{\psi_0^\rD}$
  is almost an eigenstate of $F^\rD_{\rm c}$ of eigenvalue. In particular, Eq.~\eqref{eq:convapprox} implies
   $\varepsilon_5 = O(\exp(-\beta N))$, for some constant $\beta>0$.
   Last, we seek to show that $\kappa$ approximates $\sqrt{\pi \delta}$.
   We write
  $ \sqrt{\pi \delta} = \sqrt{N/2\pi}\int dx \; e^{-N x^2/(2 \pi \delta)}$,
  which can be approximated  by the infinite sum $\sum_j e^{-j^2/\delta}$
  within approximation error
  \begin{align}
  \varepsilon_6 &=\sqrt{ \frac N {2\pi}} \sqrt{2 \pi} \sum_{k \ne 0} e^{- \omega_k^2 \pi \delta /(2N)}
  =  \sqrt{\pi \delta} \sum_{k \ne 0} e^{-\pi^2 k^2 \delta} \; ,
  \end{align}
  where $\omega_k = k \sqrt{2 \pi N}$, and $\varepsilon_6$ was obtained by computing the Fourier transform
  of a Gaussian with variance $\pi \delta/N$. Then, there is a constant $\lambda^{(6)}>0$ such that
  $\varepsilon_6 = O(\delta \exp(-\lambda^{(6)} \delta))$.
  Setting a cutoff in the infinite sum so that $|j| \le N/2$, this introduces an additional
  approximation error 
  $
  \varepsilon_7 = O(\exp(-\lambda^{(7)} N^2/\delta)) $,
  for some constant $\lambda^{(7)}>0$.

  The overall result is that we can approximate 
  Eq.~\eqref{eq:innerproducthg}, or Eq.~\eqref{eq:innerproductgaussian},
  by 
   \begin{align}
 \bra{\psi_0^\rD} (F^\rD_{\rm c})^2 e^{i (x^\rD)^2 t'} (F^\rD_{\rm c})^2 (F^\rD_{\rm c})^{-1} e^{i(x^\rD)^2 t} F^\rD_{\rm c} \sket{\varphi^\rD} \; ,
  \end{align}
  within precision $\sum_{i=1}^7 \varepsilon_i$. But since $[(F^\rD_{\rm c})^2 , (x^\rD)^2]=0$, and $F^\rD_{\rm c})^4=\one$,
  the above equation is
  exactly 
$ \bra{\psi_0^\rD} e^{i (x^\rD)^2 t'}    e^{i(p^\rD)^2 t}  \sket{\varphi^\rD}$,
  the desired quantity for the Lemma.
  In particular, this implies
  \begin{align}
  e^{i \alpha(t)} -  \bra{\psi_0^\rD} e^{i (x^\rD)^2 t'}    e^{i(p^\rD)^2 t}  \sket{\varphi^\rD}&\le \sum_{i=1}^7 \varepsilon_i\;, \\
  \| \sket{\psi_0^\rD} - e^{-i \alpha(t)} e^{i (x^\rD)^2 t'}    e^{i(p^\rD)^2 t}  \sket{\varphi^\rD} \| &\le \sum_{i=1}^7 \varepsilon_i \; .
  \end{align}
  
 \end{proof}

\section{The discrete Jaynes-Cummings Hamiltonian}
\label{appendix:DHJC}
\label{appendix:DHJC}
The discrete Jaynes-Cummings Hamiltonian $H_{\rm JC}^\rD$ was introduced in Eq.~\eqref{eq:DJCH}.
The Hilbert space dimension is $2N$, where $N$ is the dimension for $x^\rD$ and $p^\rD$.
In this Appx., we first  prove that the normalized states
 $
 \sket{\gamma^\rD_{n,\pm}} = \frac 1 {\sqrt 2} [\sket{\phi_n^\rD} \ket 0 \pm \sket{\phi_{n+1}^\rD} \ket 1 ] $,
where $n \le cN$ for some constant $c>0$, are almost eigenstates of $H_{\rm JC}^\rD$ of eigenvalues $\pm \sqrt{n+1}$.
\begin{lemma}
\label{lem:DJCH1}
There exists a constant $c>0$ such that
\begin{align}
\| H^\rD_{\rm JC} - (\pm \sqrt{n+1})   \sket{\gamma^\rD_{n,\pm}} \| = \nu_1(N) \; ,
\end{align}
for all $n \le N' \le cN$, where $\nu_1(N) =\exp (-\Omega( N))$.
\end{lemma}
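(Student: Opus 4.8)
The plan is to reduce the statement about the discrete Jaynes--Cummings Hamiltonian to the spectral properties of $H^\rD$ already established in Appx.~A. First I would recall that the continuous-variable Jaynes--Cummings model $H_{\rm JC} = (\hat x \otimes \sigma_x - \hat p \otimes \sigma_y)/\sqrt 2$ can be rewritten using $a = (\hat x + i \hat p)/\sqrt 2$ and $a^\dagger = (\hat x - i \hat p)/\sqrt 2$, so that $H_{\rm JC} = a \otimes \sigma_+ + a^\dagger \otimes \sigma_-$ (up to the standard identification of $\sigma_\pm$). Acting on $\sket{\psi_n}\ket 0 \pm \sket{\psi_{n+1}}\ket 1$ and using $a^\dagger \sket{\psi_n} = \sqrt{n+1}\sket{\psi_{n+1}}$, $a\sket{\psi_{n+1}} = \sqrt{n+1}\sket{\psi_n}$, one checks that these combinations are exact eigenstates of $H_{\rm JC}$ with eigenvalues $\pm\sqrt{n+1}$. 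The strategy is then to show the discrete version inherits this property up to an error $\nu_1(N)$.

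The key computational step is to expand
\begin{align}
\nonumber
H^\rD_{\rm JC}\sket{\gamma^\rD_{n,\pm}} &= \tfrac{1}{\sqrt 2}\tfrac{1}{\sqrt 2}\big[(x^\rD\otimes\sigma_x - p^\rD\otimes\sigma_y)(\sket{\phi_n^\rD}\ket 0 \pm \sket{\phi_{n+1}^\rD}\ket 1)\big],
\end{align}
and to evaluate $(x^\rD\otimes\sigma_x - p^\rD\otimes\sigma_y)$ on each term using $\sigma_x\ket 0 = \ket 1$, $\sigma_y\ket 0 = i\ket 1$, etc. Since Eq.~\eqref{eq:approxeigenstate} gives $\|\sket{\phi_n^\rD}-\sket{\psi_n^\rD}\| = \nu_1(N)$ for $n\le cN$, I would first replace every $\sket{\phi_n^\rD}$ by the discrete Hermite state $\sket{\psi_n^\rD}$ at the cost of an error $\nu_1(N)$ (using $\|x^\rD\|=\|p^\rD\|=O(\sqrt N)$ to control how the norm grows under the operators). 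Then I would invoke the discrete action formula $x^\rD\sket{\psi_n^\rD}=\sqrt{(n+1)/2}\sket{\psi_{n+1}^\rD}+\sqrt{n/2}\sket{\psi_{n-1}^\rD}$ from Eq.~\eqref{eq:DPosAction}, and the analogous momentum relation obtained by Fourier conjugation, so that $x^\rD$ and $p^\rD$ act on the discrete Hermite states exactly as $\hat x,\hat p$ act on the $\sket{\psi_n}$, up to $\nu_1(N)$ corrections coming from Lemma~\ref{lem:D-CVmomentum}. Combining these, the vector $H^\rD_{\rm JC}\sket{\gamma^\rD_{n,\pm}}$ equals $\pm\sqrt{n+1}\,\sket{\gamma^\rD_{n,\pm}}$ up to an accumulated error of order $\nu_1(N)$.

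The main obstacle I anticipate is bookkeeping the approximation errors cleanly rather than any conceptual difficulty. Several distinct sources must be tracked: the eigenstate-approximation error from Eq.~\eqref{eq:approxeigenstate}, the fact that the discrete Hermite states are only \emph{almost} orthonormal (Cor.~\ref{cor:orthogonal}), and the $O(\sqrt N)$ operator norms that multiply these small errors. The subtle point is that the naive bound on $\|(x^\rD\otimes\sigma_x - p^\rD\otimes\sigma_y)(\sket{\phi_n^\rD}-\sket{\psi_n^\rD})\|$ is $O(\sqrt N)\,\nu_1(N)$, which is still $\nu_1(N)$ since $\sqrt N\,\nu_1(N)=\nu_1(N)$ as noted in Appx.~A. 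I would therefore verify that each factor of the polynomial prefactor is absorbed harmlessly into the exponentially small tail, and finally apply the triangle inequality to collect all contributions into a single $\nu_1(N)$ bound, valid uniformly for $n\le N'\le cN$. The restriction $n\le cN$ is essential and enters precisely through the range of validity of the lemmas in Appx.~A.
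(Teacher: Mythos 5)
Your proposal is correct and follows essentially the same route as the paper's own proof: it uses the exact discrete raising relation for $x^\rD$ [Eq.~\eqref{eq:DPosAction}], the approximate momentum relation derived from Lemma~\ref{lem:D-CVmomentum}, and the replacement of $\sket{\phi_n^\rD}$ by $\sket{\psi_n^\rD}$ via Eq.~\eqref{eq:approxeigenstate} controlled by $\|x^\rD\|=\|p^\rD\|=O(\sqrt N)$, with all polynomial prefactors absorbed into $\nu_1(N)$. The only difference is cosmetic (you swap the $\phi\to\psi$ replacement to the start rather than the end, and make the $a^\dagger\otimes\sigma_-$ structure of the CV model explicit), so there is nothing to correct.
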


\begin{proof}
First, we note
$H_{\rm JC}^\rD \sket{\psi_n^\rD} \ket 0 = (x^\rD - i p^\rD) \sket{\psi_n^\rD} \ket 1$.
As in Eq.~\eqref{eq:DPosAction}, $x^\rD$ satisfies, for all $n \le N$,
 \begin{align}
  \label{eq:Xaction}
  x^\rD \sket{\psi^\rD_n}   = \sqrt{(n+1)/2} \sket{\psi_{n+1}^\rD}+ \sqrt{n/2} \sket{\psi_{n-1}^\rD} \; .
  \end{align}
Lemma~\ref{lem:D-CVmomentum} implies
\begin{align}
 \label{eq:Paction}
\|  p^\rD \sket{\psi^\rD_n} - i \sqrt{(n+1)/2} \sket{\psi_{n+1}^\rD}+ i \sqrt{n/2} \sket{\psi_{n-1}^\rD} \| = 
\nu_1(N)  \; 
\end{align}
for all $n \le N' \le cN$, where $1>c>0$, and  $\nu_1(N) = \exp(-\Omega(N))$.
Then,
\begin{align}
\| H_{\rm JC}^\rD \sket{\psi_n^\rD} \ket 0 - \sqrt{n+1}  \sket{\psi_{n+1}^\rD} \ket 1 \| & =\nu_1(N)    \\
\| H_{\rm JC}^\rD \sket{\psi_{n+1}^\rD} \ket 1 - \sqrt{n+1}  \sket{\psi_{n}^\rD} \ket 0 \| &= \nu_1(N)   \; ,
\end{align}
where the last equation follows by doing a similar analysis.
Since $\| H_{\rm JC}^\rD \|= O(N^{1/2})$, we can replace  $\sket{\psi_n^\rD}$ by the actual eigenstates
$\sket{\phi_n^\rD}$ using Eq.~\eqref{eq:approxeigenstate}, and the order of the approximation error is also 
exponentially small in $N$.
It follows that
\begin{align}
\| H^\rD_{\rm JC} - (\pm \sqrt{n+1})   \sket{\gamma^\rD_{n,\pm}} \| = \nu_1(N) \; .
\end{align}

\end{proof}

Next, we seek efficient decompositions of the evolution operator induced by $H_{\rm JC}^\rD$.
For simplicity, we use an asymmetric first order Trotter-Suzuki approximation as this already
provides the desired scaling with $N$, however, one may use higher-order approximations
to improve upon the scaling on the precision parameter.

\begin{lemma}
\label{lem:DJCevol}
Let 
\begin{align}
W (s)= e^{-i(x^\rD \otimes \sigma_x) s/\sqrt 2} e^{i(p^\rD \otimes \sigma_y) s/\sqrt 2}  \; .
\end{align}
Then, there exists a constant $c>$ such that, given $n \le N' \le cN$ and
$s= O(\epsilon/\sqrt{n+1})$,
\begin{align}
\| (W(s) - e^{-i H_{\rm JC}^\rD s} ) \sket{\phi_n^\rD} \ket 0 \| =O(\epsilon^2) \; . 
\end{align}

\end{lemma}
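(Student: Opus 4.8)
The plan is to read this as a first-order Trotter estimate applied to a single low-energy state, where the point is to replace the unbounded operator norms $\|x^\rD\|=\|p^\rD\|=O(\sqrt N)$ by the much smaller effective norms that $x^\rD,p^\rD$ have on $\sket{\psi_n^\rD}$. First I would set $A=(x^\rD\otimes\sigma_x)/\sqrt 2$ and $B=-(p^\rD\otimes\sigma_y)/\sqrt 2$, so that $W(s)=e^{-iAs}e^{-iBs}$ and $H_{\rm JC}^\rD=A+B$. A direct computation using $\sigma_x\sigma_y=i\sigma_z$ and $\sigma_y\sigma_x=-i\sigma_z$ then isolates the only object that governs the leading error,
\begin{align}
[A,B]=-\frac{i}{2}\,\{x^\rD,p^\rD\}\otimes\sigma_z \; .
\end{align}

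Next I would write the error exactly, differentiating $F(\tau)=e^{-iH_{\rm JC}^\rD(s-\tau)}e^{-iA\tau}e^{-iB\tau}$ (note $F(0)=e^{-iH_{\rm JC}^\rD s}$, $F(s)=W(s)$) and using $[B,e^{-iA\tau}]=e^{-iA\tau}\,i\!\int_0^\tau e^{iA\tau'}[A,B]e^{-iA\tau'}d\tau'$ to obtain
\begin{align}
\nonumber
W(s) - e^{-i H_{\rm JC}^\rD s} = {}& - \int_0^s \! d\tau \! \int_0^\tau \! d\tau' \; e^{-i H_{\rm JC}^\rD (s-\tau)} \\
& \times\, e^{-iA\tau} e^{iA\tau'} [A,B]\, e^{-iA\tau'} e^{-iB\tau} \; .
\end{align}
Since every factor to the left of $[A,B]$ is unitary, acting on $\sket{\phi_n^\rD}\ket 0$ and taking norms gives the integrand bound $\|[A,B]\,e^{-iA\tau'}e^{-iB\tau}\sket{\phi_n^\rD}\ket 0\|$ to be integrated over the simplex $0\le\tau'\le\tau\le s$. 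Before estimating it I would replace $\sket{\phi_n^\rD}$ by $\sket{\psi_n^\rD}$ using Eq.~\eqref{eq:approxeigenstate}, which costs only $\nu_1(N)$ and is negligible in the large-$N$ limit.

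The core is to bound $\|[A,B]\sket{\eta}\|$ for the intermediate state $\sket{\eta}=e^{-iA\tau'}e^{-iB\tau}\sket{\psi_n^\rD}\ket 0$ with $|\tau|,|\tau'|\le|s|$. Because $\sigma_z$ is norm-preserving this reduces to $\|\{x^\rD,p^\rD\}\,(\text{orbital part of }\sket{\eta})\|$. On $\sket{\psi_n^\rD}$ itself, Lemma~\ref{lem:D-CVpos-mom} with $l_1=l_2=1$ and its $x^\rD\!\leftrightarrow\! p^\rD$ symmetric version, together with Corollary~\ref{cor:orthogonal} and the Hermite identity $\|\{\hat x,\hat p\}\sket{\psi_n}\|=O(n+1)$, give $\|\{x^\rD,p^\rD\}\sket{\psi_n^\rD}\|=O(n+1)$ up to $\nu_1(N)$. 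To transfer this to $\sket{\eta}$ I would conjugate: $e^{-iB\tau}$ is a (spin-dressed) position translation by $O(\tau)$ and $e^{-iA\tau'}$ a momentum translation by $O(\tau')$, so $e^{iB\tau}\{x^\rD,p^\rD\}e^{-iB\tau}$ and $e^{iA\tau'}\{x^\rD,p^\rD\}e^{-iA\tau'}$ differ from $\{x^\rD,p^\rD\}$ only by terms linear in $\tau$ or $\tau'$ times a degree-one operator ($p^\rD$ or $x^\rD$); on $\sket{\psi_n^\rD}$ these contribute only $O(|s|\sqrt{n+1})=O(\epsilon)$, which is subdominant. All intermediate indices stay within $n+O(1)\le cN$, so the Appendix~A correspondence applies throughout. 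Hence $\|[A,B]\sket{\eta}\|=O(n+1)$ on the whole integration region, the double integral supplies a factor $s^2/2$, and
\begin{align}
\| (W(s) - e^{-i H_{\rm JC}^\rD s})\sket{\phi_n^\rD}\ket 0 \| = O(|s|^2 (n+1)) + \nu_1(N) \; .
\end{align}
The stated choice $s=O(\epsilon/\sqrt{n+1})$ converts this into $O(\epsilon^2)$, as claimed.

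The hard part is precisely the estimate of $\|[A,B]\sket{\eta}\|$: a naive operator-norm bound would give $\|[A,B]\|=O(N)$ and a useless $O(|s|^2N)$ error. Everything hinges on showing that $\sket{\eta}$ remains in the low-energy sector and that the \emph{effective} norm of $\{x^\rD,p^\rD\}$ there is $O(n+1)$ rather than $O(N)$ — this is where the raising/lowering (i.e.\ $sp(2)$) structure and the discrete-to-continuous lemmas of Appendix~A do the real work — and on checking that the $O(|s|)$ translations generated by $A$ and $B$ neither leave that subspace nor spoil the approximation. I would also remark that replacing the asymmetric $W(s)$ by a symmetric (higher-order) product formula improves the $\epsilon$-scaling, since then the leading surviving term is of order $|s|^3$ times a nested commutator, again controlled by the same low-energy effective-norm estimates.
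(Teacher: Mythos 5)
Your overall architecture is sound and would be a legitimate reorganization of the paper's argument: the exact double-integral representation of the first-order Trotter error, the computation $[A,B]=-\tfrac{i}{2}\{x^\rD,p^\rD\}\otimes\sigma_z$, and the final arithmetic $O(|s|^2(n+1))=O(\epsilon^2)$ are all correct, and you rightly identify that everything reduces to the effective-norm bound $\|[A,B]\,e^{-iA\tau'}e^{-iB\tau}\sket{\psi_n^\rD}\ket 0\|=O(n+1)$. The paper organizes this differently: it writes $\varepsilon_1(s)=W(s)e^{iH_{\rm JC}^\rD s}-\one$ as an integral over $s'$ of $W(s')$ times the \emph{full} adjoint series $\sum_{k\ge1}(is'/\sqrt2)^k\frac{1}{k!}[p^\rD\otimes\sigma_y,[\ldots,x^\rD\otimes\sigma_x]\ldots]$ times $e^{iH_{\rm JC}^\rD s'}$, truncates the series at $K=O(\sqrt N)$, bounds each nested commutator on the low-energy state by $O(\sqrt{(n+2)\cdots(n+k+1)})$ via the ladder relations, and resums with Stirling under $|s|\le\epsilon/\sqrt{2(n+1)}$.

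However, your justification of the crux has a genuine gap, in two respects. First, the conjugation claim is wrong as stated even in continuous variables: the operator to be moved through the unitaries is $\{x^\rD,p^\rD\}\otimes\sigma_z$, not $\{x^\rD,p^\rD\}$ alone, and because $\sigma_z$ anticommutes with $\sigma_x$ and $\sigma_y$, each application of ${\rm ad}_A$ or ${\rm ad}_B$ converts commutators of the orbital parts into \emph{anticommutators}: for instance ${\rm ad}_B(\{x,p\}\otimes\sigma_z)\propto\{p,\{x,p\}\}\otimes\sigma_x$, a degree-three operator, and the degrees keep growing — the conjugated commutator is never ``$\{x^\rD,p^\rD\}$ plus $\tau$ times a degree-one operator''. (What is true in CV is that, decomposing over spin sectors, each surviving block equals a \emph{unitary} times $\{\hat x,\hat p\}$ plus degree-one corrections; that rescues the norm bound, but it is a different statement and it relies on exact displacement identities.) Second, and more fundamentally, those displacement identities have no exact discrete counterpart: $[x^\rD,p^\rD]\ne i\one$, so $e^{ip^\rD c}x^\rD e^{-ip^\rD c}\ne x^\rD+c$, and your assertion that ``all intermediate indices stay within $n+O(1)\le cN$'' is false — already $e^{-iB\tau}\sket{\psi_n^\rD}\ket0$ has support on \emph{every} discrete Hermite index, and controlling that spread (amplitudes at index $n+k$ decaying roughly like $(|s|\sqrt{n+k})^k/k!$) is exactly the technical content of the lemma. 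Making this step rigorous forces you to Taylor-expand the exponentials, truncate at order $O(\sqrt N)$ where the operator-norm bounds $\|x^\rD\|=\|p^\rD\|=O(\sqrt N)$ take over, bound each monomial by $\|(p^\rD)^{k_1}(x^\rD)^{k_2}\cdots\sket{\phi_n^\rD}\|=O(\sqrt{(n+1)\cdots(n+k)})$, and resum — i.e., to reproduce the series analysis that constitutes the paper's proof. As written, the key estimate is asserted rather than proven.
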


\begin{proof}
We first define 
$\varepsilon_1(s) = W (s)e^{i H_{\rm JC}^\rD s} - \one$,
and, since $\varepsilon_1(s)=\int_0^s ds' \partial_{s'} \varepsilon_1(s')$, we obtain
\begin{align}
\nonumber
\varepsilon_1 (s)& = \int_0^{s}  ds' \; W(s')  \sum_{k \ge 1} \left ( \frac {i s' }{\sqrt 2} \right)^k \frac 1 {k!} \times \\ & \times
[(p^\rD \otimes \sigma_y),[(p^\rD \otimes \sigma_y),[...,(x^\rD \otimes \sigma_y)] \ldots ]]
e^{i H_{\rm JC}^\rD s'} \; .
\end{align}
Because $\| x^\rD\| = \|p^\rD \| = O(\sqrt N)$, we can cutoff the sum in $k$ at $K=O(\sqrt N)$
with an approximation error that is negligible (i.e., exponentially small in $\sqrt N$).
Since $K \le cN$, for some constant $c>0$, we then use Eqs.~\eqref{eq:Xaction} and~\eqref{eq:Paction}  to show
\begin{align}
\nonumber
& \| (p^\rD)^{k_1} (x^\rD)^{k_2} ... (p^\rD)^{k_{l-1}} (x^\rD)^{k_l} \sket{\phi_n^\rD} \| = \\ & =
O(\sqrt{(n+1)\ldots (n+k)}) \;,
\end{align}
where $\sum_i k_i =k \le K$. Since $\sket{\gamma^\rD_{n,\pm}}$ are approximate eigenstates of $H_{\rm JC}^\rD$ when $n \le N' \le cN$,
and these are combinations
of $\sket{\phi_n^\rD} \ket 0$ and $\sket{\phi_{n+1}^\rD} \ket 1$, we obtain
\begin{align}
\nonumber
& \| \varepsilon_1(s) \sket{\gamma^\rD_{n,\pm}} \|  = \\ \nonumber
& = O\left( \int_0^s ds \sum_{k = 1}^K ((s' \sqrt 2)^k /k!) \sqrt{(n+2)\ldots (n+k+1)} \right)  \\
&=
O\left(  \sum_{k = 1}^K (s^{k+1} (\sqrt 2)^k /(k+1)!) \sqrt{(n+2)\ldots (n+k+1)} \right)\; ,
\end{align}
where we assumed that other contributions to the error that are exponentially small in $N$ or $\sqrt N$
are negligible in the asymptotic limit. We use the bound on the binomial coefficient
$\begin{pmatrix} m \\ k \end{pmatrix} < m^k/k!$,
to obtain
\begin{align}
\| \varepsilon_1(s) \sket{\gamma^\rD_{n,\pm}} \| = O\left(  \sum_{k = 1}^K (s^{k+1} (\sqrt 2)^k /(k)!)   (n+k+1)^{k/2} \right)\; .
\end{align}
If $ s \le \epsilon/\sqrt{2(n+1)}$, Stirling's approximation implies
\begin{align}
\| \varepsilon_1(s) \sket{\gamma^\rD_{n,\pm}} \| = O\left(  \sum_{k = 1}^K (\epsilon^{k+1} e^k)/k^{k/2} \right)\; ,
\end{align}
and then
$\| \varepsilon_1(s) \sket{\gamma^\rD_{n,\pm}} \| = O\left( \epsilon^2 \right)$.
\end{proof}

\end{document}